\def \stoc{0}
\def \deffontsize{11pt}
\def \deffontsize{11pt}
\newcommand{\Cov}{{\rm Cov}}
\newcommand{\Ber}{{\rm Ber}}
\newcommand{\DSBS}{\mathrm{DSBS}}
\newcommand{\ANIS}{\text{\sc Gap-Non-Int-Sim}}
\newcommand{\ABMIP}{\text{\sc Gap-Bal-Max-Inner-Product}}
\newcommand{\fullver}[2]{\ifnum\stoc=1{#1}\else{#2}\fi}
\newcommand{\noapx}[2]{\ifnum\apx=0{#1}\else{#2}\fi}
\def \stocapx{0} 
\begin{document}

\fullver{\setcounter{page}{0}}{\setcounter{page}{-1}}

\title{Decidability of Non-Interactive Simulation of Joint Distributions}

\author{
	Badih Ghazi\thanks{Computer Science and Artificial Intelligence Laboratory, Massachusetts Institute of Technology, Cambridge MA 02139. Supported in part by NSF CCF-1420956, NSF CCF-1420692 and CCF-1217423.  {\tt badih@mit.edu}.} 
	\and 
	Pritish Kamath\thanks{Computer Science and Artificial Intelligence Laboratory, Massachusetts Institute of Technology, Cambridge MA 02139. Supported in part by NSF CCF-1420956 and NSF CCF-1420692.  {\tt pritish@mit.edu}.} 
	\and 
	Madhu Sudan\thanks{Harvard John A. Paulson School of Engineering and Applied Sciences. Part of this work was done while at Microsoft Research New England. Supported in part by NSF Award CCF 1565641. {\tt madhu@cs.harvard.edu}.}%
}
\date{\today}
\maketitle

\thispagestyle{empty}

\abstract{
We present decidability results for a sub-class of ``non-interactive'' simulation
problems, a well-studied class of problems in information theory.
A {\em non-interactive simulation} problem is specified by two distributions $P(x,y)$ and $Q(u,v)$: The goal is to determine if two players, 
Alice and Bob, that observe sequences $X^n$ and $Y^n$ respectively where $\set{(X_i, Y_i)}_{i=1}^n$ are drawn i.i.d. from $P(x,y)$ can generate pairs $U$ and $V$ respectively  (without communicating with each other) with a joint distribution that is arbitrarily close in total variation to $Q(u,v)$.
Even when $P$ and $Q$ are extremely simple: e.g., $P$ is uniform on the triples
$\{(0,0), (0,1), (1,0)\}$ and $Q$ is a ``doubly symmetric binary source'', i.e., $U$ and $V$ are uniform $\pm 1$ variables with correlation say $0.49$, it is open if $P$ can simulate $Q$.

In this work, we show that whenever $P$ is a distribution on a finite domain and
$Q$ is a $2 \times 2$ distribution, then the non-interactive simulation
problem is {\em decidable}: specifically, given $\delta > 0$ the algorithm runs in time bounded by some function of $P$ and $\delta$ and either gives a non-interactive simulation
protocol that is $\delta$-close to $Q$ or asserts that no protocol gets
$O(\delta)$-close to $Q$. The main challenge to such a result is determining explicit (computable) convergence bounds on the number $n$ of samples that need to be drawn from $P(x,y)$ to get $\delta$-close to $Q$. We invoke contemporary results 
from the analysis of Boolean functions such as the invariance principle and a regularity lemma to obtain such explicit bounds.
}

\ifnum\stoc=1
	\tableoftodos
	\newpage
	\def\stocapx{2} 
\else
	\newpage
	\thispagestyle{empty}
	\tableofcontents
	\tableoftodos
	\newpage
	\def\stocapx{0} 
\fi

\def \apx{0}

\section{Introduction}
Given a sequence of independent samples $(x_1,y_1), (x_2,y_2), \dots$ from a joint distribution $P$ on $\calA \times \calB$ where Alice observes $x_1,x_2,\dots$ and Bob observes $y_1,y_2,\dots$, what is the largest correlation that they can extract if Alice applies some function to her observations and Bob applies some function to his?
The continuous version of this question -- where the extracted correlation is required to be in \emph{Gaussian} form -- was solved by Witsenhausen in $1975$ who gave (roughly) a $\poly(|\calA|,|\calB|,\log(1/\delta))$-time algorithm that estimates the best such correlation up to an additive $\delta$ \cite{witsenhausen1975sequences}. 
When the target distribution is Gaussian, the best possible correlation that is attainable is exactly the well-known ``maximal correlation coeffcient'' which was first introduced by Hirschfeld \cite{hirschfeld1935connection} and Gebelein \cite{gebelein1941statistische} and then studied by R{\'e}nyi \cite{renyi1959measures}. However, when the target distribution is not Gaussian, the best correlation is not well-understood and this is the question explored in this paper.
Specifically, we study the Boolean version of this question where the extracted correlation is required to be in the form of bits with fixed specified marginals. We give an algorithm that, given $\delta >0$, computes the best such correlation up to an additive $\delta$.

Questions such as the above are well-studied in the information theory literature under the label of ``Non-Interactive Simulation''.  The roots of this exploration go back to classical works by G\'acs and K\"orner \cite{gacs1973common} and Wyner \cite{Wyner_CommonInfo}. In this line of work, the problem is described by a source distribution $P(X,Y)$ and a target distribution $Q(U,V)$ and the goal is to determine the maximum rate at which samples of $P$ can be converted into samples of $Q$. (So the goal is to start with $n$ samples from $P$ and generate $R\cdot n$ samples from $Q$, for the largest possible $R$.) G\'acs and K\"orner considered the special case where $Q$ required the output to be a pair of identical uniformly random bits, i.e., $U = V = \Ber(1/2)$ and introduced what is now known as the \emph{G\'acs-K\"orner common information} of $P(X,Y)$ to characterize the maximum rate in terms of this quantity.  Wyner, on the other hand considered the ``inverse'' problem where $X = Y = \Ber(1/2)$ and $Q$ was arbitrary. Wyner characterized the best possible conversion rate in this setting in terms of what is now known as the \emph{Wyner common information} of $Q(U,V)$.
There is a rich history of subsequent work (see, for instance, \cite{kamath2015non} and the references within) exploring more general settings where neither $P$ nor $Q$ produces identical copies of some random variable.  In such settings, even the question of when can the rate be positive is unknown and this is the question we explore in this paper.

The Non-Interactive Simulation problem is also a generalization of the Non-Interactive Correlation Distillation problem which was studied by \cite{mossel2004coin, mossel2006non}\footnote{which considered the problem of maximizing agreement on a single bit, in various multi-party settings.}. Our setup can be thought of as a ``positive-rate'' version of the setup of G\'acs and K\"orner. Namely, for a known source distribution $P(X,Y)$, Alice and Bob are given an arbitrary number of i.i.d. samples and wish to generate \emph{one sample} from the distribution $Q(U,V)$ which is given by $U = V = \Ber(1/2)$.
(This is possible if and only if the G\'acs-K\"orner rate is positive.)

\paragraph{Motivation.}
Our motivation for studying the best discrete correlation that can be produced is twofold. On the one hand, this question forms part of the landscape of questions arising from a quest to
weaken the assumptions about randomness when it is employed in distributed
computing. Computational tasks are often solved well if parties have access to a
common source of randomness and there has been recent interest in
cryptography \cite{ahlswede1993common,ahlswede1998common,brassard1994secret,csiszar2000common,maurer1993secret, renner2005simple}, quantum computing \cite{nielsen1999conditions, chitambar2008tripartite, delgosha2014impossibility} and communication
complexity \cite{bavarian2014role,CGMS_ISR,ghazi2015communication} to study how the ability to solve these tasks gets affected
by weakening the source of randomness. In this space of investigations, it is a
very natural question to ask how well one source of randomness can be tranformed to a
different one, and Non-Interactive Simulation studies exactly this question.

On the other hand, from the analysis point of view, the Non-Interactive Simulation problem forms part of ``tensor power'' questions that have been challenging to analyze computationally. Specifically, in such questions, the quest is to understand how some quantity behaves as a function of the dimensionality of the problem as the dimension tends to infinity. Notable examples of such problems include the \emph{Shannon capacity of a graph} \cite{shannon1956zero,lovasz1979shannon} where the goal is to understand how the independence number of the power of a graph behaves as a function of the exponent.  Some more closely related examples arise in the problems of local state transformation of quantum entanglement \cite{Beigi_QuantumMaximalCorrelation, DelgoshaBeigi_QuantumHypercontractivity} and the problem of computing the entangled value of a game (see for eg, \cite{KKMTV_HardnessApprox_EntangledGames} and also the open problem \cite{openQIwiki_all_bell_inequalities}). A more recent example is the problem of computing the amortized communication complexity of a communication problem. Braverman-Rao \cite{braverman2011information} showed that this equals the information complexity of the communication problem, however the task of approximating the information complexity was only recently shown to be computable \cite{braverman2015information}. In our case, the best non-interactive simulation to get one pair of correlated bits might require many copies of $(x,y)$ drawn from $P$ and the challenge is to determine how many copies get us close.  Convergence results of this type are not obvious. Indeed, the task of approximating the Shannon capacity remains open to this day \cite{alon2006shannon}. Our work is motivated in part by the quest to understand tools that can be used to analyze such questions where rate of convergence to the desired quantity is non-trivial to bound.

\paragraph{Estimating Binary Correlations: Previous Work and our Result.}

In his work generalizing the results of G\'acs and K\"orner, Witsenhausen \cite{witsenhausen1975sequences} gave an efficient algorithm that achieves a \emph{quadratic} approximation to the Non-Interactive Simulation problem when $Q(U,V)$ is the distribution where $U$ and $V$ are marginally uniform over $\pm 1$
and $U$ is an $\rho$-correlated copy of $V$, i.e. $\Ex[UV] = \rho$ (henceforth, we refer to this distribution as $\DSBS(\rho)$).\footnote{Henceforth, we assume that bits are in the set $\{\pm 1\}$. By a \emph{quadratic} approximation, we mean an algorithm distinguishing between the cases (i) $\rho \geq 1-\eta$ and (ii) $\rho < 1-O(\sqrt{\eta})$ for any given parameter $\eta > 0$.}
Indeed, Witsenhausen introduced the Gaussian correlation problem as an intermediate step to solving this problem and his rounding
technique to convert the Gaussian random variables into Boolean ones is essentially the same as that of the Goemans-Williamson
algorithm for approximating maximum cut sizes in graphs~\cite{goemans1995maxcut}.
Already implicit from the work of Witsenhausen is that ``maximum correlation'' gives a way to upper bound the best achievable
$\rho$ when simulating $\DSBS(\rho)$. 
Recent works in the information theory community \cite{kamath2012non,kamath2015non,beigi2015duality} enhance the collection of analytical tools that can be used to show stronger impossibility results. 
While these works produce stronger bounds, they do not necessarily converge to the optimal limit and indeed basic questions about
simulation remain open. For instance, till our work, even the following question was open \cite{sudeep}: If $P$ is the uniform disribution on
$\{(0,0), (0,1), (1,0)\}$ and $Q = \DSBS(.49)$ (i.e. $U,V$ are uniformly $\pm 1$, with $\Ex[UV] = .49$), can $P$ simulate
$Q$ arbitrarily well? Our work answers such questions in principle. (Specifically we do give a finite time procedure to
approximate the best $\rho$ to within arbitrary accuracy. However, we have not run this algorithm to determine the answer to
this specific question.)

Below we state our main theorem informally (see Theorem~\ref{thm:main} for the formal
statement). 

\begin{theorem}[Informal]\label{thm:main_thm}
	There is an \emph{algorithm} that takes as inputs a source distribution $P$, a parameter $\rho >0$ and an error parameter $\delta >0$, runs in time bounded by some computable function of $P$, $\rho$ and $\delta$, and either outputs a non-interactive protocol that simulates $\DSBS(\rho)$ up to additive $\delta$ in total variation distance, or asserts that there is no protocol that gets $O(\delta)$-close to $\DSBS(\rho)$ in total variation distance.
\end{theorem}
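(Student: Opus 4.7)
\begin{proof-idea}

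Conceptually the algorithm is straightforward once we have a computable upper bound $N = N(P, \rho, \delta)$ on the number of samples that suffice: enumerate every pair of Boolean functions $f : \calA^n \to \{\pm 1\}$ and $g : \calB^n \to \{\pm 1\}$ for $n \le N$, evaluate the joint distribution of $(f(X^n), g(Y^n))$ under $P^{\otimes n}$, and check whether it is $\delta$-close to $\DSBS(\rho)$ in total variation. The real work is therefore a dimension reduction: show that whenever some protocol achieves TV distance $O(\delta)$ to $\DSBS(\rho)$ with possibly huge $n$, there is one using only $n \le N(P,\rho,\delta)$ samples achieving distance $\delta$. Without such an explicit bound, no brute-force search knows when to halt.

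To produce $N$, the plan is to combine noise-operator smoothing, a regularity (junta) lemma for Boolean functions, and an invariance principle in the style of Mossel--O'Donnell--Oleszkiewicz, together with a final reduction to Witsenhausen's Gaussian setting. Starting from a hypothetical near-optimal pair $(f,g)$, first apply a small amount of $P$-compatible noise to obtain smoothed functions $\wtilde{f}, \wtilde{g}$ that are well-approximated by their low-degree Fourier expansions while still inducing a joint distribution $O(\delta)$-close to $\DSBS(\rho)$. Next, apply a regularity lemma to write each of $\wtilde{f}$ and $\wtilde{g}$ as a junta on a bounded set of coordinates whose leaves are functions with all influences below some threshold $\tau = \tau(\rho,\delta)$. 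The invariance principle then says that the joint distribution induced by these low-influence leaves under $P^{\otimes n}$ is close (in total variation after thresholding) to the distribution induced by the corresponding multilinear polynomials evaluated on a pair of correlated Gaussians whose second-moment structure matches that of $P$.

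This reduces the Boolean simulation question on $P^{\otimes n}$ to a bounded-dimensional hybrid object: a junta over a constant number of coordinates, paired with low-degree polynomials of a fixed number of correlated Gaussians. For the Gaussian piece, Witsenhausen's theorem produces an explicit witness together with a rounding scheme to $\{\pm 1\}$; pulling the Gaussian witness back to the Boolean cube via a CLT/invariance-style rounding gives an honest non-interactive protocol on $N(P,\rho,\delta)$ samples of $P$ whose output is within $\delta$ of $\DSBS(\rho)$. Combined with the trivial enumeration above, this proves both directions of the dichotomy in the theorem.

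The main obstacle is obtaining these dependencies \emph{explicitly and computably}. In their usual presentations, the regularity lemma, the invariance principle, and the Gaussian rounding are stated with asymptotic or implicit constants, whereas here the smoothing rate, the junta size, the influence threshold $\tau$, the invariance error, and the rounding loss must each be a computable function of $(P,\rho,\delta)$ and, crucially, must compose to a total approximation error below $\delta$. The dependence on $P$ enters both through the minimum positive probability of $P$ (needed to make the invariance principle quantitative over a non-uniform product space) and through the spectral structure of the conditional-expectation operator associated with $P$ (needed to control how smoothing interacts with the source). Making all of these bounds simultaneously effective---and arguing that $O(\delta)$-closeness of the original protocol is enough slack to survive the many lossy approximations and still yield a $\delta$-close protocol---is the heart of the argument.

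\end{proof-idea}
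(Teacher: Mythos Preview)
Your proposal is correct and follows essentially the same route as the paper: reformulate as a correlation problem, smooth via the noise operator to concentrate Fourier mass on low degree, apply a regularity lemma to expose a bounded ``head'' whose restrictions are low-influence, invoke the Mossel-style correlation bound to replace those low-influence pieces by thresholds on a single pair of $\rho$-correlated Gaussians, and finally simulate that Gaussian pair from finitely many samples of $P$ via a quantitative 2-dimensional CLT (Witsenhausen). The paper's only notable simplifications relative to your sketch are that the regularity lemma yields a single non-adaptive head set $H$ rather than a decision tree, and the invariance step is used in its one-sided ``$\overline{\Gamma}_\rho$ correlation bound'' form rather than as a TV-closeness statement.
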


More generally, the proof techniques extend to deciding the non-interactive simulation problem for an arbitrary $2\times 2$ target distribution. In particular, we also show the following (see Theorem~\ref{thm:main_full} for the formal statement).

\begin{theorem}[Informal]\label{thm:main_thm_full}
	There is an \emph{algorithm} that takes as inputs a source distribution $P$, a $2 \times 2$ target distribution $Q$ and an error parameter $\delta >0$, runs in time bounded by some computable function of $P$, $Q$ and $\delta$, and either outputs a non-interactive protocol that simulates $Q$ up to additive $\delta$ in total variation distance, or asserts that there is no protocol that gets $O(\delta)$-close to $Q$ in total variation distance.
\end{theorem}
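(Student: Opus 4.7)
The plan is to reduce the decision problem to an explicit finite enumeration by bounding the number of samples $n$ needed to (approximately) attain the optimal simulation. Fix a source $P$ on $\calA \times \calB$ and a $2 \times 2$ target $Q$. Any non-interactive simulation protocol using $n$ samples is specified by a pair of Boolean functions $f : \calA^n \to \set{0,1}$ and $g : \calB^n \to \set{0,1}$, and its output distribution is the joint law of $(f(X^n), g(Y^n))$. Since $Q$ is supported on $\set{0,1}^2$, closeness in total variation reduces to matching the four probabilities $\Pr[f = u, g = v]$ to the corresponding entries of $Q$ within $O(\delta)$. The problem is therefore to decide whether the achievability region in $[0,1]^4$ swept out by all $n$ and all $(f, g)$ contains a point within $O(\delta)$ of $Q$, and the algorithm is allowed a constant slack between its YES and NO answers.

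The heart of the proof is a computable upper bound $n \le N(P, Q, \delta)$ on the effective sample complexity. First I would apply a regularity lemma for functions on correlated product spaces in the spirit of Mossel to decompose any candidate pair $(f, g)$ into a \emph{junta} part depending on at most $k = k(P, Q, \delta)$ coordinates together with a residual pair of low influence on all remaining coordinates. The junta part has an explicit finite search space; the low-influence residual lives in the regime where the invariance principle of Mossel, O'Donnell and Oleszkiewicz converts distributions of bounded-degree polynomials on $P^n$ into distributions of the same polynomials evaluated on correlated Gaussian vectors whose second-moment structure matches that of $P$. Standard smoothing (via a small noise operator) and truncation of the Fourier expansion reduce $f, g$ to such bounded-degree objects, at the cost of an additional $O(\delta)$ in total variation.

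After this lift, the task becomes simulating $Q$ by Boolean functions of correlated Gaussians of known covariance. Here I would use Borell-type rearrangement arguments, extended from Witsenhausen's single-correlation setting to the joint-distribution setting, to argue that the boundary of the Gaussian achievability region is traced by a bounded-dimensional family of threshold functions (halfspaces cut by $O(1)$ Gaussian projections per party). Combined with the bounded junta dimension this yields the explicit bound $N(P, Q, \delta)$. The algorithm then enumerates a finite $\delta$-net of candidate protocols --- juntas on at most $k$ coordinates composed with halfspace templates in the Gaussian directions --- evaluates the resulting four-entry distribution for each candidate in time computable from $|\calA|$, $|\calB|$ and $N$, and either reports one within $\delta$ of $Q$ or, by soundness of the reduction, certifies that no protocol gets $O(\delta)$-close.

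The main obstacle I expect is making all three components --- regularity, invariance, and the Gaussian extremal characterization --- simultaneously quantitative in \emph{total variation}, rather than in a weaker moment or correlation norm. The invariance principle natively controls a single smooth test statistic, whereas here we need joint control of all four probabilities $\Pr[f = u, g = v]$, so the smoothing parameter, the influence cutoff, and the junta dimension must be chosen in a single coupled way that produces one computable bound $N(P, Q, \delta)$. A secondary difficulty is handling sources $P$ whose maximal correlation or spectral gap is close to one, where the convergence rate in the invariance principle degrades; a case split based on the singular values of the conditional expectation operator associated with $P$ will likely be needed, in the spirit of the finer analysis of Boolean functions referenced in the abstract.
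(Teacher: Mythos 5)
Your pipeline --- smoothing via a noise operator, a regularity lemma to isolate a bounded set of high-influence coordinates, the invariance principle to pass to correlated Gaussians, a Borell/threshold structural characterization on the Gaussian side, and Witsenhausen-style rounding back to finitely many samples of $P$, followed by a brute-force enumeration --- is exactly the route the paper takes, so at the level of ideas the proposal is on target. One packaging choice that resolves your stated ``main obstacle'' is worth flagging: rather than trying to control all four joint probabilities $\Pr[f=u,\,g=v]$ simultaneously through the invariance step, the paper first reduces the TV-closeness condition to a \emph{gap balanced max-inner-product} problem, i.e.\ constraints only on the two marginals $\Ex[f]$, $\Ex[g]$ and a single extremization of $\Ex[fg]$. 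Since a $2\times2$ distribution is completely specified (up to an affine bijection) by these three numbers, the TV condition becomes matching two scalar marginals plus optimizing one inner product --- precisely the quantity the invariance-principle correlation bounds and Borell's theorem control, so no separate ``joint TV control'' step is required. The case split you anticipate is also real, but it is not on the spectral gap of $P$; it is on the sign of $\Ex[UV]-\Ex[U]\Ex[V]$ in the target $Q$, which determines whether you need the upper ($\overline{\Gamma}_\rho$) or lower ($\underline{\Gamma}_\rho$) Gaussian stability bound and flips the orientation of one party's threshold. The dependence on the maximal correlation $\rho_0$ of $P$ enters only as a quantitative degradation of the bound (polynomial in $1/(1-\rho_0)$ inside the iterated exponentials), not as a dichotomy. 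Finally, in the paper's Gaussian extremal step a single one-dimensional threshold per party suffices (by Borell), so the ``$O(1)$ Gaussian projections per party'' you allow is more generality than is actually needed.
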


The crux of Theorems~\ref{thm:main_thm} and \ref{thm:main_thm_full} is to prove \emph{computable} bounds on the number of copies of $(X,Y)$ that are needed in order to come $\delta$-close to the target distribution. We now describe the challenges towards achieving such bounds, and the
techniques we use.

\subsection{Proof Overview}

We start by describing some illustrative special cases of the problem. In the case where $P = \DSBS(\rho)$, maximal correlation based arguments imply that $\DSBS(\rho)$ is the `best' $\DSBS$ distribution that can simulated \cite{witsenhausen1975sequences}. Thus, in this case, dictators functions achieve the optimal strategy. Consider now the case where $P$ is a pair of $\rho$-correlated zero-mean unit-variance Gaussians\footnote{allowing here continuous distributions for the sake of intuition}. Then, Borell's isoperimetric inequality implies that the strategy where each of Alice and Bob outputs the sign of her/his Gaussian achieves the best possible $\DSBS$ \cite{borell1985geometric}.

Given the above two examples where a \emph{single-copy} strategy is optimal, it is tempting to try to determine the best $\DSBS$ that can be simulated using a single copy of $P$ and hope that it would be close to the optimal $\DSBS$ (i.e., to the one that can be simulated using an arbitrary number of copies of $P$). But this approach cannot work as is illustrated by the following example which shows that using many copies of $P$ is in some cases actually \emph{needed}. Consider the source joint distribution corresponding to the bipartite graph in Figure~\ref{le:alpha_ex} with $\alpha > 0$ being a small parameter (we interpret the distribution as the one obtained by sampling a random edge in the graph). This graph is the union of two components: a low-correlation component which has probability $1-\alpha$ and a perfect-correlation component which has probability $\alpha$. If we use a small number of copies of $\mu$, the corresponding samples will most likely fall in the low-correlation component, and hence the best $\DSBS$ that can be produced in such a way would have a small correlation. On the other hand, as the number of used copies becomes larger than $1/\alpha$, with high probability at least one of the corresponding samples will fall in the perfect-correlation component, and hence the resulting $\DSBS$ would have correlation very close to $1$. As another example, consider the distribution that is uniform on triples $\set{(0,0), (0,1), (1,0)}$. It follows from \cite{witsenhausen1975sequences} that it is possible to simulate $\DSBS(\third)$ using many copies of this distribution. However, it can be shown that using only a single copy of this distribution (along with private randomness), Alice and Bob can at best simulate $\DSBS(\fourth)$.

\fullver{}{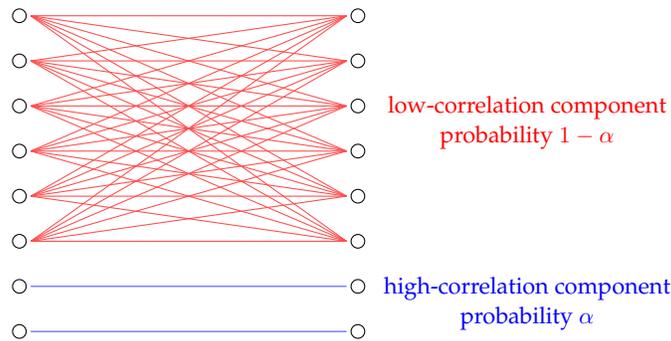
\begin{figure}
	\centering
	\begin{tikzpicture}[scale=1.5]
	\def \n {8}
	\def \nsmallplusone {7}
	\def \nsmall {6}
	\def \t {3}
	\def \radius {1.15cm}
	\def \margin {3} 
	\def \margintwo {0.07} 
	\def \radiussmall {0.7cm}
	\def \radiuslarge {1.6cm}
	\foreach \s in {1,...,\n}
	{
		\node[draw, circle, scale=0.5] (n3) at (-1.5,{(\s-1) * 0.4 -(\n-1)/2*0.4}) {};
		\node[draw, circle, scale=0.5] (n1) at (1.5,{(\s-1) * 0.40  -(\n-1)/2*0.40}) {};
		
	}
	
	\foreach \s in {1,...,\nsmall}
	{ \foreach \r in {1,...,\nsmall}
		{
			\draw[-,color=red!70] (-1.4,{(\n-1) * 0.4 -(\n-1)/2*0.4-(\s-1)*0.4}) -- (1.4,{(\n-1) * 0.4 -(\n-1)/2*0.4-(\r-1)*0.4});
			
		}
	}
	
	\foreach \s in {\nsmallplusone,...,\n}
	{
		\draw[-,color=blue!70] (-1.4,{(\n-1) * 0.4 -(\n-1)/2*0.4-(\s-1)*0.4}) -- (1.4,{(\n-1) * 0.4 -(\n-1)/2*0.4-(\s-1)*0.4});
		
	}
	
	\node[fill=none, scale=0.8, red] (n4) at (3,0.45) {\shortstack{low-correlation component \\ probability $1-\alpha$}};
	
	\node[fill=none, scale=0.8, blue] (n5) at (3,-1.15) {\shortstack{high-correlation component \\ probability $\alpha$}};
	
	\end{tikzpicture}
	\caption{Example source distribution for which many copies need to be considered.}\label{le:alpha_ex}
\end{figure}}


We now describe at a high level, the main ideas that give us the computable bound on the number of samples of the joint distribution that are sufficient to obtain a $\delta$-approximation to a given $\DSBS(\rho)$. First, we observe that the problem of deciding if one can come $\delta$-close to simulating $\DSBS(\rho)$, is equivalent to checking if Alice and Bob can non-interactively come up with a distribution $(X,Y)$ on $[-1,1] \times [-1,1]$ such that the marginals of $X$ and $Y$ have means close to $0$, but $\Ex[XY]$ is large.

The results on correlation bounds for low-influence functions (obtained using the invariance principle) \cite{mossel2005noise, mossel2010gaussianbounds}, say that if Alice and Bob are using only low-influential functions, then in fact the correlation that they get cannot be much better than that obtained by taking appropriate threshold functions on correlated gaussians. Moreover, Alice and Bob can in fact simulate correlated gaussians using only a constant number of samples from the joint distribution, by applying the maximal correlation based technique of Witsenhausen \cite{witsenhausen1975sequences}.

In the general case, we show that we can first convert Alice and Bob's functions to have {\em low degree}, after which we apply a regularity lemma (inspired from that of \cite{diakonikolas2010regularity}) to conclude that after fixing a constant number of coordinates, the restricted function is in fact low-influential. This reduces the general case to the special case of having low-influential functions and which is handled as described in the previous paragraph.

The more general case of simulating arbitrary $2 \times 2$ distribution also follows a similar outline. For a more technical overview of the proof, we refer the reader to Section~\ref{subsec:pf_overv}.

\subsection{Roadmap of the paper}

In Section~\ref{sec:prelim}, we give some of the basic definitions, etc.\fullver{ (see Appendix~\ref{apx:prelim} for a more detailed preliminaries, especially about fourier analysis and hypercontractivity)}{}. Our main theorems are also presented in this section as Theorems~\ref{thm:main_full} and \ref{thm:main}. In Section~\ref{sec:overview}, we state our main technical lemma (Theorem~\ref{thm:main-lemma}), which is used to prove Theorem~\ref{thm:main}. We also give a proof overview for Theorem~\ref{thm:main-lemma}. In Sections~\ref{sec:smoothing}, \ref{sec:reglem}, \ref{sec:correlation_bds} and \ref{sec:gaussians}, we state \fullver{and give some proof overview of}{and prove} the technical lemmas involved in proving Theorem~\ref{thm:main-lemma}. Finally, in Section~\ref{sec:together}, we put together everything to prove Theorem~\ref{thm:main-lemma}. We end with some open questions in Section~\ref{sec:discussion}.
\ifnum\apx=0
\section{Preliminaries} \label{sec:prelim}
\else
\section{Additional preliminaries} \label{apx:prelim}
\fi

\ifnum\apx=0
\subsection{Notation}
We use script letters $\calA$, $\calB$, etc. to denote finite sets, and $\mu$ will usually denote a probability distribution. $(\calA \times \calB, \mu)$ is a joint probability space. We use $\mu_A$ and $\mu_B$ to denote the marginal distributions of $\mu$. We use letters $x$, $y$, etc to denote elements of $\calA$, and bold letters $\bx$, $\by$, etc. to denote elements in $\calA^n$. We use $x_i$, $y_i$ to denote individual coordinates of $\bx$, $\by$, respectively.

For a probability space $(\calA, \mu)$, we will use the following definitions and notations borrowed from \cite{austrin2011randomly}.
\ifnum\stoc=1
	\noapx{$(\calA^n, \mu^{\otimes n})$ denotes the product space $\calA \times \calA \times \cdots \times \calA$ endowed with the product distribution. $\Supp(\mu) \defeq \setdef{x}{\mu(x) >0}$ is the support of $\mu$. We would generally assume without loss of generality that $\Supp(\mu) = \calA$. $\alpha(\mu) \defeq \min\setdef{\mu(x)}{x \in \Supp(\mu)}$ denotes the minimum non-zero probability of any atom in $\calA$ under the distribution $\mu$. $L^2(\calA,\mu)$ denotes the space of functions from $\calA$ to $\bbR$. The inner product on $L^2(\calA,\mu)$ is denoted by $\inangle{f,g}_{\mu} := \Ex\limits_{x \sim \mu}[f(x) g(x)]$. The $\ell_p$-norm by $\norm{p}{f} := \insquare{\Ex\limits_{x \sim \mu}|f(x)|^p}^{1/p}$. Also, $ \norm{\infty}{f} := \max_{\mu(x) >0} |f(x)|$. It is easy to verify that $\norm{p}{f} \le \norm{q}{f}$ for $1 \le p \le q$. For two distributions $\mu$ and $\nu$, $\dTV(\mu, \nu)$ is the total variation distance between $\mu$ and $\nu$.}{}
\else
\begin{itemize}
\item $(\calA^n, \mu^{\otimes n})$ denotes the product space $\calA \times \calA \times \cdots \times \calA$ endowed with the product distribution.
\item $\Supp(\mu) \defeq \setdef{x}{\mu(x) >0}$ is the support of $\mu$. We would generally assume without loss of generality that $\Supp(\mu) = \calA$.
\item $\alpha(\mu) \defeq \min\setdef{\mu(x)}{x \in \Supp(\mu)}$ denotes the minimum non-zero probability of any atom in $\calA$ under the distribution $\mu$.
\item $L^2(\calA,\mu)$ denotes the space of functions from $\calA$ to $\bbR$.
\item The inner product on $L^2(\calA,\mu)$ is denoted by $\inangle{f,g}_{\mu} := \Ex\limits_{x \sim \mu}[f(x) g(x)]$.
\item The $\ell_p$-norm by $\norm{p}{f} := \insquare{\Ex\limits_{x \sim \mu}|f(x)|^p}^{1/p}$. Also, $ \norm{\infty}{f} := \max_{\mu(x) >0} |f(x)|$.
\item It is easy to verify that $\norm{p}{f} \le \norm{q}{f}$ for $1 \le p \le q$.
\item For two distributions $\mu$ and $\nu$, $\dTV(\mu, \nu)$ is the total variation distance between $\mu$ and $\nu$.
\end{itemize}
\fi

\fi 

\ifnum\apx=0
\subsection{The non-interactive simulation problem}

The problem of non-interactive simulation is defined as follows,

\begin{defn}[Non-interactive simulation \cite{kamath2015non}]
	Let $(\calA \times \calB, \mu)$ and $(\calU \times \calV, \nu)$ be two probability spaces. We say that the distribution $\nu$ can be {\em non-interactively simulated} using distribution $\mu$, if there exists a sequence of functions $\set{f_n}_{n \in \bbN}$ and $\set{g_n}_{n \in \bbN}$ such that,
	$$f_n : \calA^n \to \calU \quad \quad g_n : \calB^n \to \calV$$
	and the distribution $\nu_n \sim (f_n(\bx), g_n(\by))_{\mu^{\otimes n}}$ over $\calU \times \calV$ is such that $\lim\limits_{n \to \infty}\dTV(\nu_n, \nu) = 0$.
\end{defn}

\fullver{}{\begin{figure}
\begin{center}
\begin{tikzpicture}[scale=0.8, transform shape]

\def \h{1.15}
\def \w{3}
\node[box, minimum width=2.25cm, minimum height=1.5cm, fill=black!10] (A) at (0,\h) {\Large \bf Alice};
\node[box, minimum width=2.25cm, minimum height=1.5cm, fill=black!10] (B) at (0,-\h) {\Large \bf Bob};

\node (X) at (-\w, \h)  {\Large $X^n$} edge[-latex] (A);
\node (Y) at (-\w, -\h) {\Large $Y^n$} edge[-latex] (B);

\node at (\w, \h)   {\Large $U$}   edge[latex-] (A);
\node at (\w, -\h)  {\Large $V$}   edge[latex-] (B);


\draw[snake, segment length=2mm] (X) -- (Y);

\end{tikzpicture}
\caption{Non-Interactive simulation as studied in \cite{kamath2012non, kamath2015non}}
\label{fig:non_int_sim}
\end{center}
\end{figure}
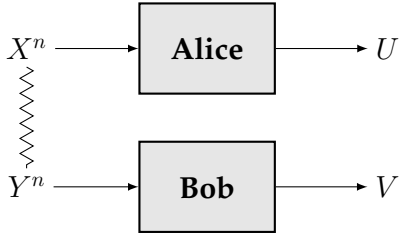}
\noindent The notion of non-interactive simulation is pictorially depicted in Figure~\ref{fig:non_int_sim}.
We formulate a natural gap-version of the non-interactive simulation problem defined as follows,

\begin{problem}[$\ANIS((\calA \times \calB, \mu), (\calU \times \calV, \nu), \delta)$] \label{prob:approx_decide_full}
	Given probability spaces $(\calA \times \calB, \mu)$ and $(\calU \times \calV, \nu)$, and an error parameter $\delta > 0$, distinguish between the following cases:
	\begin{enumerate}[(i)]
		\item there exists $N$, and functions $f : \calA^N \to \calU$ and $g : \calB^N \to \calV$, the distribution $\nu' = (f(\bx), g(\by))_{\mu^{\otimes N}}$ is such that $\dTV(\nu', \nu) \le \delta$.
		\item for all $N$ and all functions $f : \calA^N \to \calU$ and $g : \calB^N \to \calV$, the distribution $\nu' = (f(\bx), g(\by))_{\mu^{\otimes N}}$ is such that $\dTV(\nu', \nu) > 8 \delta$. \footnote{ for sake of definition, the constant $8$ could be replaced by any constant greater than $1$. For a minor technical reason however our decidability results (Theorems~\ref{thm:main_full} and \ref{thm:main}) will require this constant to be strictly greater than $2$. We choose to go ahead with $8$ for convenience.}
	\end{enumerate}
\end{problem}

The main result in this paper is the following theorem showing that the problem of $\ANIS$ is decidable when $|\calU| = |\calV| = 2$.

\begin{theorem}[Decidability of $\ANIS$ for binary targets] \label{thm:main_full}
	Given probability spaces $(\calA \times \calB, \mu)$ and $(\calU \times \calV, \nu)$ such that $|\calU| = |\calV| = 2$, and an error parameter $\delta$, there exists an algorithm that runs in time $T((\calA \times \calB, \mu), \delta)$ (which is an explicitly computable function), and decides the problem of $\ANIS((\calA \times \calB, \mu), (\calU \times \calV, \nu), \delta)$.
	
	\noindent The run time $T((\calA \times \calB, \mu), \delta)$ is upper bounded by,
	$$\exp\exp\exp\inparen{\poly\inparen{\frac{1}{\delta}, \ \frac{1}{1-\rho_0}, \ \log\inparen{\frac{1}{\alpha}}}}$$
	where $\rho_0 = \rho(\calA, \calB; \mu)$ is the maximal correlation of $(\calA \times \calB, \mu)$ (defined in Section~\ref{sec:prelim_maxcorr}) and $\alpha \defeq \alpha(\mu)$ is the minimum non-zero probability in $\mu$.
\end{theorem}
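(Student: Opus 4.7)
The plan is to reduce decidability to a \emph{computable upper bound} $N = N(\mu, \delta)$ on the number of samples needed: if some non-interactive protocol achieves $\dTV \le \delta$ to $\nu$, then already some protocol using only $n \le N$ samples does so (up to the $8\delta$ slack built into $\ANIS$). Once such $N$ is in hand the algorithm is immediate: enumerate all $f : \calA^N \to \calU$ and $g : \calB^N \to \calV$ (finitely many), compute the induced joint distribution under $\mu^{\otimes N}$, and output the best one. For the reformulation, since $|\calU|=|\calV|=2$ we may assume $\calU = \calV = \{-1,+1\}$; any distribution on $\{-1,+1\}^2$ is determined by the three numbers $\Ex[U]$, $\Ex[V]$, $\Ex[UV]$, so matching $\nu$ in $\dTV$ within $\delta$ is equivalent, up to constants, to matching these three moments within $O(\delta)$. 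Passing from $\pm 1$-valued to $[-1,1]$-valued strategies and back is free, since a $[-1,1]$-valued output can always be interpreted as the bias of a private $\pm 1$ coin.

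Starting from any $(f_n, g_n)$ achieving the target moment triple, the plan is a four-stage reduction following the section layout. First, a \emph{smoothing} step (Section~\ref{sec:smoothing}) replaces $f, g$ by their noise-operator images $T_{1-\gamma} f$, $T_{1-\gamma} g$; hypercontractivity together with the maximal correlation $\rho_0 = \rho(\calA,\calB;\mu)$ bounds the moment drift, while forcing Fourier mass onto levels up to some degree $d = d(\gamma)$. Second, a \emph{regularity} step (Section~\ref{sec:reglem}), in the spirit of Diakonikolas et al., exploits the fact that a $[-1,1]$-valued degree-$d$ function has only $\poly(d)$ coordinates of non-trivial influence: fixing those coordinates yields a finite collection of restrictions on each of which both $f$ and $g$ are simultaneously low-influence. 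Third, the \emph{invariance principle} (Section~\ref{sec:correlation_bds}) lets us replace each low-influence, low-degree restriction by a threshold of appropriately correlated Gaussians without affecting the relevant moments much. Fourth, a \emph{Gaussian simulation} step (Section~\ref{sec:gaussians}) uses Witsenhausen's CLT-based construction to simulate any Gaussian pair with correlation at most $\rho_0$ from $O\!\bigl(\log(1/\delta)/(1-\rho_0)\bigr)$ samples of $\mu$; thresholding then produces $\pm 1$-valued $(U,V)$ on $N$ samples whose moments approximate the original triple within $O(\delta)$.

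The main obstacle is \emph{quantitatively} stitching these four stages into a single explicit $N(\mu, \delta)$ of the triply-exponential form $\exp\exp\exp(\poly(1/\delta,\, 1/(1-\rho_0),\, \log(1/\alpha)))$ claimed in the theorem. The most delicate step is the regularity lemma, because the invariance principle demands the restricted function to be low-influence \emph{and} still low-degree, and each restriction must shift the target moments by no more than the $\delta$ budget; setting the influence threshold in terms of $d$, controlling the number of restricted coordinates, and in turn choosing $\gamma$ to absorb the smoothing error is where the triple-exponential blowup arises. A second subtlety is the dependence on $\alpha = \alpha(\mu)$: hypercontractivity in the $\mu$-biased setting carries constants depending on $\alpha$, and these propagate through both the smoothing and regularity bounds. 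Once these pieces are assembled inside Theorem~\ref{thm:main-lemma} and the resulting $N$ extracted, decidability of $\ANIS$ for $2 \times 2$ targets follows by the brute-force enumeration described above.
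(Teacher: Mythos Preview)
Your proposal is correct and follows essentially the same four-stage pipeline as the paper (smoothing via the noise operator, a joint regularity lemma to expose a bounded head set after which random restrictions are low-influence, the invariance-principle correlation bound to replace the tails by Gaussian thresholds, and Witsenhausen's CLT construction to simulate the Gaussians from finitely many $\mu$-samples), combined with a brute-force search over $N$-sample strategies. Two small inaccuracies worth noting: the invariance-principle step is inherently \emph{one-sided} (it upper- or lower-bounds $\Ex[fg]$ by the Gaussian-threshold value, it does not show they are close), which is why the paper reformulates the problem as the one-sided $\ABMIP$ and, for general $2\times 2$ targets, splits into the two cases $\Ex[UV]\gtrless \Ex[U]\Ex[V]$; and the Witsenhausen step requires $w = \poly(1/\delta,\,1/(1-\rho_0),\,1/\alpha)$ samples rather than $O(\log(1/\delta)/(1-\rho_0))$, though this does not affect the final triply-exponential bound.
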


\subsubsection*{Doubly Symmetric Binary Source}

In order to ease the presentation of ideas in proving the above theorem, we restrict to a special case, where the distribution $(\calU \times \calV; \nu)$ is a {\em doubly symmetric binary source} defined below.

\begin{defn}[Doubly Symmetric Binary Source]\label{defn:DSBS}
	The distribution $\DSBS(\rho)$ is the joint distribution on $\pm 1$ random variables $(U, V)$ given by the following table,
	\begin{center}
		\begin{tabular}{c|cc}
			& $V = +1$ & $V = -1$\\
			\hline
			$U = +1$ & $(1+\rho)/4$ & $(1-\rho)/4$\\
			$U = -1$ & $(1-\rho)/4$ & $(1+\rho)/4$\\
		\end{tabular}
	\end{center}
	In particular, $\Ex[U] = \Ex[V] = 0$ and $\Ex[UV] = \rho$.
\end{defn}

We will prove a special case of Theorem~\ref{thm:main_full}, where the probability space $(\calU \times \calV, \nu)$ is the distribution $\DSBS(\rho)$ for some $\rho$ (see Theorem~\ref{thm:main} below). Even though we are proving only this special case, the main ideas involved here easily generalize to the proof of Theorem~\ref{thm:main_full}. We give a proof-sketch of this generalization in \fullver{Appendix~\ref{apx:proof_main_full}}{Section~\ref{sec:proof_main_full}}.

\begin{theorem}[Decidability of $\ANIS$ for DSBS targets] \label{thm:main}
	Given a probability space $(\calA \times \calB, \mu)$, and parameters $\rho$ and $\delta$, there exists an algorithm that runs in time $T((\calA \times \calB, \mu), \delta)$ (which is an explicitly computable function), and decides the problem of $\ANIS((\calA \times \calB, \mu), \DSBS(\rho), \delta)$.
	
	\noindent The run time $T((\calA \times \calB, \mu), \delta)$ is upper bounded by,
	$$\exp\exp\exp\inparen{\poly\inparen{\frac{1}{\delta}, \ \frac{1}{1-\rho_0}, \ \log\inparen{\frac{1}{\alpha}}}}$$
	where $\rho_0 = \rho(\calA, \calB; \mu)$ is the maximal correlation of $(\calA \times \calB, \mu)$ (defined in Section~\ref{sec:prelim_maxcorr}) and $\alpha \defeq \alpha(\mu)$ is the minimum non-zero probability in $\mu$.
\end{theorem}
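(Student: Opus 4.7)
The plan is to reduce Theorem~\ref{thm:main} to an explicit \emph{dimension reduction} statement: establish a computable bound $N = N(\mu, \rho, \delta)$ such that whenever some pair of functions on $\mu^{\otimes n}$ produces a joint distribution within $\delta$ of $\DSBS(\rho)$, there already exists such a pair on $\mu^{\otimes N}$ producing a joint distribution within $O(\delta)$ of $\DSBS(\rho)$. Granted this bound, the algorithm is brute force: enumerate all pairs $f : \calA^N \to \{\pm 1\}$ and $g : \calB^N \to \{\pm 1\}$, compute the induced joint distribution under $\mu^{\otimes N}$, and output any pair whose total variation distance to $\DSBS(\rho)$ is at most $\delta$. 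The running time is roughly $2^{O(|\calA|^N + |\calB|^N)}$, which matches the claimed triple-exponential bound so long as $N$ is itself doubly exponential in the relevant parameters. The factor-$8$ slack between the two cases of $\ANIS$ is precisely what absorbs the $O(\delta)$ loss from the dimension reduction.

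The first move is to rephrase simulation as a moment problem. Since $\DSBS(\rho)$ is a $2\times 2$ distribution with balanced marginals and correlation $\rho$, the TV distance between a candidate $(f(\bx), g(\by))$ and $\DSBS(\rho)$ is controlled, up to universal constants, by $|\bbE f|$, $|\bbE g|$, and $|\bbE[fg] - \rho|$. The question thus becomes: exhibit, or rule out, $\pm 1$-valued $f, g$ that are nearly balanced and whose product-expectation is within $O(\delta)$ of $\rho$. This lands the problem in the Fourier-analytic setting of correlated product spaces, where the tools of noise operators, influences, and invariance become available.

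The dimension reduction itself is a four-step pipeline mirroring Sections~\ref{sec:smoothing}--\ref{sec:gaussians}. \emph{Smoothing:} apply the noise operator $T_{1-\eta}$ on each side to replace $f, g$ by effectively low-degree truncations of degree $d = \poly(1/\delta, 1/(1-\rho_0))$, losing only $O(\delta)$ in the relevant moments; here $\rho_0$ is the maximal correlation of $\mu$. \emph{Regularity:} use a regularity lemma in the spirit of \cite{diakonikolas2010regularity} to produce a coordinate set $S$ of bounded size such that, for a typical restriction of the $S$-coordinates, both restricted functions have all influences at most $\tau$. \emph{Invariance:} apply the multilinear invariance principle of Mossel--O'Donnell--Oleszkiewicz to replace low-degree, low-influence multilinear polynomials in samples of $\mu$ by the analogous polynomials in correlated Gaussians whose covariance is dictated by the spectral structure (hence by $\rho_0$) of $\mu$. \emph{Gaussian simulation:} invoke Witsenhausen's construction---averaging the top singular functions of the correlation operator of $\mu$ over a constant number of samples yields an approximately correlated Gaussian pair, and thresholding gives explicit $\pm 1$-valued $f', g'$ on $N$ coordinates approximating the target joint distribution.

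The main obstacle will be tracking the interlocking error and dimension budgets through this pipeline while keeping $N$ explicitly computable. Each step degrades the approximation, and the total degradation must stay within the $8\delta$ slack. The regularity step is the worst offender and is responsible for the outermost $\exp$ in the triple-exponential bound; the smoothing parameter $\eta = \poly(\delta, 1-\rho_0)$ together with the dependence of the invariance principle on degree and inverse-influence accounts for the two inner exponentials. A secondary subtlety is that the ``junta'' coordinates identified by the regularity lemma must themselves be realizable by a non-interactive protocol: this is handled by folding the restriction into private randomness, which the parties can simulate from further (marginalized) samples of $\mu$ without affecting the non-interactive structure.
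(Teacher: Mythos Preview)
Your proposal is correct and follows essentially the same approach as the paper: reduce $\ANIS$ to the moment problem $\ABMIP$, prove the dimension-reduction statement (Theorem~\ref{thm:main-lemma}) via the four-step pipeline of smoothing, joint regularity, invariance-based correlation bounds, and Witsenhausen's Gaussian simulation, and then brute-force over functions on $n_0$ coordinates. One small correction to your last paragraph: the ``junta'' coordinates $H$ identified by the regularity lemma are not folded into private randomness---they remain genuine \emph{correlated} samples from $\mu$ that both parties observe, and the final strategy uses $h$ samples for these head coordinates together with $w$ further samples to simulate the single correlated Gaussian pair, yielding $n_0 = h + w$; also, the paper's brute force is over discretized $[-1,1]$-valued functions (matching the $\ABMIP$ formulation) rather than $\{\pm 1\}$-valued ones.
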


\noindent We will use $\ANIS((\calA \times \calB, \mu), \rho, \delta)$ as a shorthand for $\ANIS((\calA \times \calB, \mu), \DSBS(\rho), \delta)$. Theorem~\ref{thm:main} will follow easily from the main technical lemma (Theorem~\ref{thm:main-lemma}). The proof of Theorem~\ref{thm:main}, assuming Theorem~\ref{thm:main-lemma} is present in \fullver{Appendix~\ref{apx:proof-anis}}{Section~\ref{sec:proof-anis}}.

\fi 

\subsection{Reformulation of $\ANIS$}\noapx{\label{sec:reformulation}}{\label{apx:reformulation}}

\ifnum\apx=0
With the end goal of proving Theorem~\ref{thm:main}, we introduce a new problem of Gap-Balanced-Maximum-Inner-Product, to which we show a reduction from $\ANIS$. This new formulation will be better suited for applying our techniques.

\begin{problem}[$\ABMIP((\calA \times \calB, \mu), \rho, \delta)$]\label{prob:ABMIP}
	Given a probability space $(\calA \times \calB, \mu)$, and parameters $\rho$ and $\delta$, distinguish between the following cases:
	\begin{enumerate}[(i)]
		\item there exists $N$, and functions $f : \calA^N \to [-1,1]$ and $g : \calB^N \to [-1,1]$, satisfying $|\Ex[f(\bx)]| \le \delta$ and $|\Ex[g(\by)]| \le \delta$, such that the following holds,
		$$\Ex[f(\bx) g(\by)] \ge \rho - \delta$$
		\item for all $N$ and all functions $f : \calA^N \to [-1,1]$ and $g : \calB^N \to [-1,1]$, satisfying $|\Ex[f(\bx)]| \le 2\delta$ and $|\Ex[g(\by)]| \le 2\delta$, the following holds,
		$$\Ex[f(\bx) g(\by)] < \rho - 4\delta$$
	\end{enumerate}
\end{problem}

\noindent The following proposition gives a reduction from the problem of $\ANIS$ to the problem of $\ABMIP$\fullver{ (proof present in Appendix~\ref{apx:reformulation})}{}.

\begin{proposition}\label{prop:ANIS_ABMIP_equiv}
	For any probability space $(\calA \times \calB, \mu)$ and $\rho, \delta > 0$, the following reduction holds,
	\begin{enumerate}
		\item Case (i) of $\ANIS((\calA \times \calB, \mu), \rho, \delta)$ holds $\implies$\\
		Case (i) of $\ABMIP((\calA \times \calB, \mu), \rho, 2\delta)$ holds
		\item Case (ii) of $\ANIS((\calA \times \calB, \mu), \rho, \delta)$ holds $\implies$\\
		Case (ii) of $\ABMIP((\calA \times \calB, \mu), \rho, 2\delta)$ holds
	\end{enumerate}
\end{proposition}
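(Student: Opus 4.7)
The plan is to prove the two implications by standard manipulations: the forward direction via a TV-distance moment bound, and the reverse direction via randomized rounding combined with a ``dampening'' step to match the target correlation.

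For the first implication, if the $\{\pm 1\}$-valued $f,g$ witness Case~(i) of $\ANIS$, that is $\dTV(\nu',\DSBS(\rho))\le\delta$ where $\nu' = (f(\bx),g(\by))_{\mu^{\otimes N}}$, then for any test function $h:\{\pm 1\}^2\to[-1,1]$ one has $|\Ex_{\nu'}[h]-\Ex_{\DSBS(\rho)}[h]|\le 2\delta$. Applying this to $h(u,v)=u$, $h(u,v)=v$, and $h(u,v)=uv$ immediately yields $|\Ex[f(\bx)]|\le 2\delta$, $|\Ex[g(\by)]|\le 2\delta$, and $\Ex[f(\bx)g(\by)]\ge\rho-2\delta$. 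Since $\{\pm 1\}\subseteq[-1,1]$, the same pair $f,g$ witnesses Case~(i) of $\ABMIP((\calA\times\calB,\mu),\rho,2\delta)$.

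For the second implication, I would argue by contrapositive: starting from $[-1,1]$-valued $f:\calA^N\to[-1,1]$ and $g:\calB^N\to[-1,1]$ with $|\Ex f|,|\Ex g|\le 4\delta$ and $\Ex[fg]\ge\rho-8\delta$ (the negation of Case~(ii) of $\ABMIP$ with parameter $2\delta$), I will construct $\{\pm 1\}$-valued $F,G$ that are non-interactively computable from $\mu^{\otimes N'}$ with $\dTV((F,G),\DSBS(\rho))\le 8\delta$. Step one is randomized rounding: using private randomness (WLOG available in the non-interactive model, e.g.\ by extracting independent bits from additional coordinates of $\mu$), Alice outputs $F_0=+1$ with probability $(1+f(\bx))/2$ and $-1$ otherwise, and Bob does the symmetric thing to obtain $G_0$. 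Independence of the two private coins gives $\Ex F_0=\Ex f$, $\Ex G_0=\Ex g$, and $\Ex[F_0 G_0]=\Ex[fg]$. Step two is dampening: if $c:=\Ex[F_0 G_0]>\rho$, pick $q\in(0,1]$ with $q^2=\rho/c$, and let Alice independently output $F=F_0$ with probability $q$ and a fresh uniform $\pm 1$ otherwise (and symmetrically for Bob). A direct check gives $\Ex F=q\,\Ex F_0$, $\Ex G=q\,\Ex G_0$, and $\Ex[FG]=q^2c=\min(c,\rho)$. In either case, the resulting moments $(a,b,c')=(\Ex F,\Ex G,\Ex[FG])$ satisfy $|a|,|b|\le 4\delta$ and $\rho-8\delta\le c'\le\rho$.

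Finally, a direct evaluation of TV distance between two distributions on $\{\pm 1\}^2$ parameterized by $(a,b,c')$ vs.\ $(0,0,\rho)$ gives
\[
\dTV((F,G),\DSBS(\rho)) \;=\; \frac{1}{8}\sum_{s,t\in\{\pm 1\}}\bigl|sa+tb+st(c'-\rho)\bigr| \;\le\; \tfrac{1}{2}\bigl(|a|+|b|+|c'-\rho|\bigr) \;\le\; 8\delta,
\]
which negates Case~(ii) of $\ANIS((\calA\times\calB,\mu),\rho,\delta)$ as required. The only subtlety I anticipate is justifying the use of private randomness in the nominally deterministic non-interactive model; this is standard and handled by drawing additional copies of $\mu$ and extracting independent private bits on each side. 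Aside from that, the proof is just careful bookkeeping of the constants $2\delta,4\delta,8\delta$ and the elementary moment-to-TV calculation on a four-point sample space.
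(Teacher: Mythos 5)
Your proof is correct and takes essentially the same route as the paper: part~1 is the same TV-to-moments argument, and for part~2 both arguments first reduce the correlation to at most $\rho$ and then pass to $\pm 1$-valued functions via randomized rounding, finishing with a direct TV bound in terms of the three moments $(a,b,c')$. The only cosmetic difference is that the paper rescales $f\mapsto\alpha f$, $g\mapsto\alpha g$ \emph{before} rounding while you round first and then ``dampen'' by mixing with fresh uniform bits; these are the same operation (rounding $\alpha f$ equals rounding $f$ and then mixing with a uniform bit with weight $1-\alpha$), and both yield the identical $8\delta$ bound.
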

\fi
\ifnum\stocapx=2 \else
\ifnum\apx=1 \begin{proofof}{Proposition~\ref{prop:ANIS_ABMIP_equiv}}
\else \begin{proof}
\fi
	Both directions are relatively straight-forward.
	\begin{enumerate}
		\item If case (i) of $\ANIS((\calA \times \calB, \mu), \rho, \delta)$ holds, then there exists $N$ and functions $f : \calA^N \to \sbit$ and $g : \calB^N \to \sbit$ such that the distribution $(f(\bx), g(\by))_{\mu^{\otimes N}}$ is $\delta$-close to $\DSBS(\rho)$ in total variation distance. It follows easily from the definition of total variation distance that $|\Ex[f(\bx)]| \le 2\delta$, $|\Ex[g(\by)]| \le 2\delta$ and $\expect[f(\bx) g(\by)] \ge \rho - 2\delta$. This is exactly the conditions needed in case (i) of $\ABMIP((\calA \times \calB, \mu), \rho,2\delta)$.
		
		\item We show the contrapositive that if case (ii) of $\ABMIP((\calA \times \calB, \mu), \rho, 2\delta)$ does not hold, then in fact case (ii) of $\ANIS((\calA \times \calB, \mu), \rho, \delta)$ also does not hold. Suppose there exists $N$ and functions $f : \calA^N \to [-1,1]$ and $g : \calB^N \to [-1,1]$ such that $|\Ex[f]| \le 4\delta$, $|\Ex[g]| \le 4\delta$ and $\Ex[f(\bx) g(\by)] \ge \rho - 8\delta$. First, we observe that without loss of generality we can assume that $\Ex[f(\bx) g(\by)] \le \rho$. This is because, if that was not the case, we can suitably modify $f$ and $g$ to get $f_1 = \alpha f$ and $g_1 = \alpha g$ such that $|\Ex[f_1(\bx)]| \le 4\delta$,  $|\Ex[g_1(\by)]| \le 4\delta$ and $\Ex[f_1(\bx) g_1(\by)] = \alpha^2 \cdot \Ex[f(\bx) g(\by)]$. We can choose $\alpha$ suitably such that $\Ex [f_1(\bx) g_1(\by)] \le \rho$.
		
		To show that case (ii) of $\ANIS((\calA \times \calB, \mu), \rho, \delta)$ does not hold, we obtain {\em randomized} functions $f' : \calA^N \to \sbit$ and $g' : \calB^N \to \sbit$ as follows, $f'(\bx)$ is equal to $1$ with probability $(1+f(\bx))/2$ and $-1$ otherwise and $g'(\by)$ is equal to $1$ with probability $(1+g(\by))/2$ and $-1$ otherwise. [The randomness needed can be simulated using some additional copies of $\calA$ and $\calB$.] Note that we have the following, (i) $\Ex[f'(\bx)] = \Ex[f]$ (ii) $\Ex[g'(\by)] = \Ex[g]$ and (iii) $\rho - 8\delta \le \Ex [f'(\bx) g'(\by)] \le \rho$.
		
		Define $e_{i,j}$ for $i, j \in \sbit$ as follows,
		\begin{eqnarray*}
			e_{1,1} &=& \Pr[f'(\bx) = +1 \sAND g'(\by) = +1] - (1+\rho)/4\\
			e_{1,-1} &=& \Pr[f'(\bx) = +1 \sAND g'(\by) = -1] - (1-\rho)/4\\
			e_{-1,1} &=& \Pr[f'(\bx) = -1 \sAND g'(\by) = +1] - (1-\rho)/4\\
			e_{-1,-1} &=& \Pr[f'(\bx) = -1 \sAND g'(\by) = -1] - (1+\rho)/4
		\end{eqnarray*}
		From (i), (ii) and (iii) above, we have the following,
		\begin{eqnarray*}
			|e_{1,1} + e_{1,-1} - e_{-1,1} - e_{-1,-1}| &\le& 4\delta\\
			|e_{1,1} - e_{1,-1} + e_{-1,1} - e_{-1,-1}| &\le& 4\delta\\
			|e_{1,1} - e_{1,-1} - e_{-1,1} + e_{-1,-1}| &\le& 8\delta
		\end{eqnarray*}
		
		In addition, we have $e_{1,1} + e_{1,-1} + e_{-1,1} + e_{-1,-1} = 0$. Combining all this, it is easy to infer that $|e_{i,j}| \le 4\delta$ for any $i, j \in \sbit$. Hence for $\nu = (f(\bx), g(\by))_{\mu^{\otimes N}}$, we get that $\dTV(\nu, \DSBS(\rho)) \le 8 \delta$.
	\end{enumerate}
\ifnum\apx=1 \end{proofof}
\else \end{proof}
\fi
\fi 

\ifnum\stocapx=2
\subsection{Fourier analysis and Hypercontractivity}
We will use standard notations in Fourier analysis for functions in $L^2(\calA^n, \mu^{\otimes n})$, and use standard definitions such as Influence, Variance, etc. We will also use some concentration bounds based on hypercontractivity. Owing to space constraints, we present the requisite preliminaries in Appendices~\ref{apx:prelim_fourier} and \ref{apx:prelim_hypercontractivity}.

\else
\subsection{Fourier analysis and multi-linear polynomials} \noapx{\label{sec:prelim_fourier}}{\label{apx:prelim_fourier}}

We recall some background in Fourier analysis that will be useful to us. Let $q$ be any positive integer and let $(\calA,\mu)$ be a finite probability space with $|\calA| = q$. Let $\calX_0, \cdots, \calX_{q-1}: \calA \to \bbR$ be an orthonormal basis for the space $L^2(\calA,\mu)$ with respect to the inner product $\inangle{.,.}_{\mu}$. Furthermore, we require that this basis has the property that $\calX_0 = {\bf 1}$, i.e., the function that is identically $1$ on every element of $\calA$.

For $\bsigma = (\sigma_1, \cdots, \sigma_n) \in \bbZ_q^n$, define $\calX_{\bsigma} : \calA^n \to \bbR^n$ as follows,
$$\calX_{\bsigma}(x_1,\dots,x_n) = \prod_{i \in [n]} \calX_{\sigma_i}(x_i)$$

It is easily seen that the functions $\set{\calX_{\bsigma}}_{\bsigma \in \bbZ_q^n}$ form an orthonormal basis for the product space $L^2(\calA^n, \mu^{\otimes n})$. Thus, every function $f \in L^2(\calA^n, \mu^{\otimes n})$ can be written as
$$f(\bx) = \sum\limits_{\bsigma \in \bbZ_q^n} \what{f}(\bsigma) \calX_{\bsigma}(\bx)$$
where $\what{f}: \bbZ_q^n \to \bbR$ can be obtained as $\what{f}(\bsigma) = \inangle{f,\calX_{\bsigma}}_{\mu}$. The function $\what{f}$ is the Fourier transform of $f$ with respect to the basis $\set{\calX_i}_{i \in \bbZ_q}$. Although we will work with an arbitrary (albeit fixed) basis, many of the important properties of the Fourier transform are basis-independent. 
The most basic properties of $\what{f}$ are summarized in the following fact which follows from the orthonormality of $\set{\calX_{\bsigma}}_{\bsigma \in \bbZ_q^n}$.
\begin{fact} \label{fact:fourier_basics}
We have that:
\begin{itemize}
\item Plancherel Identity : $\Ex[f(\bx) g(\bx)] = \sum\limits_{\bsigma} \what{f}(\bsigma) \what{g}(\bsigma)$.
\item As a special case, we have Parseval's identity, $\Ex[f(\bx)^2] = \sum\limits_\bsigma \what{f}(\bsigma)^2$.
\item $\Ex[f] = \what{f}(\mathbf{0})$.
\item $\Var[f] = \sum\limits_{\bsigma \neq \mathbf{0}} \what{f}(\bsigma)^2$.
\end{itemize}
\end{fact}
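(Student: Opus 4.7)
The plan is to derive all four statements from a single observation: the family $\{\calX_{\bsigma}\}_{\bsigma \in \bbZ_q^n}$ is orthonormal with respect to the product inner product $\inangle{\cdot,\cdot}_{\mu^{\otimes n}}$, with $\calX_{\mathbf{0}} = \mathbf{1}$. First I would verify orthonormality in the product space by a direct computation: for $\bsigma, \btau \in \bbZ_q^n$, the product structure of the measure gives
\[
\inangle{\calX_{\bsigma}, \calX_{\btau}}_{\mu^{\otimes n}} \;=\; \Ex_{\bx \sim \mu^{\otimes n}} \prod_{i \in [n]} \calX_{\sigma_i}(x_i) \calX_{\tau_i}(x_i) \;=\; \prod_{i \in [n]} \inangle{\calX_{\sigma_i}, \calX_{\tau_i}}_{\mu} \;=\; \prod_{i \in [n]} \indicator[\sigma_i = \tau_i],
\]
which equals $1$ if $\bsigma = \btau$ and $0$ otherwise, using the one-dimensional orthonormality.

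Next, for the Plancherel identity, I would expand both $f$ and $g$ in the basis and swap sum with expectation:
\[
\Ex[f(\bx) g(\bx)] \;=\; \Ex\Bigl[\sum_{\bsigma} \what{f}(\bsigma) \calX_{\bsigma}(\bx) \sum_{\btau} \what{g}(\btau) \calX_{\btau}(\bx)\Bigr] \;=\; \sum_{\bsigma, \btau} \what{f}(\bsigma) \what{g}(\btau) \inangle{\calX_{\bsigma}, \calX_{\btau}}_{\mu^{\otimes n}},
\]
and then the orthonormality established above collapses this to $\sum_{\bsigma} \what{f}(\bsigma)\what{g}(\bsigma)$. Parseval's identity is then the specialization $g = f$.

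For $\Ex[f] = \what{f}(\mathbf{0})$, I would write $\Ex[f(\bx)] = \inangle{f, \mathbf{1}}_{\mu^{\otimes n}} = \inangle{f, \calX_{\mathbf{0}}}_{\mu^{\otimes n}}$, using the hypothesis $\calX_0 = \mathbf{1}$ on the single-coordinate basis (so $\calX_{\mathbf{0}}$ is the constant-$1$ function on $\calA^n$), and then apply the definition of the Fourier coefficient. Finally, for the variance formula, I would combine the previous two items:
\[
\Var[f] \;=\; \Ex[f^2] - (\Ex[f])^2 \;=\; \sum_{\bsigma} \what{f}(\bsigma)^2 \;-\; \what{f}(\mathbf{0})^2 \;=\; \sum_{\bsigma \neq \mathbf{0}} \what{f}(\bsigma)^2.
\]
No step is hard; the only mild subtlety is being explicit that the product-space orthonormality follows from the one-dimensional orthonormality via the tensor product structure, which is what makes all four identities basis-independent statements about $L^2(\calA^n, \mu^{\otimes n})$.
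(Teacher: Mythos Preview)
Your proposal is correct and follows exactly the approach the paper indicates: the paper does not give a separate proof but simply states that the fact ``follows from the orthonormality of $\set{\calX_{\bsigma}}_{\bsigma \in \bbZ_q^n}$,'' and your argument is a clean unpacking of precisely that.
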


In this paper, we will deal with joint probability spaces of the type $(\calA \times \calB, \mu)$. In such cases, we will denote the marginal probability spaces as $(\calA, \mu_A)$ and $(\calB, \mu_B)$. We will abuse notations, to use $\calX_{\bsigma}$ to denote the orthonormal basis vectors for both $L^2(\calA^n, \mu_A^{\otimes n})$ as well as $L^2(\calB^n, \mu_B^{\otimes n})$. The space of $\bsigma$ will be $\bbZ_{|\calA|}^n$ or $\bbZ_{|\calB|}^n$ accordingly, and will be clear from context.

For $\bsigma \in \mathbb{Z}_q^n$, the {\em degree} of $\bsigma$ is denoted by $\inabs{\bsigma} \defeq \inabs{\setdef{i \in [n]}{\sigma_i \neq 0}}$. We say that the degree of a function\footnote{we will interchangeably use the word {\em polynomial} to talk about any function in $L^2(\calA^n, \mu^{\otimes n})$.} $f \in L^2(\calA^n, \mu^{\otimes n})$, denoted by $\deg(f)$, is the largest value of $|\bsigma|$ such that $\what{f}(\bsigma) \ne 0$.

\begin{definition}[Influence]
For every coordinate $i \in [n]$, $\Inf_i(f)$ is the $i$-th influence of $f$, and $\Inf(f)$ is the total influence, which are defined as
$$\Inf_i(f) \defeq \Ex_{\bx_{-i}} \insquare{\Var_{x_i} \, [f(\bx)]} \quad \quad \quad \Inf(f) \defeq \sum\limits_{i=1}^n \Inf_i(f)$$
\end{definition}
\noindent The basic properties of influence are summarized in the following fact.

\begin{fact}\label{fact:influences}
For any function $f \in L^2(\calA^n, \mu^{\otimes n})$, we have the following:
\begin{enumerate}
\item[(i)] $\Inf_i(f) = \sum\limits_{\bsigma : \sigma_i \ne \mathbf{0}} \what{f}(\bsigma)^2$ and hence, for all $i$, $\Inf_i(f) \le \Var(f)$
\item[(ii)] $\Inf(f) = \sum\limits_{\bsigma} |\bsigma| \cdot \what{f}(\bsigma)^2$
\item[(iii)] If $\deg(f) = d$, then $\Inf(f) \le d \cdot \Var[f]$.
\end{enumerate}
\end{fact}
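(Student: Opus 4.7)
The plan is to derive all three parts directly from the Fourier expansion $f(\bx)=\sum_\bsigma \what{f}(\bsigma)\calX_\bsigma(\bx)$, together with Parseval's identity (Fact~\ref{fact:fourier_basics}) and the single key property of our basis: $\calX_0\equiv \mathbf{1}$, which by orthonormality forces $\Ex_{x_i\sim\mu}[\calX_c(x_i)] = \inangle{\calX_c,\calX_0}_\mu = \indicator[c=0]$ in a single coordinate.

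For part (i), I would fix the remaining coordinates $\bx_{-i}$ and regroup the Fourier expansion according to the value of $\sigma_i$:
\[
f(\bx)\;=\;\sum_{c}\calX_c(x_i)\cdot g_c(\bx_{-i}),\qquad g_c(\bx_{-i})\defeq \sum_{\bsigma:\sigma_i=c}\what{f}(\bsigma)\prod_{j\ne i}\calX_{\sigma_j}(x_j).
\]
Taking $\Ex_{x_i}$ kills all terms with $c\ne 0$, giving $\Ex_{x_i}[f]=g_0(\bx_{-i})$, while orthonormality of the $\calX_c$ in the single variable $x_i$ yields $\Ex_{x_i}[f^2]=\sum_c g_c(\bx_{-i})^2$. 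Subtracting and then taking $\Ex_{\bx_{-i}}$ gives $\Inf_i(f)=\sum_{c\ne 0}\Ex_{\bx_{-i}}[g_c^2]$. Applying Parseval to each $g_c$ as a function on $L^2(\calA^{n-1},\mu^{\otimes(n-1)})$ (whose Fourier coefficients in the induced product basis are exactly $\{\what{f}(\bsigma):\sigma_i=c\}$) yields $\Inf_i(f)=\sum_{\bsigma:\sigma_i\ne 0}\what{f}(\bsigma)^2$. The bound $\Inf_i(f)\le \Var(f)$ is then immediate, since $\{\bsigma:\sigma_i\ne 0\}\subseteq\{\bsigma\ne \mathbf{0}\}$ and $\Var(f)$ equals the sum of $\what{f}(\bsigma)^2$ over the latter.

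Parts (ii) and (iii) are one-line consequences. For (ii), summing the identity from (i) over $i\in[n]$ and swapping the order of summation gives
\[
\Inf(f)\;=\;\sum_\bsigma \what{f}(\bsigma)^2\cdot |\{i:\sigma_i\ne 0\}|\;=\;\sum_\bsigma |\bsigma|\cdot \what{f}(\bsigma)^2.
\]
For (iii), the hypothesis $\deg(f)=d$ says $\what{f}(\bsigma)=0$ whenever $|\bsigma|>d$, so in the sum from (ii) we may upper bound $|\bsigma|\le d$ on every surviving term, obtaining $\Inf(f)\le d\cdot \sum_{\bsigma\ne \mathbf{0}}\what{f}(\bsigma)^2 = d\cdot\Var(f)$ by Parseval once more.

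I do not anticipate any real obstacle: the whole argument is bookkeeping on the Fourier side, powered by the normalization $\calX_0\equiv\mathbf{1}$. The only place requiring a (mild) check is identifying the Fourier coefficients of each slice $g_c$ on the smaller space $L^2(\calA^{n-1},\mu^{\otimes(n-1)})$ so that Parseval is legitimately applied there; this is automatic because $\{\calX_\bsigma\}$ was defined as the tensor product of the one-variable basis, so its restriction to the remaining $n-1$ coordinates is again an orthonormal product basis.
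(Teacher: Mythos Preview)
Your proof is correct and is the standard derivation. The paper does not actually prove this statement: it is recorded there as a basic \emph{Fact} in the preliminaries (alongside Fact~\ref{fact:fourier_basics}) and left without proof, as is customary for these elementary identities in Fourier analysis over product spaces. Your argument is exactly the one implicitly relied upon.
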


\subsubsection*{Restrictions of polynomials}

We will often use restrictions of polynomials. For any subset $H \subseteq [n]$, we will use $\bx_H$ to denote the tuple of variables in $\bx$ with indices in $H$. For any function $P \in L^2(\calA^n, \mu^{\otimes n})$, and any $\xi \in \calA^{H}$, we will use $P_{\xi}$ to denote the function obtained by restriction of $\bx_H$ to $\xi$, that is, $P_{\xi}(\bx_{T}) = P(\bx_H \gets \xi, \bx_T)$ (where $T = [n] \setminus H$); whenever we use such terminology, the subset $H$ will be clear from context.	We will use the phrase ``$\xi$ fixes $H$ over $\calA$'' to mean such a restriction. We will use $\set{\bsigma_H}$ to denote all degree sequences in $\bbZ_q^H$, and similarly $\set{\bsigma_T}$ to denote all degree sequences in $\bbZ_q^T$. We use $\bsigma_H \circ \bsigma_T$ to denote $\bsigma \in \bbZ_q^n$ such that $\sigma_i = (\sigma_H)_i \text{ if } i \in H \text{ or } (\sigma_T)_i \text{ if } i \in T$.

\noindent We now state a lemma that will be needed,

\begin{lem}[cf. Lemma 3.3 in \cite{diakonikolas2010regularity}] \label{lem:exp_inf}
For any function $P \in L^2(\calA^n, \mu^{\otimes n})$, consider a random assignment $\xi \sim \mu_A^H$ to the variables $\bx_H$. Let $T = [n] \setminus H$. Then, for all $i \in T$, it holds that $\bbE_{\xi} [\Inf_i (P_{\xi})] = \Inf_i(P)$. Also, $\bbE_{\xi} [\Var(P_{\xi})] \le \Var(P)$.
\end{lem}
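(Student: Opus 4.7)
The plan is to prove both claims by direct Fourier analysis, exploiting the orthonormality of the product basis $\{\calX_\bsigma\}$ under $\mu_A^{\otimes n}$. First I would write out the Fourier expansion of the restricted polynomial $P_\xi$ in terms of the original Fourier coefficients of $P$. Splitting each index $\bsigma \in \bbZ_q^n$ as $\bsigma_H \circ \bsigma_T$, the basis function factors as $\calX_{\bsigma}(\bx) = \calX_{\bsigma_H}(\bx_H) \cdot \calX_{\bsigma_T}(\bx_T)$, so substituting $\bx_H \gets \xi$ and regrouping gives the identity
\[
\what{P_\xi}(\bsigma_T) \;=\; \sum_{\bsigma_H} \what{P}(\bsigma_H \circ \bsigma_T)\, \calX_{\bsigma_H}(\xi).
\]
This is the workhorse formula for the whole proof.

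For part (a), I would apply Fact~\ref{fact:influences}(i) to express $\Inf_i(P_\xi) = \sum_{\bsigma_T : (\sigma_T)_i \neq 0} \what{P_\xi}(\bsigma_T)^2$, substitute the formula above, and take expectation over $\xi \sim \mu_A^{\otimes H}$. Expanding the square produces cross terms $\mathbb{E}_\xi[\calX_{\bsigma_H}(\xi) \calX_{\bsigma_H'}(\xi)]$, which by orthonormality of the basis on $(\calA^H, \mu_A^{\otimes H})$ vanish unless $\bsigma_H = \bsigma_H'$. Reassembling the surviving diagonal sum $\sum_{\bsigma_H} \what{P}(\bsigma_H \circ \bsigma_T)^2$ and summing over $\bsigma_T$ with $(\sigma_T)_i \neq 0$ yields $\sum_{\bsigma : \sigma_i \neq 0} \what{P}(\bsigma)^2 = \Inf_i(P)$, which is the desired equality.

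For part (b), the same template applies: write $\Var(P_\xi) = \sum_{\bsigma_T \neq \mathbf{0}} \what{P_\xi}(\bsigma_T)^2$, take expectation, and collapse via orthonormality to obtain $\mathbb{E}_\xi[\Var(P_\xi)] = \sum_{\bsigma : \bsigma_T \neq \mathbf{0}} \what{P}(\bsigma)^2$. Comparing with $\Var(P) = \sum_{\bsigma \neq \mathbf{0}} \what{P}(\bsigma)^2$, the right-hand side misses exactly the terms with $\bsigma_T = \mathbf{0}$ but $\bsigma_H \neq \mathbf{0}$, which are non-negative, giving the inequality $\mathbb{E}_\xi[\Var(P_\xi)] \le \Var(P)$.

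I do not expect a real obstacle here: the only subtle point is being careful that the basis vectors $\{\calX_{\bsigma_H}\}$ on $\calA^H$ are themselves orthonormal under $\mu_A^{\otimes H}$, which is immediate from the product structure and the assumption $\calX_0 = \mathbf{1}$. Once that is in hand, both statements reduce to bookkeeping on Fourier coefficients, and part (b) becomes strict equality when restricted to the ``$\bsigma_T \neq \mathbf{0}$'' part of the spectrum, with the inequality arising solely from the discarded ``pure-$H$'' coefficients.
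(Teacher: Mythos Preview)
Your proposal is correct and follows essentially the same approach as the paper: the paper isolates your ``workhorse formula'' and its squared-expectation consequence as a separate fact (Fact~\ref{fact:expected_four_inf}), then plugs into the Fourier expressions for influence and variance exactly as you describe, including the observation that the variance inequality comes from dropping the nonnegative terms with $\bsigma_T = \mathbf{0}$ but $\bsigma_H \neq \mathbf{0}$.
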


To prove the lemma, we first recall the following fact about the expected value of Fourier coefficients under random restrictions.
\begin{fact}\label{fact:expected_four_inf}
Let $P \in L^2(\calA^n, \mu^{\otimes n})$. For any subset $H \subseteq [n]$, consider an assignment $\xi$ to the variables $\bx_H$. Let $T = [n] \setminus H$. Then, we have
$$\what{P}_{\xi}(\bsigma_T) = \sum\limits_{\bsigma_H} \what{P}(\bsigma_H \circ \bsigma_T) \cdot \calX_{\bsigma_H}(\xi)$$ and therefore
$$\Ex_{\xi}\insquare{\what{P}_{\xi}(\bsigma_T)^2} = \sum\limits_{\bsigma_H} \what{P}(\bsigma_H \circ \bsigma_T)^2 $$
\end{fact}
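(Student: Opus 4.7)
The plan is to simply unfold the Fourier expansion of $P$, substitute the restriction, and identify coefficients via the uniqueness of the Fourier expansion on the remaining coordinates. Concretely, I would start by writing
\[
P(\bx) \;=\; \sum_{\bsigma \in \bbZ_q^n} \what{P}(\bsigma)\, \calX_{\bsigma}(\bx)
\;=\; \sum_{\bsigma_H,\bsigma_T} \what{P}(\bsigma_H \circ \bsigma_T)\, \calX_{\bsigma_H}(\bx_H)\, \calX_{\bsigma_T}(\bx_T),
\]
using the product structure $\calX_{\bsigma}(\bx) = \calX_{\bsigma_H}(\bx_H) \cdot \calX_{\bsigma_T}(\bx_T)$ that is immediate from the definition of the tensor-product basis on $L^2(\calA^n,\mu^{\otimes n})$.

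Next I would substitute $\bx_H = \xi$ and regroup to get
\[
P_{\xi}(\bx_T) \;=\; \sum_{\bsigma_T}\left( \sum_{\bsigma_H} \what{P}(\bsigma_H \circ \bsigma_T)\, \calX_{\bsigma_H}(\xi) \right) \calX_{\bsigma_T}(\bx_T).
\]
Because $\{\calX_{\bsigma_T}\}_{\bsigma_T \in \bbZ_q^T}$ is an orthonormal basis of $L^2(\calA^T,\mu_A^{\otimes T})$, Fourier coefficients are uniquely determined, so the parenthesized expression must equal $\what{P_\xi}(\bsigma_T)$. This gives the first identity.

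For the second identity, I would square and take expectation over $\xi \sim \mu_A^{\otimes H}$, expanding the square as a double sum over $\bsigma_H, \bsigma_H'$. The key step is then the orthonormality of the $H$-basis under $\mu_A^{\otimes H}$, namely $\Ex_{\xi}[\calX_{\bsigma_H}(\xi)\, \calX_{\bsigma_H'}(\xi)] = \indicator[\bsigma_H = \bsigma_H']$, which collapses the double sum to the diagonal and yields $\sum_{\bsigma_H} \what{P}(\bsigma_H \circ \bsigma_T)^2$, as claimed.

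There is no real obstacle here: the statement is a direct bookkeeping consequence of (i) the product structure of the tensor-basis, (ii) uniqueness of Fourier coefficients on $L^2(\calA^T, \mu_A^{\otimes T})$, and (iii) orthonormality of $\{\calX_{\bsigma_H}\}$ under $\mu_A^{\otimes H}$. The only thing to be careful about is notational — keeping the decomposition $\bsigma = \bsigma_H \circ \bsigma_T$ consistent and not conflating the two basis families on the $H$-block and the $T$-block — and this should follow once the Fourier expansion is written out coordinate-by-coordinate as above.
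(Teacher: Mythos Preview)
Your proposal is correct and follows essentially the same approach as the paper: expand $P$ in the tensor-product basis, substitute $\bx_H = \xi$, and read off $\what{P_\xi}(\bsigma_T)$ via orthonormality on the $T$-block (the paper phrases this as taking the inner product with $\calX_{\bsigma_T}$, which is equivalent to your appeal to uniqueness of Fourier coefficients); the second identity is then obtained identically by expanding the square and using orthonormality of $\{\calX_{\bsigma_H}\}$ under $\mu_A^{\otimes H}$.
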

\begin{proof}
The first part follows from simply substituting $P_{\xi}(\bx_T) = P(\bx_H \gets \xi, \bx_T)$, and taking inner product with $\calX_{\bsigma_T}(\bx_T)$.
$$\what{P}_{\xi}(\bsigma_T) = \inangle{\sum\limits_{\bsigma_H \circ \bsigma_T'} \what{P}(\bsigma_H \circ \bsigma_T') \cdot \calX_{\bsigma_H}(\xi) \calX_{\bsigma_T'}(\bx_T)\ \ , \ \calX_{\bsigma_T}(\bx_T)}_{\mu} = \sum\limits_{\bsigma_H} \what{P}(\bsigma_H \circ \bsigma_T) \cdot \calX_{\bsigma_H}(\xi)$$

The second part simply follows from the orthonormality of the characters $\calX_{\bsigma_H}$ and $\calX_{\bsigma_H'}$ for $\bsigma_H \ne \bsigma_H'$. In particular, we have the following,
\begin{eqnarray*}
\Ex_{\xi}\insquare{\what{P}_{\xi}(\bsigma_T)^2}
&=& \Ex_{\xi}\insquare{\inparen{\sum\limits_{\bsigma_H} \what{P}(\bsigma_H \circ \bsigma_T) \cdot \calX_{\bsigma_H}(\xi)}^2}\\
&=& \Ex_{\xi}\insquare{\sum\limits_{\bsigma_H \bsigma_H'} \what{P}(\bsigma_H \circ \bsigma_T) \cdot \what{P}(\bsigma_H' \circ \bsigma_T) \cdot \calX_{\bsigma_H}(\xi) \cdot \calX_{\bsigma_H'}(\xi)}\\
&=& \sum\limits_{\bsigma_H \bsigma_H'} \what{P}(\bsigma_H \circ \bsigma_T) \cdot \what{P}(\bsigma_H' \circ \bsigma_T) \cdot \Ex_{\xi}\insquare{\calX_{\bsigma_H}(\xi) \cdot \calX_{\bsigma_H'}(\xi)}\\
&=& \sum\limits_{\bsigma_H} \what{P}(\bsigma_H \circ \bsigma_T)^2
\end{eqnarray*}

\end{proof}

Intuitively, the above fact says that all the Fourier weight on degree sequences $\set{\bsigma_H \circ \bsigma_T}_{\bsigma_H}$ {\em collapses} down onto $\bsigma_T$ in expectation. Consequently, the influence of an unrestricted variable does not change, and the variance does not increase in expectation under random restrictions, as both these quantities are sums of Fourier weight on certain $\bsigma_T$'s.\\

\begin{proofof}{Lemma~\ref{lem:exp_inf}}
We simply use Facts~\ref{fact:fourier_basics} and \ref{fact:influences} in addition to Fact \ref{fact:expected_four_inf} to prove the lemma.

\noindent Basically, from Facts~\ref{fact:fourier_basics} and \ref{fact:expected_four_inf} we get,
$$\Ex_{\xi} [\Var(P_{\xi})]
= \Ex_{\xi} \insquare{\sum_{\bsigma_T \ne \mathbf{0}} \what{P}_{\xi}(\bsigma_T)^2}
= \sum_{\bsigma_T \ne \mathbf{0}} \Ex_{\xi} \insquare{\what{P}_{\xi}(\bsigma_T)^2}
= \sum_{\bsigma_T \ne \mathbf{0}} \sum_{\bsigma_H} \what{P}(\bsigma_H \circ \bsigma_T)^2
\le \Var(P)$$

\noindent Similarly, from Facts~\ref{fact:influences} and \ref{fact:expected_four_inf} we get that for all $i \in T$,
$$\Ex_{\xi} [\Inf_i(P_{\xi})]
= \Ex_{\xi} \insquare{\sum_{\substack{\bsigma_T : \\ (\sigma_T)_i \ne 0}} \what{P}_{\xi}(\bsigma_T)^2}
= \sum_{\substack{\bsigma_T : \\ (\sigma_T)_i \ne 0}} \Ex_{\xi} \insquare{\what{P}_{\xi}(\bsigma_T)^2}
= \sum_{\substack{\bsigma_T : \\ (\sigma_T)_i \ne 0}} \sum_{\bsigma_H} \what{P}(\bsigma_H \circ \bsigma_T)^2
= \Inf_i(P)$$
\end{proofof}


\subsection{Hypercontractivity and moment bounds}
\noapx{\label{sec:prelim_hypercontractivity}}{\label{apx:prelim_hypercontractivity}}

The following moment bound for low-degree polynomials appears as Theorem $2.7$ in \cite{austrin2011randomly}, which in turn follows from hypercontractivity.
\begin{theorem}[\cite{wolff2007hypercontractivity}] \label{thm:mom_bd_hyperc}
Let $(\calA,\mu)$ be a finite probability space in which the minimum non-zero probability is $\alpha(\mu) \le \half$. Then, for $p \ge 2$, every degree-$d$ polynomial $f \in L^2(\calA^n, \mu^{\otimes n})$ satisfies
$$\norm{p}{f} \le C_p(\alpha)^{d/2} \norm{2}{f}$$
Here, $C_p$ is defined by
$$ C_p(\alpha) = \frac{A^{1/p'}-A^{-1/p'}}{A^{1/p}-A^{-1/p}} $$
where $A = (1-\alpha)/\alpha$ and $1/p + 1/p' = 1$. The value at $\alpha = 1/2$ is taken to be the limit of the above expression as $\alpha \to 1/2$, i.e., $C_p(1/2) = p-1$.
\end{theorem}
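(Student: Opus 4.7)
The plan is to follow the standard three-step strategy for hypercontractivity-based moment bounds: establish a sharp one-variable $L^2 \to L^p$ operator inequality on $(\calA, \mu)$, tensorize it to $(\calA^n, \mu^{\otimes n})$, and then apply the resulting inequality to degree-$d$ polynomials via an inversion argument on the low-frequency subspace. I would introduce the noise operator on the base space by $T_\rho f := \Ex_\mu[f] + \rho(f - \Ex_\mu[f])$ and its tensor extension $T_\rho^{\otimes n}$ on $L^2(\calA^n, \mu^{\otimes n})$, which acts diagonally on the Fourier basis as $\calX_{\bsigma} \mapsto \rho^{|\bsigma|} \calX_{\bsigma}$.

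The central analytic step, which is essentially Wolff's theorem and the main technical content of \cite{wolff2007hypercontractivity}, is the following single-variable bound: for every $p \ge 2$ and every $\rho$ with $\rho^2 \le 1/C_p(\alpha)$, one has $\|T_\rho f\|_p \le \|f\|_2$ for all $f \in L^2(\calA, \mu)$. The key structural observation is that among finite probability spaces with minimum atomic probability at least $\alpha$, the extremal case is the two-point space with masses $\alpha$ and $1-\alpha$; a convexity/``mass-moving'' reduction therefore collapses the general statement to a sharp $L^2 \to L^p$ inequality on this biased two-point space. Writing a function on two points as $a + b \phi$ in terms of the mean-zero basis function $\phi$, the problem becomes an explicit one-real-variable analytic inequality whose sharp constant works out to be exactly $C_p(\alpha)$, with $A = (1-\alpha)/\alpha$ appearing naturally from the weights of the two-point space. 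From here, tensorization is the standard iterative argument: applying the one-variable inequality one coordinate at a time and using Minkowski's integral inequality (valid since $p \ge 2$) to swap an $L^p$ norm in the $i$-th coordinate with an $L^2$ norm over the remaining coordinates, one obtains $\|T_\rho^{\otimes n} g\|_p \le \|g\|_2$ for every $g \in L^2(\calA^n, \mu^{\otimes n})$.

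Finally, given a degree-$d$ polynomial $f = \sum_{|\bsigma| \le d} \what{f}(\bsigma) \calX_{\bsigma}$, define $g = \sum_{|\bsigma| \le d} \rho^{-|\bsigma|} \what{f}(\bsigma) \calX_{\bsigma}$, so that $T_\rho^{\otimes n} g = f$ and $g$ is still of degree at most $d$. Parseval's identity yields $\|g\|_2^2 = \sum_{|\bsigma| \le d} \rho^{-2|\bsigma|} \what{f}(\bsigma)^2 \le \rho^{-2d} \|f\|_2^2$, and the tensorized inequality then gives $\|f\|_p = \|T_\rho^{\otimes n} g\|_p \le \|g\|_2 \le \rho^{-d} \|f\|_2$. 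Choosing $\rho^2 = 1/C_p(\alpha)$ yields the claimed bound $\|f\|_p \le C_p(\alpha)^{d/2} \|f\|_2$. The hard part is the sharp two-point inequality with its precise constant $C_p(\alpha)$; the tensorization and the inversion on the degree-$d$ subspace are essentially bookkeeping. Since the result is attributed to \cite{wolff2007hypercontractivity}, the cleanest path in the paper is to quote the two-point inequality from there and present only the short tensorization and inversion steps in detail.
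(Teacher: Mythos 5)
The paper does not prove this theorem: it is imported as a black-box citation to Wolff \cite{wolff2007hypercontractivity} (via Theorem~2.7 of Austrin--H{\aa}stad \cite{austrin2011randomly}), and you correctly note at the end that the cleanest course is to cite the hard two-point inequality and only spell out the bookkeeping. Your outline is the standard and correct derivation — Wolff's reduction of the $(2,p)$-hypercontractivity constant of a finite space with minimum atom $\alpha$ to the biased two-point space, tensorization via Minkowski's integral inequality (valid since $p \ge 2$), and the $T_\rho$-inversion trick on the degree-$\le d$ Fourier subspace, with the sanity check that $\rho^2 = 1/C_p(\alpha)$ gives $\rho^{-d} = C_p(\alpha)^{d/2}$ and for $\alpha = 1/2$ recovers the familiar $\rho = 1/\sqrt{p-1}$.
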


\noindent We will use the following known concentration bound for low-degree polynomials.
\begin{theorem}[\cite{austrin2011randomly}; Theorem $2.12$] \label{thm:conc_bd}
Let $f \in L^2(\calA^n, \mu^{\otimes n})$ be a degree-$d$ polynomial. Then, for any $t > e^{d/2}$,
$$\Pr[|f| > t \cdot \norm{2}{f}] \le \exp(-c t^{2/d})$$
where $c:= \frac{\alpha(\mu) d}{e}$.
\end{theorem}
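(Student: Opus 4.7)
The plan is to use the standard moment method: bound the tail via Markov's inequality applied to $|f|^p$, control the $p$-th moment using the hypercontractive estimate of Theorem~\ref{thm:mom_bd_hyperc}, and then optimize over $p$. Specifically, for any $p \ge 2$, Markov gives
\[
\Pr[|f| > t\norm{2}{f}] \le \frac{\Ex[|f|^p]}{t^p\norm{2}{f}^p} = \inparen{\frac{\norm{p}{f}}{t\norm{2}{f}}}^p,
\]
and Theorem~\ref{thm:mom_bd_hyperc} turns this into $\Pr[|f| > t\norm{2}{f}] \le (C_p(\alpha)^{d/2}/t)^p$, where $\alpha = \alpha(\mu)$. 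This is the master inequality from which everything else will be squeezed by optimizing $p$.

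To extract a clean tail bound I would next estimate $C_p(\alpha)$ as a function of $p$. Writing $A = (1-\alpha)/\alpha$, Taylor expansion in $1/p$ shows that $C_p(\alpha) \sim (A - A^{-1})p/(2\log A)$ as $p \to \infty$, which matches both limiting regimes: at $\alpha = \tfrac12$ one recovers $C_p = p-1$, and as $\alpha \to 0$ one gets $C_p \sim p/(2\alpha\log(1/\alpha))$. Inspecting the explicit ratio defining $C_p$ as a function of $1/p$ ought to give the uniform bound $C_p(\alpha) \le p/(2\alpha)$ for every $\alpha \in (0,\tfrac12]$ and every $p \ge 2$. Plugging this in, the master inequality becomes
\[
\Pr[|f| > t\norm{2}{f}] \le \exp\inparen{\tfrac{pd}{2}\log\tfrac{p}{2\alpha} - p\log t}.
\]

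Finally, I would optimize over $p$: differentiating the exponent in $p$ yields the optimal choice $p = 2\alpha t^{2/d}/e$, at which the two summands balance and the exponent evaluates to $-pd/2 = -\alpha d\, t^{2/d}/e$, exactly matching the claimed bound. The hypothesis $t > e^{d/2}$ is what guarantees the optimal $p$ satisfies $p \ge 2$, so that Theorem~\ref{thm:mom_bd_hyperc} is legitimately applicable. I expect the only real obstacle to be the bookkeeping step of verifying the clean bound $C_p(\alpha) \le p/(2\alpha)$ over the whole range: the two asymptotic regimes give it immediately, but a fully rigorous argument across the full range requires a monotonicity check on the hyperbolic expression defining $C_p$. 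Once that estimate is nailed down, the rest is routine convex optimization of the exponent.
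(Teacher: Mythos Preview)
The paper does not prove this statement; it is quoted verbatim from \cite{austrin2011randomly}, so there is no in-paper argument to compare against. Your moment-method outline is indeed the standard route, and the master inequality and optimization in $p$ are correct.

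There is, however, a genuine gap in your final paragraph. With the estimate $C_p(\alpha) \le p/(2\alpha)$, the optimizing choice is $p^\star = 2\alpha\, t^{2/d}/e$, and the requirement $p^\star \ge 2$ forces $t \ge (e/\alpha)^{d/2}$, \emph{not} merely $t > e^{d/2}$. Since $\alpha \le 1/2$, the former threshold is strictly larger; your sentence ``the hypothesis $t > e^{d/2}$ is what guarantees the optimal $p$ satisfies $p \ge 2$'' is simply false (for $t$ just above $e^{d/2}$ one only gets $p^\star > 2\alpha \le 1$).

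The fix is to split into two regimes. For $t \ge (e/\alpha)^{d/2}$ your optimized argument goes through as written. For $e^{d/2} < t < (e/\alpha)^{d/2}$, take $p = 2$ (i.e.\ Chebyshev), which gives $\Pr[|f| > t\norm{2}{f}] \le t^{-2}$; writing $s = t^{2/d} \in (e, e/\alpha)$, the desired bound $t^{-2} \le \exp(-c\,t^{2/d})$ becomes $\log s \ge \alpha s/e$. The function $s \mapsto \log s - \alpha s/e$ has derivative $1/s - \alpha/e \ge 0$ on $(e, e/\alpha]$, so its minimum on this interval is at $s = e$, where it equals $1 - \alpha > 0$. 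Thus Chebyshev alone handles the intermediate range, and the two pieces together cover all $t > e^{d/2}$. You should also be aware that the uniform inequality $C_p(\alpha) \le p/(2\alpha)$, while consistent with both asymptotic regimes you identified, still needs a careful proof across all $p \ge 2$ and $\alpha \in (0,1/2]$; writing $C_p = \sinh((1-1/p)L)/\sinh(L/p)$ with $L = \log((1-\alpha)/\alpha)$ and checking monotonicity in $1/p$ is one way to do this.
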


\begin{defn}[Bonami-Beckner operator]
For any $\rho \in [0,1]$, the Bonami-Beckner operator $T_{\rho}$ on a probability space $(\calA, \mu)$ is given by its action on any $f : \calA \to \bbR$, as follows,
$$(T_\rho f)(x) = \Ex[f(Y) | X = x]$$
where the conditional distribution of \ $Y$ given $X = x$ is $\rho \delta_x + (1-\rho) \mu$ where $\delta_x$ is the delta measure on $x$. In other words, given $X = x$, $Y$ is obtained by either setting it to $x$ with probability $\rho$ or independently sampling from $\mu$ with probability $(1-\rho)$.

For the product space $(\calA^n, \mu^{\otimes n})$, we define the Bonami-Beckner operator $T_{\rho}$ as, $T_{\rho} = \otimes_{i=1}^n T_{\rho}^{(i)}$, where $T_{\rho}^{(i)}$ is the Bonami-Beckner operator on the $i$-th coordinate $(\calA, \mu)$.
\end{defn}
\fi

\subsection{Maximal Correlation and Witsenhausen's rounding}
\noapx{\label{sec:prelim_maxcorr}}{\label{apx:prelim_maxcorr}}

\ifnum\apx=0
The ``maximal correlation coeffcient'' was first introduced by Hirschfeld \cite{hirschfeld1935connection} and Gebelein \cite{gebelein1941statistische} and then studied by R{\'e}nyi \cite{renyi1959measures}.

\begin{defn}[Maximal correlation] \label{def:max_corr}
Given a joint probability space $(\calA \times \calB, \mu)$, we define the maximal correlation of the joint distribution $\rho(\calA, \calB; \mu)$ as follows,
$$\rho(\calA, \calB; \mu) = \sup \set{\Ex\limits_{(x, y) \sim \mu} [f(x) g(y)] \ \bigg | \ \inmat{f : \calA \to \bbR, & \Ex[f] = \Ex[g]= 0\\ g : \calB \to \bbR, & \Var(f) = \Var(g) = 1}}$$
\end{defn}
\fi

\ifnum\stocapx=2 \else
\noindent Maximal correlation has the following properties which imply necessary conditions for when non-interactive simulation could be possible.

\begin{fact}[Properties of maximal correlation (cf. \cite{bryc2005maximumcorrelation})] \mbox{}
	
\begin{enumerate}
	\item (Tensorization) : For all joint probability spaces $(\calA \times \calB, \mu)$, it is the case that $\rho(\calA^n, \calB^n; \mu^{\otimes n}) = \rho(\calA, \calB; \mu)$.
	\item (Data processing) : For all joint probability spaces $(\calA \times \calB, \mu)$, and any functions $f : \calA \to \calU$ and $g : \calB \to \calV$, it is the case that $\rho(\calA, \calB; \mu) \ge \rho(\calU, \calV; \nu)$, where $\nu$ is the distribution $(f(x), g(y))_{(x,y) \sim \mu}$.
	\item (Lower Semi-Continuous) : If distributions $(\calU \times \calV; \nu_n)$ are such that $\lim_{n \to \infty} \nu_n = \nu$, then $\lim_{n \to \infty} \rho(\calU, \calV; \nu_n) \ge \rho(\calU, \calV; \nu)$.
\end{enumerate}
\end{fact}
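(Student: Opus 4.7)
The plan is to handle the three properties in turn, with the equality in tensorization being the only part that requires real work; for that I would recast maximal correlation spectrally. The direction $\rho(\calA^n,\calB^n;\mu^{\otimes n}) \ge \rho(\calA,\calB;\mu)$ is immediate by lifting an $\eps$-optimal pair $(f^*,g^*)$ for a single copy to the product space via $f(\bx) \defeq f^*(x_1)$ and $g(\by) \defeq g^*(y_1)$. For the reverse, I would introduce the conditional expectation operator $T:L^2(\calB,\mu_B) \to L^2(\calA,\mu_A)$ defined by $(Tg)(x) = \Ex_\mu[g(Y)\mid X=x]$, and observe via the tower property that $\Ex_\mu[f(X)g(Y)] = \inangle{f,Tg}_{\mu_A}$, so $\rho(\calA,\calB;\mu)$ equals the second-largest singular value of $T$ (the top singular value $1$ is attained by the constants, which are excluded by the mean-zero condition). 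The operator associated to $\mu^{\otimes n}$ is $T^{\otimes n}$. Decomposing $L^2(\calA^n,\mu_A^{\otimes n}) = \bigoplus_{S \subseteq [n]} V_S$, where $V_S$ is the tensor product of the mean-zero subspaces on coordinates in $S$ with the constants on the coordinates outside $S$, the restriction of $T^{\otimes n}$ to $V_S$ has operator norm $\rho_0^{|S|}$ where $\rho_0 \defeq \rho(\calA,\calB;\mu) \le 1$. Removing $V_\emptyset$ (the constants) and maximizing over $S \ne \emptyset$ produces the extremum at $|S|=1$, so $\rho(\calA^n,\calB^n;\mu^{\otimes n}) = \rho_0$.

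For data processing, given $\phi:\calA \to \calU$ and $\psi:\calB \to \calV$ inducing the pushforward $\nu$, any admissible pair $(\tilde f,\tilde g)$ for $\rho(\calU,\calV;\nu)$ pulls back to $F \defeq \tilde f \circ \phi$ and $G \defeq \tilde g \circ \psi$ on $\calA$ and $\calB$. Because $\nu$ is the pushforward, the mean, variance, and cross moment are all preserved verbatim, so $(F,G)$ is admissible for $\rho(\calA,\calB;\mu)$ with the same objective value, and taking sup gives the desired inequality. For lower semi-continuity, the candidate set for $\rho(\calU,\calV;\nu)$ is the product of two compact spheres (since $\calU$ and $\calV$ are finite), hence the sup is attained at some $(f^*,g^*)$. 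I would plug these \emph{same} two functions into $\rho(\calU,\calV;\nu_n)$: convergence $\nu_n \to \nu$ forces $\Ex_{\nu_n}[f^*] \to 0$, $\Var_{\nu_n}(f^*) \to 1$, $\Ex_{\nu_n}[f^* g^*] \to \rho(\calU,\calV;\nu)$, and analogously for $g^*$. Re-centering and re-normalizing by the $\nu_n$-moments yields admissible test functions $\hat f_n,\hat g_n$ whose cross moment under $\nu_n$ still tends to $\rho(\calU,\calV;\nu)$, giving $\liminf_n \rho(\calU,\calV;\nu_n) \ge \rho(\calU,\calV;\nu)$.

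The main obstacle is the reverse direction of tensorization, which rests on the Efron--Stein style multilinear decomposition $L^2(\calA^n,\mu_A^{\otimes n}) = \bigoplus_S V_S$ (equivalently, the spectral diagonalization of $T^{\otimes n}$). Without it, one cannot easily rule out a product-space pair $(f,g)$ whose joint dependence on many coordinates somehow beats the single-copy bound; the decomposition converts that worry into the elementary observation that $\rho_0^{|S|} \le \rho_0$ for every $|S| \ge 1$. The data processing and semi-continuity parts are, by contrast, direct definition chases.
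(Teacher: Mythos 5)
The paper states this as a Fact citing \cite{bryc2005maximumcorrelation} and does not supply a proof, so there is no in-paper argument to compare against. Your proposal is correct and follows the standard route: identifying $\rho$ with the second singular value of the conditional-expectation operator $T$, the Efron--Stein (multilinear) decomposition for tensorization, the pullback $(\tilde f,\tilde g)\mapsto(\tilde f\circ\phi,\tilde g\circ\psi)$ for data-processing, and compactness of the admissible set together with continuity of moments in $\nu$ for lower semi-continuity. One small fix: in the tensorization step, $V_S$ sits in $L^2(\calA^n,\mu_A^{\otimes n})$, which is the codomain of $T^{\otimes n}$, not its domain. You should decompose the domain as well, $L^2(\calB^n,\mu_B^{\otimes n}) = \bigoplus_{S} W_S$, and note that $T^{\otimes n}$ carries $W_S$ into $V_S$ with operator norm $\rho_0^{|S|}$; orthogonality of the blocks, Cauchy--Schwarz across blocks, and $\rho_0^{|S|}\le\rho_0$ for $|S|\ge 1$ then give the reverse inequality exactly as you describe. (In the lower-semi-continuity argument, the renormalization also implicitly needs $\Var_{\nu_n}(f^*)>0$ for all large $n$, which holds because it converges to $1$; worth a word.)
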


\begin{cor}[Necessary condition for non-interactive simulation]
Let $(\calA \times \calB, \mu)$ and $(\calU \times \calV, \nu)$ be two probability spaces. If the distribution $\nu$ can be {\em non-interactively simulated} using distribution $\mu$, then it must be the case that $\rho(\calA, \calB; \mu) \ge \rho(\calU, \calV; \nu)$.
\end{cor}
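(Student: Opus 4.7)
The plan is to chain together the three properties of maximal correlation stated just above (tensorization, data processing, and lower semi-continuity), applied to the sequence of simulating functions guaranteed by the definition of non-interactive simulation. So the proof will be a short sequence of inequalities, and there is no real technical obstacle: every tool needed has already been quoted as a fact.

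First I would unpack the hypothesis: by the definition of non-interactive simulation, there exist functions $f_n : \calA^n \to \calU$ and $g_n : \calB^n \to \calV$ such that the distribution $\nu_n$ of $(f_n(\bx), g_n(\by))$ when $(\bx,\by) \sim \mu^{\otimes n}$ satisfies $\dTV(\nu_n, \nu) \to 0$. Next, for each fixed $n$, I would combine data processing applied to the product space $(\calA^n \times \calB^n, \mu^{\otimes n})$ with the processing functions $f_n, g_n$, obtaining
\[
\rho(\calA^n, \calB^n; \mu^{\otimes n}) \;\ge\; \rho(\calU, \calV; \nu_n),
\]
and then tensorization to replace the left-hand side by $\rho(\calA, \calB; \mu)$. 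So for every $n$, $\rho(\calA, \calB; \mu) \ge \rho(\calU, \calV; \nu_n)$.

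Finally, since $\nu_n \to \nu$ in total variation (hence in distribution on the finite set $\calU \times \calV$), lower semi-continuity gives $\liminf_{n \to \infty} \rho(\calU, \calV; \nu_n) \ge \rho(\calU, \calV; \nu)$. Taking the liminf of the inequality $\rho(\calA, \calB; \mu) \ge \rho(\calU, \calV; \nu_n)$ (whose left-hand side is independent of $n$) yields the desired $\rho(\calA, \calB; \mu) \ge \rho(\calU, \calV; \nu)$. The only ``step'' that requires any thought is making sure the lower semi-continuity property is applied in the right direction, which it is, since we have an upper bound on $\rho(\calU, \calV; \nu_n)$ that we want to pass to the limit $\nu$.
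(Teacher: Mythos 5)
Your proof is correct and is exactly the intended derivation: the paper states this as a corollary immediately after the Fact listing tensorization, data processing, and lower semi-continuity, and offers no separate argument, so chaining those three properties in the order you give is precisely what is meant. Your replacement of the paper's informal "$\lim$" in the lower semi-continuity statement by "$\liminf$" is a sensible small tidying, since the limit need not exist a priori.
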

\fi

\ifnum\stocapx=2
Maximal correlation has certain properties which imply necessary conditions for when non-interactive simulation is possible (See Appendix~\ref{apx:prelim_maxcorr} for more details). In addition, using a result of Witsenhausen~\cite{witsenhausen1975sequences}, we have the following theorem,
\else
A simple fact that can be easily verified is that the maximal correlation of the distribution $\DSBS(\rho)$ is $\rho$. And hence if $(\calA \times \calB, \mu)$ can non-interactively simulate $\DSBS(\rho^*)$, then $\rho^* \le \rho(\calA, \calB; \mu)$. In addition, Witsenhausen\cite{witsenhausen1975sequences} showed that any joint probability space $(\calA \times \calB, \mu)$ can simulate $\DSBS(\rho^*)$ for $\rho^* = 1 - \frac{2 \arccos(\rho(\calA, \calB; \mu))}{\pi}$. \fullver{Putting this together we obtain Theorem~\ref{thm:witsenhausen}.}{All together, we have the following theorem,}
\fi

\ifnum\apx=0
\begin{theorem}[Witsenhausen \cite{witsenhausen1975sequences}] \label{thm:witsenhausen}
For any joint probability space $(\calA \times \calB, \mu)$, with $\rho = \rho(\calA, \calB; \mu)$, then the largest $\rho^*$ for which $(\calA \times \calB, \mu)$ can non-interactively simulate $\DSBS(\rho^*)$ is bounded as follows,
$$1 - \frac{2 \arccos(\rho)}{\pi} \quad \le \quad \rho^* \quad \le \quad \rho$$
\end{theorem}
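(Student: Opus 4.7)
The proof splits into two inequalities, and my plan is to handle them separately. For the upper bound $\rho^* \leq \rho$, I would invoke the three properties of maximal correlation stated in the preceding fact. Suppose Alice and Bob can simulate $\DSBS(\rho^*)$ via a sequence of functions $f_n : \calA^n \to \sbit$ and $g_n : \calB^n \to \sbit$, producing distributions $\nu_n = (f_n(\bx), g_n(\by))_{\mu^{\otimes n}}$ with $\dTV(\nu_n, \DSBS(\rho^*)) \to 0$. Tensorization gives $\rho(\calA^n, \calB^n; \mu^{\otimes n}) = \rho$, and data processing then yields $\rho(\sbit, \sbit; \nu_n) \leq \rho$. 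Lower semi-continuity together with the elementary observation that $\rho(\sbit, \sbit; \DSBS(\rho^*)) = \rho^*$ (witnessed by $f(u)=u$, $g(v)=v$) gives $\rho^* \leq \liminf_n \rho(\sbit, \sbit; \nu_n) \leq \rho$.

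For the lower bound, the plan is the classical Central Limit Theorem construction combined with Gaussian sign-rounding. By Definition~\ref{def:max_corr}, for every $\eta > 0$ I can pick functions $f : \calA \to \bbR$ and $g : \calB \to \bbR$ with $\Ex[f] = \Ex[g] = 0$, $\Var(f) = \Var(g) = 1$, and $r := \Ex_\mu[f(X)g(Y)] \geq \rho - \eta$. Alice computes $F_n = \frac{1}{\sqrt{n}} \sum_{i=1}^n f(X_i)$ and outputs $U_n = \sign(F_n)$; symmetrically, Bob computes $G_n = \frac{1}{\sqrt{n}} \sum_{i=1}^n g(Y_i)$ and outputs $V_n = \sign(G_n)$. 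The bivariate CLT applied to the i.i.d.\ pairs $(f(X_i), g(Y_i))$ shows that $(F_n, G_n)$ converges in distribution to a pair of jointly standard Gaussians $(Z_1, Z_2)$ with correlation $r$. Since $\sign$ is continuous off a Lebesgue-null set and the limit law is continuous (having a full-support Gaussian density), the Portmanteau theorem upgrades this to convergence of $(U_n, V_n)$ to $(\sign(Z_1), \sign(Z_2))$ in distribution, which for $\sbit$-valued random variables is equivalent to convergence in total variation.

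The final analytic ingredient is Sheppard's formula (the Grothendieck identity), which states that for jointly standard Gaussians $(Z_1, Z_2)$ with correlation $r$ one has $\Ex[\sign(Z_1) \sign(Z_2)] = 1 - \frac{2\arccos(r)}{\pi}$, together with the symmetry fact that $\sign(Z_i)$ is uniform on $\sbit$. Thus the limiting distribution of $(U_n, V_n)$ is exactly $\DSBS\!\left(1 - \frac{2\arccos(r)}{\pi}\right)$, so for every $\delta > 0$ there exists $n$ for which $\dTV\!\left((U_n, V_n), \DSBS\!\left(1 - \tfrac{2\arccos(r)}{\pi}\right)\right) \leq \delta$. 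Letting $\eta \to 0$ (hence $r \to \rho$) and using the continuity of $\arccos$, this shows that $(\calA \times \calB, \mu)$ non-interactively simulates $\DSBS(\rho^*)$ for every $\rho^* < 1 - \frac{2\arccos(\rho)}{\pi}$, and a standard diagonalization over a sequence of $\delta_k, \eta_k \to 0$ yields $\rho^* \geq 1 - \frac{2\arccos(\rho)}{\pi}$ in the limit.

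The main subtlety I foresee is not conceptual but quantitative: the CLT gives only asymptotic convergence, so the argument establishes existence of $n$ rather than any explicit rate. This is fine for Theorem~\ref{thm:witsenhausen} as stated, but it is worth noting that obtaining an effective bound on $n$ (which is what the rest of the paper needs) would require a Berry--Esseen type estimate. The only other minor care is that if $(\calA, \mu_A)$ or $(\calB, \mu_B)$ is degenerate so that $\rho = 0$, the formula $1 - \frac{2\arccos(0)}{\pi} = 0$ is consistent and both bounds collapse to $\rho^* = 0$, handled by the trivial constant protocol.
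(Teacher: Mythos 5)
Your proposal is correct and takes essentially the same approach as the paper: the upper bound follows from the tensorization, data-processing, and lower semi-continuity properties of maximal correlation (as the paper notes just before the theorem statement), and the lower bound is Witsenhausen's CLT-plus-sign-rounding construction, which the paper invokes here by citation and makes quantitative later in Lemma~\ref{lem:witsenhausen} via a two-dimensional Berry--Esseen bound. Your observation that the qualitative CLT suffices for this theorem but an effective version is needed elsewhere in the paper is exactly right.
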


Note that, maximal correlation is an easily computable quantity, namely, it is the second largest singular value of the Markov operator\footnote{The Markov operator corresponding to $(\calA \times \calB, \mu)$ is a $|\calA| \times |\calB|$ matrix $T$ which is given by $T(x,y) = \mu(y | X = x)$.} corresponding to $(\calA \times \calB, \mu)$.

\begin{remark}
The astute reader might have noticed a strong resemblance between Theorem~\ref{thm:witsenhausen} and the random hyperplane rounding of Goemans-Williamson \cite{goemans1995maxcut} used in the approximation algorithm for MAX-CUT. This is not a coincidence and indeed the bounds in Theorem~\ref{thm:witsenhausen} come from morally the same technique as in \cite{goemans1995maxcut}.
\end{remark}

\noindent In this context we will use the following shorthand for $\rho$-correlated 2-dimensional gaussian.

\begin{definition}[2-dimensional Gaussian] \label{defn:2d_gauss}
Let $\calG(\rho)$ denote a 2-dimensional gaussian distribution with mean $\inbmat{0 \\ 0}$ and covariance matrix $\inbmat{1 & \rho \\ \rho & 1}$.
\end{definition}
\fi

\ifnum\stocapx=2 \else
\subsection{2-dimensional Berry-Esseen theorem}

We will need the following 2-dimensional Berry-Esseen theorem. The proof is very similar to Theorem 68 of \cite{matulef2010testing}. The main difference is that in our case the random variables are not necessarily binary-valued, but they do have finite support. We include the proof for completeness.

\begin{lem}[2-dimensional Berry-Esseen] \label{lem:2d_CLT}
	Let $(X,Y)$ be any pair of correlated real-valued random variables with finite support such that, $\Ex[X] = \Ex[Y] = 0$ and $\Var(X) = \Var(Y) = 1$ and $\Ex[XY] = \rho$. For every $\zeta > 0$, there exists $w \defeq w((X,Y), \zeta) \in \bbN$, such that for every $a, b \in \bbR$, it is the case that,
	$$\inabs{\ \Pr[\overline{X} \le a, \overline{Y} \le b] \ - \ \Pr[\calG_1 \le a, \calG_2 \le b] \ } \le \zeta$$
	where $\overline{X} = \frac{\sum_{i=1}^w X_i}{\sqrt{w}}$, $\overline{Y} = \frac{\sum_{i=1}^w Y_i}{\sqrt{w}}$ (with $(X_i, Y_i)$ draw i.i.d. from $(X,Y)$) and $(\calG_1, \calG_2) \sim \calG(\rho)$.
	
	In particular, one may take $w = O\inparen{\frac{1+\rho}{\alpha \cdot (1-\rho)^{3} \cdot \zeta^2}}$, where $\alpha$ is the minimum non-zero probability in the distribution $(X,Y)$.
\end{lem}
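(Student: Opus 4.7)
The plan is to reduce the claim to a standard multivariate Berry--Esseen theorem for convex sets (e.g., the bounds of Bentkus or G\"otze). Applied to the i.i.d.\ $\bbR^2$-valued vectors $Z_i = (X_i, Y_i)$ with mean zero and covariance $\Sigma = \inbmat{1 & \rho \\ \rho & 1}$, such a theorem yields
\[
\sup_{C\ \text{convex}} \bigl| \Pr\!\insquare{\textstyle \sum_{i=1}^w Z_i / \sqrt{w} \in C} - \Pr[N(0,\Sigma) \in C] \bigr| \;\le\; \frac{K \cdot \Ex\!\insquare{\|\Sigma^{-1/2} Z\|^3}}{\sqrt{w}},
\]
for a universal constant $K$. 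The rectangle $\set{(s,t) : s \le a,\ t \le b\,}$ is convex, so specializing to it gives the required Kolmogorov-type bound between $(\overline{X}, \overline{Y})$ and $\calG(\rho)$, and all that remains is to estimate $\beta_3 \defeq \Ex[\|\Sigma^{-1/2} Z\|^3]$ in terms of $\alpha$ and $\rho$.

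To bound $\beta_3$, I would first observe that $\Ex[X^2] = \Ex[Y^2] = 1$ together with the fact that every atom of $(X,Y)$ has mass at least $\alpha$ force $X^2, Y^2 \le 1/\alpha$ pointwise. Hence $\|Z\|^2 \le 2/\alpha$ almost surely and $\Ex[\|Z\|^3] \le \Ex[\|Z\|^2] \cdot \sqrt{2/\alpha} = O(\alpha^{-1/2})$. Second, $\Sigma$ has eigenvalues $1 \pm \rho$, so $\|\Sigma^{-1/2}\|_{\mathrm{op}}^2 = (1-\rho)^{-1}$, giving $\beta_3 \le \|\Sigma^{-1/2}\|_{\mathrm{op}}^3 \cdot \Ex[\|Z\|^3] = O\bigl(\alpha^{-1/2}(1-\rho)^{-3/2}\bigr)$. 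Choosing $w$ so that $K \beta_3/\sqrt{w} \le \zeta$ then yields $w = O\bigl((1+\rho)/(\alpha (1-\rho)^3 \zeta^2)\bigr)$ after absorbing the harmless factor of $(1+\rho) \le 2$ coming from $\mathrm{tr}(\Sigma)$.

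The only delicate point is the near-degenerate regime $\rho \to 1$, where $\Sigma$ becomes singular; the $(1-\rho)^{-3}$ factor in the final bound for $w$ arises precisely from the blow-up of $\|\Sigma^{-1/2}\|_{\mathrm{op}}$. Since the paper hints that the proof is close to Matulef et al.'s Theorem~68, I would alternatively carry out a self-contained Lindeberg exchange: approximate $\indicator[s \le a,\ t \le b]$ by a smooth bivariate test function $\varphi$ with bounded third derivatives, exchange the $(X_i, Y_i)$ with matched Gaussian copies one at a time using a third-order Taylor expansion (since the first two moments are already matched), and control the smoothing error via anti-concentration of $\calG(\rho)$ on an $O(\zeta)$-width strip around $\set{s = a} \cup \set{t = b}$. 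The main obstacle in either route is keeping track of the quantitative dependence on $1-\rho$; once one commits to using the operator norm of $\Sigma^{-1/2}$ to normalize, both the Bentkus-style and Lindeberg-style arguments produce exactly the stated rate.
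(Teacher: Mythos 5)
Your proposal is correct and parallel in spirit to the paper's proof, but it routes through a slightly different classical theorem. The paper invokes Bhattacharya--Rao (Corollary 16.3, as restated by Khot--Kindler--Mossel--O'Donnell), which bounds $|Q_w(A) - \Phi_{0,V}(A)|$ for \emph{general Borel} $A$ by a Lyapunov term $\eta = C\lambda^{-3/2}\rho_3 w^{-1/2}$ \emph{plus} a boundary term $B(A) = 2\sup_y \Phi_{0,V}((\partial A)^{\eta'} + y)$ with $\eta' = \Lambda^{1/2}\eta$; one must then separately check that the Gaussian measure of an $\eta'$-thickening of the boundary of the quadrant $(-\infty,a]\times(-\infty,b]$ is $O(\eta')$, which is where the $\Lambda^{1/2} = (1+\rho)^{1/2}$ factor enters their final bound on $w$. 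You instead invoke a Bentkus/G\"otze Berry--Esseen theorem that is \emph{already uniform over convex sets} after pre-whitening by $\Sigma^{-1/2}$; this bakes the anti-concentration near the boundary into the theorem, so no separate boundary estimate is needed. Your third-moment bookkeeping ($\|Z\|^2 \le 2/\alpha$ from the $\alpha$-lower bound on atom masses and $\Var=1$; $\|\Sigma^{-1/2}\|_{\mathrm{op}} = (1-\rho)^{-1/2}$) matches the paper's $\rho_3 \le 1/\sqrt\alpha$ and $\lambda^{-3/2} = (1-\rho)^{-3/2}$, and both routes land on the rate $w = O\bigl(1/(\alpha(1-\rho)^3\zeta^2)\bigr)$; the extra $(1+\rho)$ in the stated lemma is $\le 2$ and hence cosmetic, as you note. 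The one precaution your write-up should make explicit is that the convex set must be pushed forward correctly under the whitening map --- i.e., one applies the convex-set theorem to $\Sigma^{-1/2}(\text{quadrant})$, which is still convex --- but this is routine. Net effect: same rate, slightly cleaner argument, at the cost of citing a somewhat heavier off-the-shelf theorem.
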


In order to prove Lemma~\ref{lem:2d_CLT}, we need the following statement that appears as Theorem 16 in \cite{khot2007optimal} and as Corollary 16.3 in \cite{bhattacharya1986normal}.

\begin{theorem}\label{thm:gen_Berry_Esseen}
	Let $Z_1, \dots, Z_w$ be independent random variables taking values in $\mathbb{R}^k$ and satisfying:
	\begin{itemize}
		\item $\Ex[Z_j]$ is the all-zero vector for every $j \in \{1,\dots,w\}$. 
		\item $\sum_{j=1}^w \Cov[Z_j]/w = V$ where $\Cov$ denotes the covariance matrix.
		\item $\lambda$ is the smallest eigenvalue of $V$ and $\Lambda$ is the largest eigenvalue of $V$.
		\item $\sum_{j=1}^w \Ex\insquare{\norm{}{Z_j}^3}/w = \rho_3 < \infty$.
	\end{itemize}
	Let $Q_w$ denote the distribution of $(Z_1+\dots+Z_w)/\sqrt{w}$, let $\Phi_{0,V}$ denote the distribution of the $k$-dimensional Gaussian with mean $0$ and covariance matrix $V$, and let $\eta = C \lambda^{-3/2} \rho_3 w^{-1/2}$, where $C$ is a certain universal constant. Then, for any Borel set $A$,
	$$\inabs{Q_w(A) - \Phi_{0,V}(A)} \le \eta + B(A)$$
	where $B(A)$ is the following measure of the boundary of $A$: $B(A) = 2\sup_{y \in \mathbb{R}^k} \Phi_{0,V}((\partial A)^{\eta'} + y)$, $\eta' = \Lambda^{1/2} \eta$ and $(\partial A)^{\eta'}$ denotes the set of points within distance $\eta'$ of the topological boundary of $A$.
\end{theorem}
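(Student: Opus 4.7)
The plan is to follow the classical characteristic function + smoothing strategy from the Bhattacharya--Rao book. First, I would whiten the problem by the linear change of variables $Z_j \mapsto V^{-1/2} Z_j$, which turns the target into the standard $k$-dimensional Gaussian. The smallest eigenvalue $\lambda$ enters here because $\|V^{-1/2}\|_{op} \le \lambda^{-1/2}$, so whitening inflates the third moment by a factor $\lambda^{-3/2}$; conversely, translating the final boundary statement back to the original coordinates inflates distances by $\|V^{1/2}\|_{op} \le \Lambda^{1/2}$, which explains the definition $\eta' = \Lambda^{1/2} \eta$.

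Second, I would invoke a multivariate smoothing inequality to reduce matters to a Fourier estimate. For a smooth probability kernel $K_\eta$ with spatial scale $\eta'$ and Fourier transform supported in the ball of radius $1/\eta'$,
$$ \inabs{Q_w(A) - \Phi_{0,V}(A)} \ \le \ \sup_{y} \inabs{(Q_w - \Phi_{0,V}) * K_\eta(A + y)} \ + \ 2 \sup_{y} \Phi_{0,V}\inparen{(\partial A)^{\eta'} + y}. $$
The second term is exactly $B(A)$: it is the price paid for convolving with $K_\eta$, namely the Gaussian mass that can be displaced across $\partial A$ by the kernel. The first term is now a smoothed Fourier difference. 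Writing $S_w = (Z_1 + \dots + Z_w)/\sqrt{w}$, its characteristic function factors as $\widehat{Q}_w(t) = \prod_j \widehat{\mu}_j(t/\sqrt{w})$, and Taylor expanding each factor to third order yields
$$ \widehat{\mu}_j(t/\sqrt{w}) \ = \ 1 - \frac{1}{2w} t^{\top} \Cov[Z_j]\, t + R_j(t), \qquad \inabs{R_j(t)} \ \le \ \frac{\Ex[\|Z_j\|^3]}{6 w^{3/2}} \|t\|^3. $$
Multiplying over $j$, using $\sum_j \Cov[Z_j]/w = V$, and comparing with $\exp(-\tfrac{1}{2} t^\top V t)$, I would obtain an estimate of the form $\inabs{\widehat{Q}_w(t) - \widehat{\Phi}_{0,V}(t)} \le C' \rho_3 \|t\|^3 w^{-1/2} \exp(-\tfrac{\lambda}{4} \|t\|^2)$ on the regime $\|t\| \lesssim \sqrt{w}$. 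Since the Gaussian factor concentrates at scale $\|t\| \asymp \lambda^{-1/2}$, integrating the above against $\widehat{K_\eta}$ gives a bound of order $\rho_3 \lambda^{-3/2} w^{-1/2}$, which is exactly $\eta$.

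The main obstacle, and the reason this result is invoked as a black box rather than rederived, is the construction of the smoothing kernel $K_\eta$ with three simultaneous properties: (a) compactly supported Fourier transform, to truncate the $t$-integration in the inversion formula; (b) enough spatial concentration that the smoothing error is controlled by the Gaussian mass of an \emph{intrinsic} $\eta'$-neighborhood of $\partial A$ rather than a Lebesgue neighborhood; and (c) universal constants that absorb cleanly into $C$. This is the content of Bhattacharya--Rao's multivariate smoothing lemma; producing it directly would require a careful analysis of truncated Gaussians or Fej\'er-type kernels in $\bbR^k$, which I would omit in favor of citing Corollary~16.3 of that book and focusing only on the third-moment / Fourier-tail step above.
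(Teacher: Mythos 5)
The paper does not prove this theorem. It is stated purely as a citation --- to Theorem~16 of \cite{khot2007optimal} and Corollary~16.3 of \cite{bhattacharya1986normal} --- and then applied directly as a black box in the proof of Lemma~\ref{lem:2d_CLT}. So there is no in-paper proof for your sketch to be compared against, and your decision to defer the multivariate smoothing lemma to the same reference puts you on essentially the same footing as the authors.

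That said, your outline of the characteristic-function-plus-smoothing route is a faithful summary of the standard Bhattacharya--Rao argument, and you correctly locate both the technical bottleneck (the smoothing kernel with compactly supported Fourier transform and the associated boundary-mass accounting) and the origin of the two eigenvalue factors: the $\lambda^{-3/2}$ comes from $\|V^{-1/2}\|_{\mathrm{op}}^3 \le \lambda^{-3/2}$ inflating the third moment under whitening, and the $\Lambda^{1/2}$ from $\|V^{1/2}\|_{\mathrm{op}} \le \Lambda^{1/2}$ translating the boundary neighborhood back to the original coordinates.

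One bookkeeping slip worth flagging: after whitening by $V^{-1/2}$, the (averaged) covariance is the identity, so the dominating Gaussian in the Fourier estimate should be $\exp(-\|t\|^2/4)$ with $\lambda = 1$, and the $\lambda^{-3/2}$ is then already absorbed into the whitened third moment. Your displayed Fourier bound retains $\exp(-\lambda\|t\|^2/4)$, which is an \emph{unwhitened}-coordinate statement; if you also simultaneously inflate $\rho_3$ by $\lambda^{-3/2}$ as in your whitening step, you would double-count the eigenvalue dependence upon integrating in $t$ (the integral of $\|t\|^3 \exp(-\lambda\|t\|^2/4)$ already produces a power of $\lambda$). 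The two coordinate systems should not be mixed within the same estimate. This is a cosmetic inconsistency in the sketch and does not change the final bound of order $\rho_3\lambda^{-3/2}w^{-1/2}$.
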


\begin{proofof}{Lemma~\ref{lem:2d_CLT}}
	We apply Theorem~\ref{thm:gen_Berry_Esseen} with $k=2$. Let $Z = (X, Y)$, and hence we have that,
	$$\Ex[Z] = \inbmat{0 \\ 0} \quad \sAND \quad \Cov[Z] = \inbmat{1 & \rho \\ \rho & 1}$$
	Let $Z_i = (X_i, Y_i)$. Since all $Z_i$ are i.i.d. distributed according to $Z$, we have $V = \sum_{j=1}^w \Cov[(X_j,Y_j)]/w$ is also $\inbmat{1 & \rho \\ \rho & 1}$. It follows that the smallest and largest eigenvalues of $V$ are $\lambda = 1-\rho$ and $\Lambda = 1+\rho$ respectively. Moreover, since the underlying distribution has finite support, we have that,
	$$\sum_{j=1}^w \frac{\Ex\insquare{\norm{}{Z_j}^3}}{w} = \Ex\insquare{\norm{}{Z}^3} < \max \norm{}{Z} \cdot \Ex\insquare{\norm{}{Z}^2} \le 1/\sqrt{\alpha}$$ (where $\alpha$ is the smallest atom in the distribution $(X,Y)$). Thus, we get $\rho_{3} \le 1/\sqrt{\alpha}$. Hence, $\eta = O((1-\rho)^{-3/2}\alpha^{-1/2} w^{-1/2})$. As in \cite{khot2007optimal}, one can check that the topological boundary of any set of the form $(-\infty,a] \times (-\infty,b]$ is $O(\eta')$, where $\eta' = (1+\rho)^{1/2} \eta$. Thus, from Lemma~\ref{lem:2d_CLT} it follows by choosing $w$ sufficiently large so that $O\big((1+(1+\rho)^{1/2})(1-\rho)^{-3/2} \alpha^{-1/2} w^{-1/2}\big) \le \zeta$.
	
	In particular it suffices to choose $w = O\inparen{\frac{(1+(1+\rho)^{1/2})^2}{\alpha \cdot (1-\rho)^{3} \cdot \zeta^2}} = O\inparen{\frac{1+\rho}{\alpha \cdot (1-\rho)^{3} \cdot \zeta^2}}$.
\end{proofof}
\fi
\ifnum\apx=0
\section{Main Technical Lemma and Overview} \label{sec:overview}

In this section we state the main technical lemma which will be used to solve $\ABMIP$. We also give a high level overview of the proof techniques.

\begin{theorem}\label{thm:main-lemma}
Given any joint probability space $(\calA \times \calB, \mu)$ and any $\delta > 0$, there exists $n_0 = n_0((\calA \times \calB, \mu), \delta)$ such that for any $n$ and any functions $f : \calA^n \to [-1,1]$ and $g : \calB^n \to [-1,1]$, there exist functions $\wtilde{f} : \calA^{n_0} \to [-1,1]$ and $\wtilde{g} : \calB^{n_0} \to [-1,1]$ such that $\inabs{\Ex[\wtilde{f}] - \Ex[f]} \le \delta/3$, $\inabs{\Ex[\wtilde{g}] - \Ex[g]} \le \delta/3$ and
$$\Ex_{(\bx, \by) \sim \mu^{\otimes n_0}} \insquare{\wtilde{f}(\bx) \cdot \wtilde{g}(\by)} \quad \ge \quad \Ex_{(\bx, \by) \sim \mu^{\otimes n}} \insquare{f(\bx) \cdot g(\by)} - \delta$$

\noindent Most importantly, $n_0$ is a computable function in the parameters of the problem. In particular, one may take,
$$n_0 = \exp\inparen{\poly\inparen{\frac{1}{\delta}, \ \frac{1}{1-\rho}, \ \log\inparen{\frac{1}{\alpha}}}}$$
where $\rho \defeq \rho(\calA, \calB; \mu)$ is the maximal correlation of $(\calA \times \calB, \mu)$ and $\alpha \defeq \alpha(\mu)$ is the minimum non-zero probability in $\mu$.
\end{theorem}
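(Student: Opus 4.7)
My plan is to sandwich the quantity $\Ex_{\mu^{\otimes n}}[f g]$ between a one-dimensional Gaussian correlation problem (upper bound) and a construction on $n_0$ coordinates realizing this Gaussian value (lower bound), in three stages mirroring Sections~\ref{sec:smoothing}--\ref{sec:gaussians}. I would start with \emph{smoothing}: set $\bar f := T_{1-\gamma} f$ and $\bar g := T_{1-\gamma} g$. In the product Fourier bases adapted to the singular value decomposition of the Markov operator of $\mu$, the character cross-correlations between the $\calA$- and $\calB$-side basis vanish unless the two degree sequences coincide, in which case they are bounded in absolute value by $\rho_0^{|\bsigma|}$, where $\rho_0 \defeq \rho(\calA,\calB;\mu)$. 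Hence by Cauchy--Schwarz the Fourier mass of $f,g$ above degree $d$ contributes at most $\rho_0^d$ to $\Ex[fg]$; choosing $d = O\inparen{(1-\rho_0)^{-1}\log(1/\delta)}$ and $\gamma = O(\delta/d)$ yields $|\Ex[\bar f \bar g] - \Ex[fg]| \le \delta/10$ with means unchanged, and the Fourier mass of $\bar f,\bar g$ essentially concentrated on degree $\le d$.

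Second, \emph{regularity and invariance.} I would apply a regularity lemma in the style of \cite{diakonikolas2010regularity} to the effectively degree-$d$ pair $(\bar f,\bar g)$: iteratively pull into a set $H\subseteq[n]$ every coordinate of individual influence above $\tau$ in either function. Using Lemma~\ref{lem:exp_inf} and the bound $\Inf(\bar f)\le d\cdot \Var(\bar f)$, the procedure terminates with $|H|\le h(d,\tau)$ explicit, and for a $(1-\delta/10)$-fraction of joint restrictions $(\xi,\zeta)$ drawn from the marginal of $\mu$ on the coordinates $H$, every individual influence of $\bar f_\xi,\bar g_\zeta$ is below $\tau$, while $\Ex_{(\xi,\zeta)}[\Ex[\bar f_\xi\bar g_\zeta]] = \Ex[\bar f\bar g]$ by construction. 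For each good restriction, the invariance principle of \cite{mossel2005noise, mossel2010gaussianbounds} equates (up to error $\eta(d,\tau)\le\delta/10$) the joint distribution of $(\bar f_\xi(\bx_T),\bar g_\zeta(\by_T))$ on $\mu^{\otimes|T|}$ with that of their Hermite lifts on $\calG(\rho_0)^{\otimes|T|}$. Borell-type rearrangement in Gaussian space then reduces the resulting Gaussian optimum over $[-1,1]$-valued functions with prescribed means to a one-dimensional optimum $\Gamma(\mu_1,\mu_2;\rho_0)$ attained by thresholds on a single $\rho_0$-correlated Gaussian pair; averaging over restrictions and using concavity of $\Gamma$ in its mean arguments yields $\Ex[fg] \le \Gamma(\Ex[f],\Ex[g];\rho_0) + \delta/2$.

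Third, \emph{Gaussian realization.} A top singular vector of the Markov operator of $\mu$ gives mean-zero unit-variance $X_*\colon\calA\to\bbR$ and $Y_*\colon\calB\to\bbR$ with $\Ex[X_*Y_*]=\rho_0$. Applying Lemma~\ref{lem:2d_CLT} to $w$ i.i.d.\ copies $(X_*(\bx_i),Y_*(\by_i))_{i=1}^w$, the normalized sums $(\bar X,\bar Y)$ have joint law within $\delta/10$ in CDF of $\calG(\rho_0)$ for an explicit $w=w(\mu,\delta)$. Defining $\wtilde f(\bx) := F^\ast(\bar X(\bx))$ and $\wtilde g(\by) := G^\ast(\bar Y(\by))$ with the Gaussian-optimal thresholds $F^\ast,G^\ast$ from the previous stage then yields $|\Ex[\wtilde f]-\Ex[f]|\le\delta/3$, $|\Ex[\wtilde g]-\Ex[g]|\le\delta/3$, and $\Ex[\wtilde f\wtilde g]\ge\Gamma(\Ex[f],\Ex[g];\rho_0)-\delta/2\ge\Ex[fg]-\delta$, with $n_0 := w$.

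The hardest part will be calibrating the parameter cascade $(\gamma,d,\tau,|H|,w)$ so that the total error remains at most $\delta$ while keeping $n_0$ explicit. In particular, the invariance error in the second stage shrinks only when $\tau$ is small relative to the smoothed degree $d$, which in turn forces $|H|$ to grow rapidly and the accuracy parameter in the third stage to be correspondingly small. It is this cascade that produces the stated single-exponential dependence of $n_0$ on $1/\delta$, $1/(1-\rho_0)$, and $\log(1/\alpha)$, and it is the most delicate part of the argument.
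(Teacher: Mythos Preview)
Your overall architecture mirrors the paper's, but the second stage has a genuine gap: $\overline{\Gamma}_{\rho_0}(\mu_1,\mu_2)$ is \emph{not} concave in $(\mu_1,\mu_2)$, so the Jensen step (``averaging over restrictions and using concavity of $\Gamma$'') is invalid. Concretely, $\overline{\Gamma}_{\rho_0}(1,1)=\overline{\Gamma}_{\rho_0}(-1,-1)=1$ while $\overline{\Gamma}_{\rho_0}(0,0)=\tfrac{2}{\pi}\arcsin(\rho_0)<1$, so along the diagonal the function lies strictly below its chord. In fact the conclusion $\Ex[fg]\le\overline{\Gamma}_{\rho_0}(\Ex[f],\Ex[g])+\delta/2$ is simply false: take the source $\mu=\DSBS(\rho_0)$ and let $f=g$ be the first-coordinate dictator; then $\Ex[f]=\Ex[g]=0$ and $\Ex[fg]=\rho_0$, whereas $\overline{\Gamma}_{\rho_0}(0,0)=\tfrac{2}{\pi}\arcsin(\rho_0)<\rho_0$. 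Put differently, your bound would make Witsenhausen's lower bound in Theorem~\ref{thm:witsenhausen} tight for every source, which it is not.

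The repair --- and this is exactly what the paper does --- is to \emph{retain} the head coordinates $H$ rather than average them out. After regularity and invariance you have, for each good restriction $(\bx_H,\by_H)$, the bound $\Ex[\bar f_{\bx_H}\bar g_{\by_H}]\le\overline{\Gamma}_{\rho_0}\big(\nu_1(\bx_H),\nu_2(\by_H)\big)+\eps$ where $\nu_1(\bx_H)=\Ex[\bar f_{\bx_H}]$ and $\nu_2(\by_H)=\Ex[\bar g_{\by_H}]$. Rather than applying Jensen, define $f_2(\bx_H,r):=\overline{P}_{\nu_1(\bx_H)}(r)$ on $\calA^h\times\bbR$ and $g_2(\by_H,r):=\overline{Q}_{\nu_2(\by_H)}(r)$ on $\calB^h\times\bbR$; then $\Ex_{(\bx_H,\by_H)\sim\mu^{\otimes h},\,(r_A,r_B)\sim\calG(\rho_0)}[f_2\,g_2]$ reproduces the average of the $\overline{\Gamma}_{\rho_0}$ values exactly, with the marginals $\Ex[f_2]=\Ex[f]$ and $\Ex[g_2]=\Ex[g]$ preserved by construction --- no concavity needed. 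Your third-stage Berry--Esseen construction then simulates the single Gaussian pair $(r_A,r_B)$ with $w$ fresh samples from $\mu$, so the final functions act on $n_0=h+w$ coordinates, not $n_0=w$. With this correction the rest of your parameter cascade is essentially right, and it is indeed the growth of $h$ (driven by $d$ and $\tau$) that dominates the bound on $n_0$.
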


\subsection{Proof overview}\label{subsec:pf_overv}

The proof of Theorem~\ref{thm:main-lemma} goes through a series of intermediate steps, which we describe at a high level here. At each step we lose only a small amount in the correlation. The first three steps preserve the marginals $\bbE [f]$ and $\bbE[g]$ exactly, while the fourth step incurs a small additive error in the same. The full proof is presented in Section~\ref{sec:together}.

\begin{itemize}
\item [{\bf Step 1:}] {\em Smoothing of strategies.} We transform $f$ and $g$ into functions $f_1$, $g_1$ such that $f_1$ and $g_1$ have `most' of their Fourier mass concentrated on terms of degree at most $d$, where $d$ is a constant that depends on the distribution $(\calA \times \calB, \mu)$ and a tolerance parameter, but is independent of $n$. This transformation is described in Section~\ref{sec:smoothing}.

\item [{\bf Step 2:}] {\em Regularity lemma for low degree functions.} We first prove a {\em regularity lemma} (similar to the one in \cite{diakonikolas2010regularity}) which roughly shows that for any degree-$d$ polynomial, there exists a $h$-sized subset of variables, such that under a random restriction of the variables in this subset, the resulting function on the remaining variables has low individual influences (i.e. $\le \tau$). Note that $h$ will be a constant depending on the degree $d$ and $\tau$, but will be independent of $n$.

We apply this regularity lemma on the degree-$d$ truncated versions of both $f_1$ and $g_1$ obtained from Step 1. We take the union of the subsets obtained for $f_1$ and $g_1$. We show that with high probability over random restrictions of the variables in this subset, the resulting restriction of $f_1$ and $g_1$ on the remaining variables has low individual influences. This step is described in Section~\ref{sec:reglem}.

Note that this step does not change the functions $f_1$ and $g_1$ at all, but we gain some structural knowledge about the same.

\item [{\bf Step 3:}] {\em Correlation bounds for low influence functions.} We use results about correlation bounds for low influential functions \cite{mossel2005noise, mossel2010gaussianbounds}. Intuitively, these results suggest that if the functions $f_1$ and $g_1$ were low influential functions to begin with, then the correlation $\bbE[f_1(\bx) g_1(\by)]$ will not be `much' better than the correlation between certain threshold functions applied on correlated gaussians.

We apply the above correlation bounds for the low influential functions obtained by restrictions of the small subset of variables in $f_1$ and $g_1$, to obtain functions $f_2 : \calA^h \times \bbR \to [-1,1]$ and $g_2 : \calB^h \times \bbR \to [-1,1]$, where Alice and Bob together have access to $h$ samples from $(\calA \times \calB, \mu)$ and a single copy of $\rho$-correlated gaussians, that is, $\calG(\rho)$ (see Defn. \ref{defn:2d_gauss}). Here the correlation $\rho$ is same as the maximal correlation $\rho(\calA, \calB; \mu)$. This step is described in Section~\ref{sec:correlation_bds}.

\item [{\bf Step 4:}] {\em Simulating correlated gaussians.} Finally, Alice and Bob can non-interactively simulate the distribution $\calG(\rho)$ using constantly many samples from $(\calA \times \calB, \mu)$. This is done using the technique of Witsenhausen \cite{witsenhausen1975sequences}, which primarily uses a 2-dimensional central limit theorem. This step is described in Section~\ref{sec:gaussians}.

\end{itemize}
\fi  

\ifnum\stocapx=2 \else
\ifnum\stoc=1
\section{Decidability of $\ANIS$} \label{apx:proof-anis}
\else
\subsection{Decidability of $\ANIS$} \label{sec:proof-anis}
\fi

Assuming Theorem~\ref{thm:main-lemma}, we now give the algorithm as described in Theorem~\ref{thm:main}.\\

\begin{proofof}{Theorem~\ref{thm:main}}
We have from Proposition~\ref{prop:ANIS_ABMIP_equiv} that, in order to decide $\ANIS((\calA \times \calB, \mu), \rho, \delta)$, it suffices to decide $\ABMIP((\calA \times \calB, \mu), \rho, 2\delta)$.

If we were in the YES case of $\ABMIP((\calA \times \calB, \mu), \rho, 2\delta)$, then we have that there exists an $n$ and functions $f : \calA^n \to [-1,1]$ and $g : \calB^n \to [-1,1]$, such that $|\Ex[f(\bx)]| \le 2\delta$, $|\Ex[g(\by)]| \le 2\delta$ and $\Ex[f(\bx) \cdot g(\by)] \ge \rho - 2\delta$. Using Theorem~\ref{thm:main-lemma}, with parameter $\delta$, we get that there exists functions $\wtilde{f} : \calA^{n_0} \to [-1,1]$ and $\wtilde{g} : \calB^{n_0} \to [-1,1]$ such that $\inabs{\Ex[\wtilde{f}(\bx)]} \le 8\delta/3$, $\inabs{\Ex[\wtilde{g}(\by)]} \le 8\delta/3$ and $\Ex[\wtilde{f}(\bx) \cdot \wtilde{g}(\by)] \ge \rho - 3\delta$.

In the NO case of $\ABMIP((\calA \times \calB, \mu), \rho, 2\delta)$, we have that for all $n$, in particular for $n = n_0$, there do not exist functions $f : \calA^n \to [-1,1]$ and $g : \calB^n \to [-1,1]$ such that $|\Ex[f(\bx)]| \le 4\delta$, $|\Ex[g(\by)]| \le 4\delta$ and $\Ex[f(\bx) \cdot g(\by)] \ge \rho - 8\delta$.
 
This naturally gives us a brute force algorithm: Analyze all possible functions $\wtilde{f} : \calA^{n_0} \to [-1,1]$ and $\wtilde{g} : \calB^{n_0} \to [-1,1]$ to check if there exist functions satisfying $\inabs{\Ex[\wtilde{f}(\bx)]} \le 8\delta/3$, $\inabs{\Ex[\wtilde{g}(\by)]} \le 8\delta/3$ and $\Ex[\wtilde{f}(\bx) \cdot \wtilde{g}(\by)] \ge \rho - 3\delta$. For purposes of our algorithm we can treat the range $[-1,1]$ as a discrete set $R \defeq \setdef{k\delta^2/10}{k \in \bbZ, |k| < 10/\delta^2}$. This ensures that if indeed such a desired $\wtilde{f}$ and $\wtilde{g}$ exist, then we will find functions $\wtilde{f}' : \calA^{n_0} \to R$ and $\wtilde{g}' : \calB^{n_0} \to R$ such that $\inabs{\Ex[\wtilde{f}'(\bx)]}, \inabs{\Ex[\wtilde{g}'(\by)]} \le 8\delta/3 + O(\delta^2)$ and $\Ex[\wtilde{f}'(\bx) \cdot \wtilde{g}'(\by)] \ge \rho - 3\delta- O(\delta^2)$. In the YES case, we will find such functions, whereas in the NO case, $\wtilde{f}'$ and $\wtilde{g}'$ as above simply don't exist. 

It is easy to see that this brute force can be done in $\inparen{\frac{1}{\delta^2}}^{O\inparen{(|\calA| \cdot |\calB|)^{n_0}}}$ time, which is upper bounded by the running time claimed in Theorem~\ref{thm:main}.
\end{proofof}
\fi 
\section{Smoothing of Strategies}
\noapx{\label{sec:smoothing}}{\label{apx:smoothing}}

\ifnum\apx=0
The first step in our approach is to obtain smoothed versions of the functions $f : \calA^n \to [-1,1]$ and $g : \calB^n \to [-1,1]$, which have {\em small Fourier tails}, without hurting the correlation by much. In particular, we show the following lemma\fullver{ (proof in Appendix~\ref{apx:smoothing})}{}.

\begin{lem}[Smoothing of strategies] \label{lem:smoothing}
Given any joint probability space $(\calA \times \calB, \mu)$ and parameters $\lambda, \eta > 0$, there exists $d = d((\calA \times \calB, \mu), \lambda, \eta)$ such that for any $n$ and any functions $f : \calA^n \to [-1,1]$ and $g : \calB^n \to [-1,1]$, there exist functions $f_1 : \calA^n \to [-1,1]$ and $g_1 : \calB^n \to [-1,1]$ such that $\Ex[f_1] = \Ex[f]$ and $\Ex[g_1] = \Ex[g]$, and
$$\inabs{\Ex_{(\bx, \by) \sim \mu^{\otimes n}} \insquare{f_1(\bx) \cdot g_1(\by)} - \Ex_{(\bx, \by) \sim \mu^{\otimes n}} \insquare{f(\bx) \cdot g(\by)}} \quad \le \quad \lambda$$
such that $f_1$ and $g_1$ have low energy Fourier tails, namely,
$$\sum_{|\bsigma| > d} \what{f}_1(\bsigma)^2 \le \eta \quad \sAND \quad \sum_{|\bsigma| > d} \what{g}_1(\bsigma)^2 \le \eta$$

\noindent In particular, one may take $d = \frac{\log \eta}{2 \log \gamma}$, where $\gamma = 1 - C \frac{(1-\rho)\lambda}{\log(1/\lambda)}$, and $\rho = \rho(\calA, \calB; \mu)$.
\end{lem}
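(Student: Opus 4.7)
\begin{proof-idea}
The plan is to define $f_1 \defeq T_\gamma f$ and $g_1 \defeq T_\gamma g$, where $T_\gamma$ is the Bonami--Beckner noise operator on the product space and $\gamma = \gamma(\lambda,\rho) < 1$ is a parameter close to $1$ that will be chosen based on the target correlation loss $\lambda$ and the maximal correlation $\rho = \rho(\calA,\calB;\mu)$. Three of the four required conclusions will follow essentially by definition. First, since $(T_\gamma h)(\bx) = \Ex[h(\bY)\mid \bX = \bx]$ is a convex combination of values of $h$, the operator preserves the range $[-1,1]$, so $f_1,g_1 \in [-1,1]$ pointwise. Second, $T_\gamma$ fixes the constant character, i.e. $\widehat{T_\gamma h}(\mathbf 0) = \widehat h(\mathbf 0)$, so $\Ex[f_1] = \Ex[f]$ and $\Ex[g_1] = \Ex[g]$. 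Third, since $\widehat{f_1}(\bsigma) = \gamma^{|\bsigma|}\widehat f(\bsigma)$, the tail bound is immediate: $\sum_{|\bsigma|>d}\widehat{f_1}(\bsigma)^2 \le \gamma^{2d}\|f\|_2^2 \le \gamma^{2d}$, which is at most $\eta$ as soon as $d \ge \log(1/\eta)/(2\log(1/\gamma))$, and likewise for $g_1$.

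The substantive step is bounding the correlation loss $|\Ex[f(\bx)g(\by)] - \Ex[(T_\gamma f)(\bx)\,(T_\gamma g)(\by)]| \le \lambda$. The plan is to work in an orthonormal basis for $L^2(\calA,\mu_A)$ and $L^2(\calB,\mu_B)$ that diagonalizes the single-coordinate Markov operator $M$ of $\mu$ via its singular value decomposition (so $M$ has top singular value $1$ attained by $\mathbf 1$, and all remaining singular values bounded by $\rho$ by the variational definition of maximal correlation). Since $M^{\otimes n}$ is just the tensorization, in the resulting product basis $\{\calX^A_{\bsigma}\}, \{\calX^B_{\bsigma}\}$ one has $\Ex[\calX^A_{\bsigma}(\bx)\calX^B_{\btau}(\by)] = \lambda_{\bsigma}\mathbbm 1[\bsigma=\btau]$ with $|\lambda_{\bsigma}|\le \rho^{|\bsigma|}$. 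Expanding both correlations in this basis gives
\[
\Ex[fg] - \Ex[(T_\gamma f)(T_\gamma g)] \;=\; \sum_{\bsigma\neq \mathbf 0}\bigl(1-\gamma^{2|\bsigma|}\bigr)\lambda_{\bsigma}\,\widehat f(\bsigma)\widehat g(\bsigma),
\]
whence by Cauchy--Schwarz and $\|f\|_2,\|g\|_2 \le 1$,
\[
\bigl|\Ex[fg] - \Ex[(T_\gamma f)(T_\gamma g)]\bigr| \;\le\; \Bigl(\max_{k\ge 1}(1-\gamma^{2k})\rho^k\Bigr)\cdot\|f\|_2\|g\|_2 \;\le\; \max_{k\ge 1}(1-\gamma^{2k})\rho^k.
\]
The final task is a one-variable optimization: the factor $1-\gamma^{2k}$ is small for small $k$ while $\rho^k$ is small for large $k$, and a short calculus exercise (writing $\gamma = 1-\beta$, maximizing over $k$, and using $\rho < 1$) shows that choosing $\beta = \Theta\bigl((1-\rho)\lambda/\log(1/\lambda)\bigr)$ makes the maximum at most $\lambda$, which yields $\gamma$ of the form stated in the lemma and the claimed value of $d$.

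The main obstacle is the correlation-preservation step, and specifically the reduction to a one-parameter maximization. The cleanest way to get there is through the singular-value basis of $M$ described above, so the key input beyond standard Fourier bookkeeping is the fact that maximal correlation controls the top non-trivial singular value of the (product) Markov operator, together with the tensorization of singular values.
\end{proof-idea}
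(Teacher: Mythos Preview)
Your approach is correct and essentially identical to the paper's: both set $f_1 = T_\gamma f$, $g_1 = T_\gamma g$ and verify range, mean preservation, and the Fourier-tail bound exactly as you do. The only difference is that the paper obtains the correlation bound $|\Ex[fg] - \Ex[(T_\gamma f)(T_\gamma g)]| \le \lambda$ by citing Mossel's Lemma~6.1 \cite{mossel2010gaussianbounds} as a black box, whereas you reprove it inline via the SVD of the single-coordinate Markov operator---your argument is in effect a self-contained proof of that lemma and yields the same choice of $\gamma$ and $d$.
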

\fi
\ifnum\stocapx=2 \else
\noindent To prove Lemma~\ref{lem:smoothing}, we use Lemma 6.1 of Mossel \cite{mossel2010gaussianbounds}. We state a specialized version of Mossel's lemma, which suffices for our application.

\begin{lem}[\cite{mossel2010gaussianbounds}] \label{lem:mossel_smoothing}
Let $(\calA \times \calB, \mu)$ be finite joint probability space, such that $\rho(\calA \times \calB, \mu) \le \rho$.

\noindent Let $P \in L^2(\calA^n, \mu_A^{\otimes n})$ and $Q \in L^2(\calB^n, \mu_B^{\otimes n})$ be multi-linear polynomials. Let $\eps > 0$ and $\gamma$ be chosen sufficiently close to $1$ so that,
$$\gamma \ge (1-\eps)^{\log \rho/(\log \eps + \log \rho)}$$
Then:
$$\inabs{\expect[P(\bx)Q(\by)] - \expect[T_{\gamma}P(\bx) T_{\gamma}Q(\by)]} \le 2\eps \Var[P] \Var[Q]$$
In particular, there exists an absolute constant $C$ such that it suffices to take
$$\gamma \defeq 1 - C \frac{(1-\rho)\eps}{\log (1/\eps)}$$
\end{lem}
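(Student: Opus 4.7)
The plan is a clean Fourier-spectral argument. I would decompose both polynomials into their level-$d$ homogeneous components, writing $P = \sum_{d \geq 0} P_d$ with $P_d \defeq \sum_{|\bsigma|=d}\what{P}(\bsigma)\calX_\bsigma$ and likewise $Q = \sum_{d \geq 0} Q_d$ (using the paper's convention that $\calX_\bsigma$ denotes the orthonormal character on whichever of $L^2(\calA^n,\mu_A^{\otimes n})$ or $L^2(\calB^n,\mu_B^{\otimes n})$ is appropriate). The first key observation is that cross-level correlations vanish: if $\bsigma, \bsigma'$ are degree-sequences on the $\calA$- and $\calB$-sides, then $\Ex[\calX_\bsigma(\bx)\calX_{\bsigma'}(\by)] = \prod_i \Ex[\calX_{\sigma_i}(x_i)\calX_{\sigma'_i}(y_i)]$ by tensorization of $\mu^{\otimes n}$, and the $i$-th factor vanishes whenever exactly one of $\sigma_i, \sigma'_i$ is zero, since the constant character $\calX_0 \equiv 1$ is orthogonal to every other basis element. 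Hence nonzero contributions to $\Ex[PQ]$ require $\supp(\bsigma) = \supp(\bsigma')$, which forces $|\bsigma|=|\bsigma'|$, so $\Ex[P_{d_1}(\bx)Q_{d_2}(\by)] = 0$ whenever $d_1 \neq d_2$.

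Since $T_\gamma$ multiplies each level-$d$ Fourier component by $\gamma^d$, this cross-level vanishing immediately yields the identity
$$\Ex[P(\bx)Q(\by)] - \Ex[T_\gamma P(\bx)\, T_\gamma Q(\by)] \;=\; \sum_{d\geq 1}(1-\gamma^{2d})\,\Ex[P_d(\bx) Q_d(\by)].$$
The second ingredient is the tensorized maximal-correlation bound $|\Ex[P_d(\bx) Q_d(\by)]| \leq \rho^d \|P_d\|_2\|Q_d\|_2$: the single-coordinate Markov operator associated to $\mu$ has second singular value $\rho = \rho(\calA,\calB;\mu)$ on mean-zero functions, and on the product space its restriction to the span of characters with Fourier support of size exactly $d$ has operator norm $\rho^d$ by tensorization, giving the level-by-level bound after summing over supports.

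With both ingredients in hand, I would split the sum at a threshold $d^\star$: for $d \leq d^\star$, bound $(1-\gamma^{2d})\rho^d \leq 1 - \gamma^{2d^\star}$; for $d > d^\star$, bound the same quantity by $\rho^{d^\star}$. Cauchy--Schwarz on $\sum_{d \geq 1}\|P_d\|_2\|Q_d\|_2 \leq \sqrt{\Var[P]\Var[Q]}$ then yields
$$\bigl|\Ex[PQ]-\Ex[T_\gamma P\, T_\gamma Q]\bigr| \;\leq\; \bigl((1-\gamma^{2d^\star}) + \rho^{d^\star}\bigr)\sqrt{\Var[P]\Var[Q]}.$$
Choosing $d^\star = \lceil \log\eps / \log\rho \rceil$ ensures $\rho^{d^\star}\leq\eps$, and requiring $\gamma^{2d^\star}\geq 1-\eps$ gives $\gamma \geq (1-\eps)^{1/(2d^\star)}$, which matches the stated condition on $\gamma$ up to the exact shape of the exponent. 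The simplified form $\gamma \defeq 1 - C(1-\rho)\eps/\log(1/\eps)$ then follows from $(1-\eps)^{1/(2d^\star)} \geq 1 - \eps/(2d^\star)$ combined with $\log(1/\rho) = \Theta(1-\rho)$ for $\rho$ bounded away from $0$, absorbing everything into an absolute constant $C$.

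The main subtlety I anticipate is a cosmetic mismatch: Cauchy--Schwarz naturally produces $\sqrt{\Var[P]\Var[Q]}$ rather than the $\Var[P]\Var[Q]$ written in the statement. Both suffice for the downstream application in Lemma~\ref{lem:smoothing} (where $P,Q$ are $[-1,1]$-valued so both quantities are at most $1$), so I would either restate the conclusion with the $\sqrt{\cdot}$ on the right-hand side, or (since only a bound of the form $\lambda \cdot O(1)$ is needed by the downstream lemma) absorb the discrepancy into the constant $C$ in the final parameter choice.
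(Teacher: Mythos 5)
The paper cites this lemma (a specialization of Lemma~6.1 of Mossel~\cite{mossel2010gaussianbounds}) without proof, so there is no in-paper argument to compare against; your proof is a correct reconstruction of the standard argument from that reference. The structure --- degree decomposition, vanishing of cross-degree terms, the per-level bound $\inabs{\Ex[P_d(\bx)Q_d(\by)]} \le \rho^d\,\norm{2}{P_d}\,\norm{2}{Q_d}$, threshold split, Cauchy--Schwarz --- is exactly right. For the per-level bound, the cleanest justification is that $T^{\otimes n}$ sends the span of characters with support exactly $S$ (for $|S|=d$) on the $\calB$-side to the corresponding span on the $\calA$-side with operator norm $\rho^{d}$ by tensorization of singular values; the images over distinct supports $S$ remain mutually orthogonal, so the norm on the whole exact-degree-$d$ subspace is also $\rho^d$, giving the claimed bound directly.

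A few small corrections. The Bernoulli-type inequality goes the other way: since $(1-\eps)^r$ is convex in $r$ and agrees with the chord $1-r\eps$ at $r=0$ and $r=1$, one has $(1-\eps)^{1/(2d^\star)} \le 1-\eps/(2d^\star)$ for $d^\star\ge1$, not ``$\ge$''; this is in fact the direction you need, as it makes $\gamma\ge 1-\eps/(2d^\star)$ a \emph{sufficient} condition for $\gamma\ge(1-\eps)^{1/(2d^\star)}$. For the passage to the ``in particular'' form you only use the one-sided bound $\log(1/\rho)\ge 1-\rho$, which holds for every $\rho\in(0,1]$, so $\rho$ need not be bounded away from $0$ and $C$ is a genuine absolute constant. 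Your threshold $d^\star=\lceil\log\eps/\log\rho\rceil$ yields a $\gamma$-condition with exponent roughly $\log\rho/(2\log\eps)$ rather than the stated $\log\rho/(\log\eps+\log\rho)$, i.e.\ a slightly stronger hypothesis than the first display claims; this is a harmless quantitative loss since the ``in particular'' form is unaffected and is all that Lemma~\ref{lem:smoothing} uses. Finally, your observation that Cauchy--Schwarz produces $\sqrt{\Var[P]\Var[Q]}$ rather than $\Var[P]\Var[Q]$ is correct, and as you note it is immaterial downstream since there $\Var[P],\Var[Q]\le 1$.
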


\begin{proofof}{Lemma~\ref{lem:smoothing}}
Given parameters $\lambda$ and $\eta$, we first choose $\eps$ and $\gamma$ in Lemma~\ref{lem:mossel_smoothing}, such that $\eps = \lambda/2$ and $\gamma = 1 - C \inparen{(1-\rho)\eps}/\inparen{\log (1/\eps)}$ as required. We choose $d$ to be large enough such that $\gamma^{2d} \le \eta$, that is, $d = (\log \eta)/(2 \log \gamma)$. Now, given functions $f : \calA^n \to [-1, 1]$ and $g : \calB^n \to [-1,1]$, we obtain functions $f_1$ and $g_1$ as follows: $f_1(\bx) = T_{\gamma}f(\bx)$ and $g_1(\by) = T_{\gamma}g(\by)$. It is easy to see that, $\Ex[f_1(\bx)] = \Ex[f(\bx)]$ and $\Ex[g_1(\by)] = \Ex[g(\by)]$. From Lemma~\ref{lem:mossel_smoothing}, and the fact that $\Var[f], \Var[g] \le 1$, we get $\inabs{\Ex[f_1(\bx) g_1(\by)] - \Ex[f(\bx) g(\by)]} \le 2\eps = \lambda$ as desired. Also, note that $\what{f_1}(\bsigma) = \what{f}(\bsigma)\cdot\gamma^{|\bsigma|}$ (similarly for $\what{g_1}(\bsigma)$). Thus, we get that,
\begin{eqnarray*}
\sum_{|\bsigma| > d} \what{f}_1(\bsigma)^2 \quad \le \quad \gamma^{2d} \cdot \sum_{|\bsigma| > d} \what{f}(\bsigma)^2 \quad \le \quad \gamma^{2d} \quad \le \quad \eta\\
\sum_{|\bsigma| > d} \what{g}_1(\bsigma)^2 \quad \le \quad \gamma^{2d} \cdot \sum_{|\bsigma| > d} \what{g}(\bsigma)^2 \quad \le \quad \gamma^{2d} \quad \le \quad \eta
\end{eqnarray*}
\end{proofof}
\fi 
\section{Joint Regularity Lemma for Fourier Concentrated Functions}
\noapx{\label{sec:reglem}}{\label{apx:reglem}}

\ifnum\apx=1
This section contains the full proof of Lemma~\ref{lem:joint_reg}.
\else
The second step in our approach is to apply a {\em regularity lemma} on the functions $f_1 : \calA^n \to [-1,1]$ and $g_1 : \calB^n \to [-1,1]$ obtained from the previous step of smoothing. {\em Regularity lemma} is a loosely referred term which shows that for various types of combinatorial objects, an arbitrary object can be approximately decomposed into a constant number of ``pseudorandom'' sub-objects.

Our version of the regularity lemma draws inspiration from that of \cite{diakonikolas2010regularity}; in fact our proofs also closely follow theirs. Formally, we show the following lemma\fullver{ (proof in Appendix~\ref{apx:reglem})}{}.

\begin{lem}[Joint regularity lemma for Fourier-concentrated functions] \label{lem:joint_reg}
Let $(\calA \times \calB, \mu)$ be a joint probability space. Let $d \in \bbN$ and $\tau > 0$ be any given constant parameters. There exists an $\eta \defeq \eta(\tau) > 0$ and $h \defeq h((\calA \times \calB, \mu), d, \tau)$ such that the following holds:

For all $P \in L^2(\calA^n, \mu_{A}^{\otimes n})$ and $Q \in L^2(\calB^n, \mu_{B}^{\otimes n})$ satisfying $\sum_{|\bsigma| > d} \what{P}(\bsigma)^2 \le \eta$, $\sum_{|\bsigma| > d} \what{Q}(\bsigma)^2 \le \eta$, and $\Var[P] \le 1$ and $\Var[Q] \le 1$: there exists a subset of indices $H \subseteq [n]$ with $|H| \le h$, such that the restrictions of the functions $P$ and $Q$ obtained by evaluating the coordinates in $H$ according to distribution $\mu$, satisfy the following (where we denote $T = [n] \setminus H$),
\begin{itemize}
\item With probability at least $1-\tau$ over $\xi \sim \mu_{A}^{\otimes h}$, the restriction $P_{\xi}(\bx_T)$ is such that for all $i \in T$, it is the case that $\Inf_i(P_{\xi}(\bx_T)) \le \tau$ 
\item With probability at least $1-\tau$ over $\xi \sim \mu_{B}^{\otimes h}$, the restriction $Q_{\xi}(\bx_T)$ is such that for all $i \in T$, it is the case that $\Inf_i(Q_{\xi}(\bx_T)) \le \tau$ 
\end{itemize}
\noindent In particular, one may take $\eta = \tau^2/16$ and $h = \frac{d}{\tau^2} \cdot \inparen{\frac{C_4(\alpha)}{\alpha} \log \frac{C_4(\alpha)}{\alpha \cdot d \cdot \tau}}^{O(d)}$ which is a constant that depends on $d$, $\tau$ and $\alpha \defeq \alpha(\mu)$, which is the minimum non-zero probability in $\mu$.\fullver{ See Appendix~\ref{apx:prelim_hypercontractivity} for the definition of $C_4(\alpha)$, which is the hypercontractivity parameter.}{}
\end{lem}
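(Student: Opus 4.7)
The plan is to adapt the low-degree regularity lemma of Diakonikolas et al.~\cite{diakonikolas2010regularity} to our two-function, general-alphabet setting, first using the Fourier-tail hypothesis to reduce everything to the degree-$\le d$ case. Let $P' = P^{\le d}$ and $Q' = Q^{\le d}$ denote the degree-$\le d$ truncations. The hypothesis $\sum_{|\bsigma|>d}\what{P}(\bsigma)^2 \le \eta$ ensures $|\Inf_i(P) - \Inf_i(P')| \le \eta$ for every $i$, and the analogous bound persists in expectation after restriction thanks to Fact~\ref{fact:expected_four_inf} and Lemma~\ref{lem:exp_inf}; choosing $\eta = \tau^2/16$ it is then enough to produce a set $H$ such that the restrictions $P'_\xi, Q'_\xi$ have all remaining influences at most $\tau/2$ with probability $1-\tau/2$ over $\xi$. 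I would build $H$ as $H_P \cup H_Q$, where $H_P = \{i : \Inf_i(P') \ge \tau_1\}$ and $H_Q$ is defined analogously, for a threshold $\tau_1$ to be set polynomially small in $\tau$ (and in $C_4(\alpha)^{-d}$). Since $\Var(P') \le 1$, Fact~\ref{fact:influences}(iii) gives $\Inf(P') \le d$, so $|H_P|, |H_Q| \le d/\tau_1$, which is the source of the claimed bound on $h$.

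To verify the low-influence property, fix $i \in T = [n]\setminus H$, so $\Inf_i(P') < \tau_1$. By Lemma~\ref{lem:exp_inf}, $\Ex_\xi[\Inf_i(P'_\xi)] = \Inf_i(P')$. Moreover, using Fact~\ref{fact:expected_four_inf}, each Fourier coefficient $\what{P'_\xi}(\bsigma_T)$ is a polynomial in $\xi$ of degree at most $d$, so $\Inf_i(P'_\xi)$ is a nonnegative polynomial in $\xi$ of degree at most $2d$. Hypercontractivity (Theorem~\ref{thm:mom_bd_hyperc}) therefore controls higher moments of $\Inf_i(P'_\xi)$ in terms of its mean and a factor of $C_4(\alpha)^{O(d)}$. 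A Markov-type estimate on a high power of $\Inf_i(P'_\xi)$ then gives a sharp per-coordinate tail bound; summing over $i \in T$, weighted by $\Inf_i(P')$ (whose total is at most $d$), makes the expected number of ``rogue'' coordinates---those with $\Inf_i(P'_\xi) > \tau/2$---polynomially small. Iterating the selection (promoting any persistent rogues into $H$ and invoking the potential $\Inf(P')$, which strictly decreases by $\ge \tau_1$ per addition and is capped by $d$) terminates in $O(d/\tau_1)$ rounds. Running the same argument in parallel for $Q'$ is legitimate because Lemma~\ref{lem:exp_inf} ensures that restricting the additional coordinates in $H_Q\setminus H_P$ does not raise the expected influences of $P'$.

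The main obstacle is exactly the concentration argument in the second paragraph: the union bound ranges over up to $n$ coordinates in $T$, so a naive per-coordinate Markov bound on $\Inf_i(P'_\xi)$ does not suffice. Overcoming this requires the hypercontractive moment bound on the degree-$2d$ polynomial $\Inf_i(P'_\xi)$, and it is this step that forces dependence on the minimum atom probability $\alpha = \alpha(\mu)$ via the hypercontractivity constant $C_4(\alpha)$, producing the bound $h = (d/\tau^2)\cdot (C_4(\alpha)\log(C_4(\alpha)/(\alpha d \tau))/\alpha)^{O(d)}$ stated in the lemma. A secondary technical point is bookkeeping the effect of the Fourier tail $\eta$ under restriction, but this is absorbed by the choice $\eta = \tau^2/16$ once the leading $P', Q'$ analysis is done.
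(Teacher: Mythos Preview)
Your overall strategy matches the paper's: truncate to the degree-$\le d$ parts $P^\ell,Q^\ell$, take $H$ to be the union of the high-influence coordinates of each, use hypercontractivity on the degree-$2d$ polynomial $\Inf_i(P^\ell_\xi)$ to control deviations, and absorb the high-degree tail via Markov using $\eta=\tau^2/16$. The handling of the tail and of the ``joint'' aspect (take $H=H_P\cup H_Q$; restricting extra coordinates does not hurt expected influences, by Lemma~\ref{lem:exp_inf}) is essentially what the paper does.

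Where you diverge is the concentration/union-bound step. You propose a weighted sum over $i\in T$ followed by \emph{iteration} (promoting persistent rogues into $H$). The paper does neither: it shows that a \emph{single} choice of $H_0=\{i:\Inf_i(P^\ell)\ge\beta\}$ already works, for any superset $H\supseteq H_0$, with no adaptive refinement. The mechanism is a \emph{bucketing} argument (following \cite{diakonikolas2010regularity}): partition $T$ into buckets $B_j=\{i:\Inf_i(P^\ell)\in(\beta/2^{j+1},\beta/2^j]\}$; since $\Inf(P^\ell)\le d$ one has $|B_j|\le 2^{j+1}d/\beta$, while the per-coordinate tail bound (from $\|\Inf_i(P^\ell_\xi)\|_2\le C_4(\alpha)^d\Inf_i(P^\ell)$ together with the degree-$2d$ concentration of Theorem~\ref{thm:conc_bd}) gives $\Pr[\Inf_i(P^\ell_\xi)>\tau]\le\exp(-c(\tau 2^j/(\beta C_4^d))^{1/d})$ on $B_j$. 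The exponential decay in $j$ beats the geometric growth of $|B_j|$, so the sum over all buckets is at most $\tau$ once $\beta$ is chosen as in the stated bound on $h$. This is cleaner than your description: the phrase ``summing over $i\in T$, weighted by $\Inf_i(P')$'' is not quite the right bookkeeping (the sum of the tail probabilities is not literally a weighted sum by $\Inf_i$), and the iteration you invoke is unnecessary---indeed the paper explicitly notes that avoiding the adaptive decision tree of \cite{diakonikolas2010regularity} is one of the simplifications here. Your potential-function termination argument would give a correct but weaker conclusion; the one-shot bucketing is what yields the clean $h$ bound.
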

\fi

\ifnum\stocapx=2 \else
\subsection{Regularity Lemma for Constant Degree Polynomials}

We first prove a version of the above regularity lemma for degree-$d$ functions, as opposed to Fourier-concentrated functions.

\begin{lem}[Regularity Lemma for degree-$d$ functions]\label{lem:our_reg_lem}
Let $(\calA, \mu_A)$ be a probability space. Let $d \in \bbN$ and $\tau > 0$ be any given constant parameters. There exists $h \defeq h((\calA, \mu_A), d, \tau)$ such that the following holds:

For all degree-$d$ multilinear polynomials $P \in L^2(\calA^n, \mu_{A}^{\otimes n})$ with $\Var[P] \le 1$, there exists a subset of indices $H_0 \subseteq [n]$ with $|H_0| \le h$, such that for any superset $H \supseteq H_0$, the restrictions of $P$ obtained by evaluating the coordinates in $H$ according to distribution $\mu_A$, satisfies the following (where we denote $T = [n] \setminus H$):
$$\Pr\limits_{\xi \sim \mu_A^{\otimes |H|}} \insquare{\forall i \in T : \Inf_i(P_{\xi}(\bx_T)) \le \tau} \ge 1-\tau$$
In other words, with probability at least $1-\tau$ over the random restriction $\xi \sim \mu_{A}^{\otimes |H|}$, the restricted function $P_{\xi}(\bx_T)$ is such that $\Inf_i(P_{\xi}(\bx_T)) \le \tau$ for all $i \in T$.

\noindent In particular, one may take $h = \frac{d}{\tau} \cdot \inparen{\frac{C_4(\alpha)}{\alpha} \log \frac{C_4(\alpha)}{\alpha \cdot d \cdot \tau}}^{O(d)}$ which is a constant that depends on $d$, $\tau$ and $\alpha \defeq \alpha(\mu_A)$.
\end{lem}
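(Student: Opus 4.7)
The plan is to take $H_0$ to be the set of ``notable'' coordinates of $P$, namely $H_0 = \{i \in [n] : \Inf_i(P) > \gamma\}$ for a parameter $\gamma > 0$ to be chosen at the end. Since $P$ has degree $\le d$ and $\Var(P) \le 1$, Fact~\ref{fact:influences}(iii) gives $\sum_i \Inf_i(P) \le d$, and hence $|H_0| \le d/\gamma$ — already a bound independent of $n$. For any superset $H \supseteq H_0$, every remaining coordinate $i \in T = [n]\setminus H$ satisfies $\Inf_i(P) \le \gamma$, and Lemma~\ref{lem:exp_inf} gives $\mathbb{E}_\xi[\Inf_i(P_\xi)] = \Inf_i(P) \le \gamma$. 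The task is then to upgrade this ``small in expectation'' guarantee into ``simultaneously small for all $i \in T$ with probability at least $1-\tau$''.

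For this upgrade I will exploit the polynomial structure of the restricted influence. By Fact~\ref{fact:expected_four_inf}, each $\what{P}_\xi(\bsigma_T)$ is a polynomial of degree $\le d$ in $\xi$, so
$$\Inf_i(P_\xi) = \sum_{\bsigma_T : (\sigma_T)_i \ne 0} \what{P}_\xi(\bsigma_T)^2$$
is a non-negative polynomial of degree $\le 2d$ in $\xi$, presented as a sum of squares of degree-$\le d$ terms. Applying the hypercontractive moment bound (Theorem~\ref{thm:mom_bd_hyperc}) to each $\what{P}_\xi(\bsigma_T)$ and combining via the $L^q$-triangle inequality yields, for every integer $q \ge 2$,
$$\|\Inf_i(P_\xi)\|_q \le \sum_{\bsigma_T} \|\what{P}_\xi(\bsigma_T)\|_{2q}^{2} \le C_{2q}(\alpha)^d \sum_{\bsigma_T} \|\what{P}_\xi(\bsigma_T)\|_2^{2} = C_{2q}(\alpha)^d \cdot \Inf_i(P),$$
where the last equality is again Fact~\ref{fact:expected_four_inf}.

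The main obstacle — and the reason a naive coordinate-wise Markov plus union bound fails — is that union-bounding the tail of $\Inf_i(P_\xi)$ over the $\Theta(n)$ coordinates of $T$ forces an $n$-dependent slack in each individual tail bound. I bypass this by applying Markov's inequality to the $\ell^q$-aggregate rather than coordinate-wise:
$$\Pr_\xi\bigl[\exists\, i \in T : \Inf_i(P_\xi) > \tau\bigr] \le \Pr_\xi\!\left[\sum_{i \in T} \Inf_i(P_\xi)^q > \tau^q\right] \le \frac{1}{\tau^q}\sum_{i \in T}\|\Inf_i(P_\xi)\|_q^q \le \frac{C_{2q}(\alpha)^{dq}}{\tau^q}\sum_{i \in T}\Inf_i(P)^q.$$
Combining $\Inf_i(P) \le \gamma$ for every $i \in T$ with $\sum_i \Inf_i(P) \le d$ gives $\sum_{i \in T}\Inf_i(P)^q \le \gamma^{q-1} d$, a quantity that is crucially independent of $n$.

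To finish, I will tune $q$ and $\gamma$. Choosing $q = \Theta(\log(1/\tau))$ (so that $C_{2q}(\alpha)$ stays moderate while the $\tau^{-q}$ factor is absorbed by $\gamma^{q-1}$) and then taking $\gamma$ of the form $\tau/(C_{2q}(\alpha)^d \cdot \mathrm{poly}(d,\tau))$ drives the final probability below $\tau$, and then $h = d/\gamma$ gives the quantitative bound $(d/\tau)\cdot (C_4(\alpha)/\alpha \cdot \log(C_4(\alpha)/(\alpha d \tau)))^{O(d)}$ claimed in the lemma. (The logarithmic factor in the stated form can alternatively be obtained by invoking the sharper concentration inequality of Theorem~\ref{thm:conc_bd} on $\Inf_i(P_\xi)$ as a single degree-$2d$ polynomial, using the variance estimate $\|\Inf_i(P_\xi)\|_2 \le C_4(\alpha)^d \Inf_i(P)$ coming from the $q=2$ case above.)
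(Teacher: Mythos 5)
Your proof is correct, and it reaches the same quantitative bound as the paper, but via a genuinely different mechanism for controlling the union bound over $T$. Both you and the paper begin identically: take $H_0$ to be the coordinates with $\Inf_i(P) > \gamma$ (resp.~$\ge\beta$), so $|H_0|\le d/\gamma$, and then argue that after restricting any $H\supseteq H_0$, the remaining influences are simultaneously small w.h.p. The paper's route is to establish an \emph{exponential tail bound} for each $\Inf_i(P_\xi)$ (Claim~\ref{clm:indiv_inf_bd}, via Theorem~\ref{thm:conc_bd} applied to the degree-$2d$ polynomial $\Inf_i(P_\xi)$), and then to handle the $n$-dependence via a \emph{dyadic bucketing} of $T$ by influence level: within each bucket $B_j$ one takes a different threshold $r_j$ so that $|B_j|\cdot\exp(-c\,r_j^{1/d})$ is summable over $j$. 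You instead stop at the $q$-th moment bound $\|\Inf_i(P_\xi)\|_q \le C_{2q}(\alpha)^d\,\Inf_i(P)$ (the paper's Equation~(\ref{eq:inf_l2_bd}) is exactly your $q=2$ case) and apply Markov to the \emph{$\ell^q$-aggregate} $\sum_{i\in T}\Inf_i(P_\xi)^q$, exploiting $\sum_{i\in T}\Inf_i(P)^q \le \gamma^{q-1}d$ to remove the $n$-dependence. These are two classical ways of turning a one-coordinate hypercontractive estimate into an $n$-free simultaneous bound; your moment-aggregate version is self-contained and avoids both the exponential-tail lemma and the bucketing bookkeeping, at the cost of needing to tune $q$ against $\gamma$ (which you do correctly — and indeed, since $C_{2q}(\alpha) = O(q/\alpha)$ for $q=\Theta(\log(1/\tau))$, your $h = d\,C_{2q}(\alpha)^{d(1+o(1))}/\tau$ fits inside the lemma's stated $\frac{d}{\tau}\bigl(\frac{C_4(\alpha)}{\alpha}\log\frac{C_4(\alpha)}{\alpha d\tau}\bigr)^{O(d)}$). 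One small presentational point: your parenthetical observation that the same $h$ can also be recovered by switching to Theorem~\ref{thm:conc_bd} is true, but one then still needs some device like bucketing or your $\ell^q$-trick to avoid the naive union bound, so it is not quite an independent ``alternative''; the moment-aggregate argument you gave is the one that actually closes the proof.
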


The intuitive explanation of the regularity lemma is as follows: If $P$ is a degree $d$ polynomial with $\Var(P) \le 1$, then the total influence of $P$ is at most $d$. Hence for all $\beta > 0$, there can only be at most $h \defeq d/\beta$ variables with influence greater than $\beta$. Indeed, our subset $H_0$ will essentially be the set of all the variables with influence at least $\beta$ (we will choose $\beta$ to be suitably smaller than $\tau$, but with no dependence on $n$). Clearly, $|H_0| \le h$. For any superset $H \supseteq H_0$, and for a random restriction of $\bx_H$ to $\xi$, it will follow from well known hypercontractivity bounds (Theorem~\ref{thm:conc_bd}) and a careful union bound, that the influence of all the remaining variables will be less than $\tau$ with high probability.
\fi

\ifnum\apx=0
Our regularity lemma draws inspiration from the one in \cite{diakonikolas2010regularity}. In fact, our proof of the above regularity lemma also closely follows the proof steps in \cite{diakonikolas2010regularity}. However their regularity lemma was much more involved as they were dealing with low-degree polynomial threshold functions, whereas we are directly dealing with low-degree polynomials. In particular, a major difference in our regularity lemmas is that \cite{diakonikolas2010regularity} obtain a (potentially) {\em adaptive} decision tree, whereas we obtain just a single subset $H$. Also, our notion of `regularity' is much simpler in that we only need all influences to be small. Another aspect of our regularity lemma is that it is robust enough to also work for Fourier concentrated functions, as opposed to only low-degree functions (potentially, \cite{diakonikolas2010regularity} could also be modified to have this feature, although it was not required for their application). Another minor difference is that our Fourier analysis is for functions in $L^2(\calA^n, \mu_A^{\otimes n})$, as opposed to functions on the boolean hypercube. But this is not really a significant difference and the proof steps go through as it is, albeit with slightly different parameters which depend on the hypercontractivity parameters of the distribution $(\calA, \mu_A)$.\\
\fi

\ifnum\stocapx=2 \else
\noindent Before we give a proof of Lemma~\ref{lem:our_reg_lem}, we would need the following claim.

\begin{claim}[cf. Claim 3.12 in \cite{diakonikolas2010regularity}]\label{clm:indiv_inf_bd}
Let $P \in L^2(\calA^n, \mu_{A}^{\otimes n})$ be a degree-$d$ polynomial. Let $H \subseteq [n]$ and $T = [n] \setminus H$. Let $\xi$ be a random restriction fixing $H$. For all $r \ge e^{d}$ and all $i \in T$, we have the following,
$$\Pr\limits_{\xi} \ [\Inf_i(P_{\xi}) > r \cdot C_4(\alpha)^{d} \cdot \Inf_i(P)] \le \exp(-c \cdot r^{1/d})$$
	
\noindent where, $c = \alpha(\mu_A) d / e$ (see Theorem~\ref{thm:conc_bd}) and $C_4(\alpha)$ is obtained as in Theorem~\ref{thm:mom_bd_hyperc}.
\end{claim}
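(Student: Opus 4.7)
The plan is to reduce the claim to a single polynomial concentration bound via hypercontractivity, by isolating the ``coordinate-$i$ part'' of $P$. Write $P^{(i)} := \sum_{\bsigma\,:\,\sigma_i \neq 0} \what{P}(\bsigma)\calX_{\bsigma}$ for the part of $P$ depending on coordinate $i$, and define $F(\xi) := \|P^{(i)}(\xi,\cdot)\|_{L^2(\bx_T)}^2$. Since $i \in T$, applying Fact~\ref{fact:expected_four_inf} coordinate-by-coordinate yields
$$\Inf_i(P_\xi) \;=\; \sum_{\bsigma_T\,:\,(\sigma_T)_i \neq 0} \what{P_\xi}(\bsigma_T)^2 \;=\; F(\xi),$$
and also $\Ex_\xi[F(\xi)] = \Inf_i(P)$. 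Expanding $F$ in the Fourier basis shows $F(\xi) = \sum_{\bsigma_T:(\sigma_T)_i \neq 0} Q_{\bsigma_T}(\xi)^2$, with $Q_{\bsigma_T}(\xi) := \sum_{\bsigma_H} \what{P}(\bsigma_H \circ \bsigma_T)\calX_{\bsigma_H}(\xi)$ of degree at most $d-1$, so $F$ itself is a polynomial in $\xi$ of total degree at most $2d$.

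The crucial step is to bound $\|F\|_2$ by $\|F\|_1 = \Inf_i(P)$ using hypercontractivity applied to $P^{(i)}$ rather than to $F$ directly. Jensen's inequality (on the inner expectation) gives
$$F(\xi)^2 \;=\; \big(\Ex_{\bx_T}[P^{(i)}(\xi,\bx_T)^2]\big)^2 \;\le\; \Ex_{\bx_T}[P^{(i)}(\xi,\bx_T)^4],$$
and averaging over $\xi$ yields $\Ex_\xi[F(\xi)^2] \le \|P^{(i)}\|_4^4$. Since $P^{(i)}$ has degree at most $d$, Theorem~\ref{thm:mom_bd_hyperc} gives $\|P^{(i)}\|_4 \le C_4(\alpha)^{d/2}\|P^{(i)}\|_2$, and therefore
$$\|F\|_2 \;\le\; C_4(\alpha)^d \|P^{(i)}\|_2^2 \;=\; C_4(\alpha)^d \cdot \Inf_i(P).$$

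Finally, apply Theorem~\ref{thm:conc_bd} to the polynomial $F$ with $D := 2d$: the exponent in the tail bound becomes $t^{2/(2d)} = t^{1/d}$, and the hypothesis $t > e^{D/2} = e^d$ matches exactly the assumption $r \ge e^d$ in the claim. Combining with the bound on $\|F\|_2$ just derived,
$$\Pr_\xi\big[F(\xi) > r \cdot C_4(\alpha)^d \cdot \Inf_i(P)\big] \;\le\; \Pr_\xi\big[F(\xi) > r \cdot \|F\|_2\big] \;\le\; \exp(-c\,r^{1/d}),$$
with $c = \alpha(\mu_A)\cdot(2d)/e \ge \alpha(\mu_A)\,d/e$, which is the claim. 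The one subtlety to watch for -- and the place where a naive approach would fail -- is that applying hypercontractivity directly to $F$ would lose a factor of $C_4(\alpha)^{2d}$ instead of $C_4(\alpha)^d$; the right move is to invoke hypercontractivity on $P^{(i)}$ at degree $d$ via the Jensen step, and reserve the concentration theorem (which inevitably sees $F$ at its full degree $2d$, forcing the $r^{1/d}$ decay rather than $r^{2/d}$) for the last step.
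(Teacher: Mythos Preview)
Your proof is correct and follows the same overall architecture as the paper's: recognize that $F(\xi)=\Inf_i(P_\xi)$ is a degree-$2d$ polynomial in $\xi$, bound $\|F\|_2\le C_4(\alpha)^d\,\Inf_i(P)$ using hypercontractivity at degree $d$ (not $2d$), and then invoke the concentration bound (Theorem~\ref{thm:conc_bd}) for degree $2d$, which produces the $r^{1/d}$ exponent and the constraint $r\ge e^d$.

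The one place you differ from the paper is in how you establish the key estimate $\|F\|_2\le C_4(\alpha)^d\,\Inf_i(P)$. The paper expands $F(\xi)=\sum_{\bsigma_T:(\sigma_T)_i\ne 0}\what{P}_\xi(\bsigma_T)^2$, applies the triangle inequality, and uses hypercontractivity term-by-term on each degree-$d$ polynomial $\xi\mapsto \what{P}_\xi(\bsigma_T)$ (via $\|\what{P}_\xi(\bsigma_T)^2\|_2=\|\what{P}_\xi(\bsigma_T)\|_4^2\le C_4(\alpha)^d\|\what{P}_\xi(\bsigma_T)\|_2^2$), then sums. You instead identify $F(\xi)=\|P^{(i)}(\xi,\cdot)\|_2^2$, use Jensen on the inner expectation to get $\Ex_\xi[F^2]\le\|P^{(i)}\|_4^4$, and apply hypercontractivity once to the degree-$d$ function $P^{(i)}$ over the full product space. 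Your route is a bit cleaner (one application of hypercontractivity, no Fourier decomposition), and it even yields the slightly better constant $c=2\alpha d/e$ in the exponent; the paper's route is perhaps more transparent about where the degree-$d$ (rather than degree-$2d$) hypercontractivity enters. Both are equally valid.
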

\begin{proof}
The identity $\Inf_{i}(P_{\xi}) = \sum_{\bsigma_T : (\sigma_T)_i \ne 0} \what{P}_{\xi}(\bsigma_T)^2$ and Fact~\ref{fact:expected_four_inf} imply that $\Inf_{i}(P_{\xi})$ is a degree-$2d$ polynomial in $\xi$. Hence, the claim would follow from the concentration bound for low-degree polynomials, i.e., Theorem~\ref{thm:conc_bd}, if we can appropriately upper bound the $\ell_2$-norm of the polynomial $\Inf_{i}(P_{\xi})$. So, to prove Claim~\ref{clm:indiv_inf_bd}, it suffices to show that
\begin{equation}\label{eq:inf_l2_bd}
\norm{2}{\Inf_{i}(P_{\xi})} \le C_4(\alpha)^d \cdot \Inf_{i}(P)
\end{equation}
\myignore{Let $Q(\xi) \defeq \Inf_i(P_{\xi})$.} By the triangle inequality for norms we have that,
\begin{equation*}
\norm{2}{\Inf_{i}(P_{\xi})} = \norm{2}{\sum\limits_{\bsigma_T : (\sigma_T)_i \ne 0} \what{P}_{\xi}(\bsigma_T)^2} \le \sum\limits_{\bsigma_T : (\sigma_T)_i \ne 0} \norm{2}{\what{P}_{\xi}(\bsigma_T)^2}
\end{equation*}
Since $\what{P}_{\xi}(\sigma_T)$ is a degree-$d$ polynomial, the moment bound for low-degree polynomials, i.e., Theorem~\ref{thm:mom_bd_hyperc}, yields that
$$ \norm{2}{\what{P}_{\xi}(\bsigma_T)^2} = \norm{4}{\what{P}_{\xi}(\bsigma_T)}^2 \le C_4(\alpha)^d \norm{2}{\what{P}_{\xi}(\bsigma_T)}^2$$
and hence
\begin{eqnarray*}
\norm{2}{\Inf_{i}(P_{\xi})} & \le & C_4(\alpha)^d \sum\limits_{\bsigma_T : (\sigma_T)_i \ne 0} \norm{2}{\what{P}_{\xi}(\bsigma_T)}^2 \\
& = & C_4(\alpha)^d \sum\limits_{\bsigma_T : (\sigma_T)_i \ne 0} \Ex_{\xi} \insquare{\what{P}_{\xi}(\bsigma_T)^2} \\
& = & C_4(\alpha)^d \cdot \Ex_{\xi} \insquare{\Inf(P_{\xi})} \\
& = & C_4(\alpha)^d \cdot \Inf_i(P)
\end{eqnarray*}
where the last equality follows from Lemma~\ref{lem:exp_inf}. Thus, Equation (\ref{eq:inf_l2_bd}) and the claim follows from Theorem~\ref{thm:conc_bd}.
\end{proof}

\begin{proofof}{Lemma~\ref{lem:our_reg_lem}}
Let $P \in L^2(\calA^n, \mu_{A}^{\otimes n})$ be the given degree-$d$ multilinear polynomial with $\Var[P] \le 1$. From part (iii) of Fact~\ref{fact:influences}, we have that $\Inf(P) \le d$. Let $H_0 \subset [n]$ be the set of indices $i \in [n]$ such that $\Inf_i(f) \ge \beta$. Since, $d \ge \Inf(P) \ge \sum_i \Inf_i(P)$, we have that $|H_0| \le d/\beta$. We will choose $\beta$ as a suitable constant less than $\tau$, but with no dependence on $n$.

Fix $H \supseteq H_0$ and let $T = [n] \setminus H$. From Claim~\ref{clm:indiv_inf_bd}, we have for any $i \in T$, that $\Pr_{\xi} \ [\Inf_i(P_{\xi}) > r \cdot C_4(\alpha)^{d} \cdot \Inf_i(P)] \le \exp(-\Omega(c \cdot r^{1/d}))$. However, to prove that $\Inf_i(P_{\xi}) \le \tau$ for all $i \in T$, with high probability, we cannot simply use a \naive union bound over all $i \in T$, as that will introduce a dependence of $n$ in $\beta$ and thereby in $h$. Instead, we use a {\em bucketing} argument, as done in \cite{diakonikolas2010regularity}, as follows:

We partition the indices $i \in T$ into buckets $\set{B_j}_{j \in \bbN}$ as $B_j = \setdef{i \in T}{\Inf_i(P) \in \left (\frac{\beta}{2^{j+1}}, \frac{\beta}{2^{j}} \right ]}$. Since $\Inf(P) \le d$, we have that $|B_j| \le 2^{j+1} d/\beta$. For all $i \in B_j$, we use the concentration $\Pr\limits_{\xi} \ [\Inf_i(P_{\xi}) \le r \cdot C_4(\alpha)^{d} \cdot \Inf_i(P)] \ge 1 - \exp(-c \cdot r^{1/d})$ by choosing $r = \frac{\tau \cdot 2^j}{\beta \cdot C_4(\alpha)^d}$. We then do a union bound over all the buckets. Thus, we get that,

\begin{eqnarray*}
\Pr\limits_{\xi} \insquare{\forall i \in T : \Inf_i(P_{\xi}(\bx_T)) \le \tau} &\ge& 1 - \sum_{j=0}^{\infty} \Pr\limits_{\xi} \insquare{\exists i \in B_j : \Inf_i(P_{\xi}(\bx_T)) > \tau}\\
&\ge& 1 - \sum_{j=0}^{\infty} \exp\inparen{-c \inparen{\frac{\tau \cdot 2^j}{\beta \cdot C_4(\alpha)^d}}^{1/d}} \cdot \frac{2^{j+1}d}{\beta}
\end{eqnarray*}

\noindent It can be verified that for $\frac{1}{\beta} = \frac{(2 \cdot C_4(\alpha))^d}{c^d \cdot \tau} \cdot \log \inparen{\frac{(2 \cdot C_4(\alpha))^d}{c^d \cdot \tau}}^{d}$ it holds that,
$$\sum_{j=0}^{\infty} \exp\inparen{-c \inparen{\frac{\tau \cdot 2^j}{\beta \cdot C_4(\alpha)^d}}^{1/d}} \cdot \frac{2^{j+1}d}{\beta} \quad \le \quad \tau$$

\noindent Thus, we have the regularity lemma as desired with $|H_0| \le h = \frac{d}{\beta} = \frac{d}{\tau} \cdot \inparen{\frac{C_4(\alpha)}{\alpha} \log \frac{C_4(\alpha)}{\alpha \cdot d \cdot \tau}}^{O(d)}$ which is a constant that depends on $d$, $\tau$ and $\alpha \defeq \alpha(\mu_A)$.

\end{proofof}

\subsection{Joint Regularity Lemma}

In this section, we use Lemma~\ref{lem:our_reg_lem} to prove the joint regularity lemma, namely Lemma~\ref{lem:joint_reg}.

\begin{proofof}{Lemma~\ref{lem:joint_reg}}
We have $P \in L^2(\calA^n, \mu_{A}^{\otimes n})$ and $Q \in L^2(\calB^n, \mu_{B}^{\otimes n})$ satisfying $\sum_{|\bsigma| > d} \what{P}(\bsigma)^2 \le \eta$, $\sum_{|\bsigma| > d} \what{Q}(\bsigma)^2 \le \eta$, and $\Var[P] \le 1$ and $\Var[Q] \le 1$. First, we split $P$ and $Q$ into low and high degree components. That is, $P(\bx) = P^{\ell}(\bx) + P^{h}(\bx)$ and $Q(\by) = Q^{\ell}(\by) + Q^{h}(\by)$, where $P^{\ell}(\bx)$ and $Q^{\ell}(\by)$ contain all the monomials of degree at most $d$ in $P(\bx)$ and $Q(\by)$ respectively. Note that $\Var[P^\ell] \le \Var[P] \le 1$. Similarly, $\Var[Q^\ell] \le 1$.

We apply the regularity lemma for degree-$d$ functions (Lemma~\ref{lem:our_reg_lem}), with parameter $\tau$ equal to $\tau/4$, on functions $P^{\ell}$ and $Q^{\ell}$ separately, to obtain subsets $H_A, H_B \subseteq [n]$ respectively. The subset $H$ is then obtained as $H_A \cup H_B$. Note that, $|H| \le h((\calA, \mu_A), d, \tau/4) + h((\calB, \mu_B), d, \tau/4)$, which is a computable in terms of the parameters of the problem, but more importantly has no dependence on $n$.

\noindent From Lemma~\ref{lem:our_reg_lem}, we know that for $T = [n] \setminus H$ (note that $H \supseteq H_A$ and $H \supseteq H_B$),
\begin{eqnarray}
\Pr\limits_{\quad \quad \xi \sim \mu_A^{\otimes |H|}} \quad \insquare{\forall i \in T : \Inf_i(P^{\ell}_{\xi}(\bx_{T})) \le \tau/4} &\ge& 1-\tau/4 \label{eqn:jreg_bd1}\\
\Pr\limits_{\quad \quad \xi \sim \mu_B^{\otimes |H|}} \quad \insquare{\forall i \in T : \Inf_i(Q^{\ell}_{\xi}(\by_{T})) \le \tau/4} &\ge& 1-\tau/4 \label{eqn:low_deg_inf}
\end{eqnarray}

\noindent Now, we show that after adding $P^h$ to $P^{\ell}$, the influences $\Inf_i(P_{\xi}(\bx_{T}))$ are still upper bounded by $\tau$, with high probability over $\xi$.
\begin{eqnarray}
\Inf_i(P_{\xi}(\bx_{T})) &=& \sum\limits_{\bsigma_T : (\sigma_T)_i \ne 0} \inparen{\sum_{\bsigma_H} \what{P}(\bsigma_H \circ \bsigma_T) \cdot \chi_{\bsigma_H}(\xi)}^2 \nonumber\\
&=& \sum\limits_{\bsigma_T : (\sigma_T)_i \ne 0} \inparen{\sum_{\bsigma_H} \what{P^{\ell}}(\bsigma_H \circ \bsigma_T) \cdot \chi_{\bsigma_H}(\xi) + \sum_{\bsigma_H} \what{P^h}(\bsigma_H \circ \bsigma_T) \cdot \chi_{\bsigma_H}(\xi)}^2 \nonumber\\
&\le& 2 \cdot \sum\limits_{\bsigma_T : (\sigma_T)_i \ne 0} \inparen{\sum_{\bsigma_H} \what{P^{\ell}}(\bsigma_H \circ \bsigma_T) \cdot \chi_{\bsigma_H}(\xi)}^2 + \inparen{\sum_{\bsigma_H} \what{P^h}(\bsigma_H \circ \bsigma_T) \cdot \chi_{\bsigma_H}(\xi)}^2 \nonumber\\
&=& 2 \cdot \inparen{\Inf_i(P_{\xi}^{\ell}(\bx_T)) + \Inf_i(P_{\xi}^{h}(\bx_T))} \label{eqn:inf_tr_in}
\end{eqnarray}

\noindent Since $\bbE_{\xi} \insquare{\Var(P_{\xi}^{h}(\bx_T))} \le \Var(P^{h}(\bx_T)) \le \eta \ $ (see Lemma~\ref{lem:exp_inf}), we have by Markov's inequality that,

$$\Pr\limits_{\quad \quad \xi \sim \mu_A^{\otimes |H|}} \quad \insquare{\Var(P^{h}_{\xi}(\bx_{T})) \le 4\eta/\tau} \quad \ge \quad 1-\tau/4$$

\noindent Since for all $i \in T$, we have $\Inf_i(P_{\xi}^{h}(\bx_T)) \le \Var(P_{\xi}^{h}(\bx_T))$ (see Fact~\ref{fact:influences}), we get that

\begin{equation}
\Pr\limits_{\quad \quad \xi \sim \mu_A^{\otimes |H|}} \quad \insquare{\forall i \in T : \Inf_i(P^{h}_{\xi}(\bx_{T})) \le 4\eta/\tau} \quad \ge \quad 1-\tau/4 \label{eqn:high_deg_inf}
\end{equation}

\noindent We will choose $\eta = (\tau/4)^2$, and thus, by union bound (using Equations~\ref{eqn:inf_tr_in}, \ref{eqn:low_deg_inf} and \ref{eqn:high_deg_inf}), we have that, 

$$\Pr\limits_{\quad \quad \xi \sim \mu_A^{\otimes |H|}} \quad \insquare{\forall i \in T : \Inf_i(P_{\xi}(\bx_{T})) \le \tau} \quad \ge \quad 1-\tau/2 \quad > \quad 1 - \tau$$




\noindent By exactly same flow of calculations for $Q(\by)$, we can have,
$$\Pr\limits_{\quad \quad \xi \sim \mu_B^{\otimes |H|}} \quad \insquare{\forall i \in T : \Inf_i(Q_{\xi}(\by_{T})) \le \tau} \quad \ge \quad 1-\tau/2 \quad > \quad 1 - \tau$$

\noindent This completes the proof of Lemma~\ref{lem:joint_reg}.

\end{proofof}
\fi
\section{Applying correlation bounds for low-influence functions}
\noapx{\label{sec:correlation_bds}}{\label{apx:correlation_bds}}

\ifnum\apx=1
This section contains the full proof of Lemma~\ref{lem:app_corr_bds}.
\else
The third step in our approach is to use {\em correlation bounds for low-influence functions} obtained from the invariance principle \cite{mossel2005noise, mossel2010gaussianbounds}, to convert the functions $f_1 : \calA^n \to [-1,1]$ and $g_1 : \calB^n \to [-1,1]$ into functions $f_2 : \calA^h \times \bbR \to [-1,1]$ and $g_2 : \calB^h \times \bbR \to [-1,1]$ using the following lemma\fullver{ (proof in Appendix~\ref{apx:correlation_bds})}{}.

\begin{lem}[Applying correlation bounds for low-influence functions] \label{lem:app_corr_bds}
Let $(\calA \times \calB, \mu)$ be a joint probability space. Let $\gamma > 0$ be any given constant parameter. There exists a $\tau \defeq \tau((\calA \times \calB, \mu), \gamma) > 0$ such that the following holds:

For all functions $f_1 : \calA^n \to [-1,1]$ and $g_1 : \calB^n \to [-1,1]$, and a subset $H \subseteq [n]$ with $|H| = h$, such that the restrictions of the functions $f_1$ and $g_1$ obtained by evaluating the coordinates in $H$ according to distribution $\mu$, satisfy the following (where we denote $T = [n] \setminus H$),
\begin{itemize}
\item With probability at least $1-\tau$ over $\xi \sim \mu_{A}^{\otimes h}$, the restriction $(f_1)_{\xi}(\bx_T)$ is such that for all $i \in T$, it is the case that $\Inf_i((f_1)_{\xi}(\bx_T)) \le \tau$
\item With probability at least $1-\tau$ over $\xi \sim \mu_{B}^{\otimes h}$, the restriction $(g_1)_{\xi}(\bx_T)$ is such that for all $i \in T$, it is the case that $\Inf_i((g_1)_{\xi}(\bx_T)) \le \tau$
\end{itemize}

\noindent There exist functions $f_2 : \calA^h \times \bbR \to [-1,1]$ and $g_2 : \calB^h \times \bbR \to [-1,1]$, such that,
$$\expect_{\bx \sim \mu_A^{\otimes n}} f_1(\bx) = \expect_{\substack{\bx \sim \mu_A^{\otimes h} \\ r_A \sim \calN(0,1)}} f_2(\bx, r_A) \quad \sAND \quad \expect_{\by \sim \mu_B^{\otimes n}} g_1(\by) = \expect_{\substack{\by \sim \mu_B^{\otimes h} \\ r_B \sim \calN(0,1)}} g_2(\by, r_B)$$

\noindent and,

$$\expect_{\substack{(\bx,\by)\sim \mu^{\otimes h} \\ (r_A, r_B) \sim \calG(\rho)}} \insquare{f_2(\bx, r_A) \cdot g_2(\by, r_B)} \quad \ge \quad \expect_{(\bx,\by) \sim \mu^{\otimes n}} \insquare{f_1(\bx) \cdot g_1(\by)} - \gamma $$

Additionally, $f_2$ and $g_2$ will have the following special form: there exist functions $f_2' : \calA^h \to \bbR$ and $g_2' : \calB^h \to \bbR$ such that,

$$f_2(\bx, r) = \infork{1 & r \ge f_2'(\bx) \\ -1 & r < f_2'(\bx)} \quad \sAND \quad g_2(\by, r) = \infork{1 & r \ge g_2'(\by) \\ -1 & r < g_2'(\by)}$$

\noindent Also, one may take $\tau = \gamma^{O\inparen{\frac{\log(1/\gamma) \log(1/\alpha)}{(1-\rho)\gamma}}}$, where $\rho = \rho(\calA, \calB; \mu)$ and $\alpha \defeq \alpha(\mu)$ is the minimum non-zero probability in $\mu$.
\end{lem}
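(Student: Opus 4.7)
The plan is to define $f_2$ and $g_2$ as threshold functions on a Gaussian, where the thresholds depend on $\bx_H$ and $\by_H$ in such a way that the conditional means match those of the restricted functions $(f_1)_{\bx_H}$ and $(g_1)_{\by_H}$, and then to use the Gaussian-correlation bound for low-influence functions of Mossel \cite{mossel2010gaussianbounds} to upper-bound the original correlation by the Gaussian one. Concretely, set $a(\bx_H) := \Ex_{\bx_T \sim \mu_A^{\otimes|T|}}[(f_1)_{\bx_H}(\bx_T)]$ and define $f_2'(\bx_H) := \Phi^{-1}((1 - a(\bx_H))/2)$ where $\Phi$ is the standard normal cdf; this guarantees that for $r \sim \calN(0,1)$, $\Ex_r[\sign(r - f_2'(\bx_H))] = a(\bx_H)$. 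Define $g_2'$ and $g_2$ analogously. The marginal identities then follow by the tower property:
\[
\Ex_{(\bx,r_A)}[f_2(\bx,r_A)] = \Ex_{\bx_H}[a(\bx_H)] = \Ex_{\bx \sim \mu_A^{\otimes n}}[f_1(\bx)],
\]
and similarly for $g_2$.

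For the correlation, let $\Gamma_\rho(a,b)$ denote $\Ex_{(r_A,r_B)\sim \calG(\rho)}[\sign(r_A - \Phi^{-1}((1-a)/2))\cdot \sign(r_B - \Phi^{-1}((1-b)/2))]$. By construction,
\[
\Ex_{(r_A,r_B)\sim \calG(\rho)}\!\bigl[f_2(\bx_H,r_A)\,g_2(\by_H,r_B)\bigr] \;=\; \Gamma_\rho\bigl(a(\bx_H),b(\by_H)\bigr),
\]
so the target correlation equals $\Ex_{(\bx_H,\by_H)\sim \mu^{\otimes h}}\bigl[\Gamma_\rho(a(\bx_H),b(\by_H))\bigr]$. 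Meanwhile
\[
\Ex_{(\bx,\by)\sim \mu^{\otimes n}}[f_1 g_1] \;=\; \Ex_{(\bx_H,\by_H)\sim \mu^{\otimes h}}\!\Bigl[\,\Ex_{(\bx_T,\by_T)\sim \mu^{\otimes |T|}}\!\bigl[(f_1)_{\bx_H}(\bx_T)\,(g_1)_{\by_H}(\by_T)\bigr]\Bigr],
\]
so it suffices to compare the two integrands. Let $\calE$ be the event that the restrictions satisfy $\Inf_i((f_1)_{\bx_H}) \le \tau$ for all $i \in T$ and $\Inf_i((g_1)_{\by_H}) \le \tau$ for all $i \in T$. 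By hypothesis and a union bound $\Pr[\calE^c] \le 2\tau$, and on $\calE^c$ both integrands are bounded in absolute value by $1$, contributing at most $4\tau$ to the difference.

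On $\calE$, both restrictions are $\tau$-low-influence $[-1,1]$-valued functions on the product space $(\calA^{|T|}\times \calB^{|T|},\mu^{\otimes|T|})$, whose maximal correlation equals $\rho := \rho(\calA,\calB;\mu)$ by tensorization. Applying Mossel's Gaussian correlation bound (Theorem 1.14 of \cite{mossel2010gaussianbounds}) then yields
\[
\Ex_{(\bx_T,\by_T)}\!\bigl[(f_1)_{\bx_H}(\bx_T)\,(g_1)_{\by_H}(\by_T)\bigr] \;\le\; \Gamma_\rho\bigl(a(\bx_H),b(\by_H)\bigr) + \epsilon(\tau),
\]
where $\epsilon(\tau)\to 0$ as $\tau\to 0$, with the quantitative rate depending on $\rho$ and $\alpha$. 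Integrating over $(\bx_H,\by_H)$ and combining with the $\calE^c$ contribution gives
\[
\Ex[f_1 g_1] \;\le\; \Ex_{(\bx_H,\by_H)}\bigl[\Gamma_\rho(a,b)\bigr] + \epsilon(\tau) + 4\tau \;=\; \Ex[f_2 g_2] + \epsilon(\tau) + 4\tau.
\]
Choosing $\tau$ small enough that $\epsilon(\tau) + 4\tau \le \gamma$ finishes the proof; the explicit dependence in Mossel's theorem translates into the claimed $\tau = \gamma^{O(\log(1/\gamma)\log(1/\alpha)/((1-\rho)\gamma))}$.

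The main obstacle is the invocation of the invariance-principle-based correlation bound: one has to confirm that Mossel's bound applies uniformly to restrictions of $\mu^{\otimes |T|}$ with the same $\rho$ (tensorization is what saves us), and carefully track the quantitative dependence of $\epsilon$ on $\tau,\rho,\alpha$ to land at the stated bound on $\tau$. The rest of the argument — the tower property for marginals and the $2\tau$-level splitting into good/bad restrictions — is essentially bookkeeping.
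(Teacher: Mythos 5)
Your proposal is correct and matches the paper's own proof essentially step for step: you define $f_2,g_2$ as Gaussian threshold functions whose conditional means match those of the restricted $f_1,g_1$, verify the marginal identities via the tower property, split into good/bad restrictions with a union bound contributing $O(\tau)$, and apply Mossel's invariance-principle correlation bound on the good restrictions before choosing $\tau$ small enough. The only cosmetic difference is the sign convention for the threshold functions (the paper's $\overline{P}_\mu$ thresholds on the negated Gaussian), which is immaterial by the symmetry of $\calG(\rho)$.
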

\fi
\ifnum\stocapx=2
The main technical tool in proving Lemma~\ref{lem:app_corr_bds} is a result about correlation bounds for low influence functions (which are generalizations of the `Majority is Stablest' theorem), which is obtained from the invariance principle \cite{mossel2005noise, mossel2010gaussianbounds}.
\else
As mentioned before, the main technical tool in proving Lemma~\ref{lem:app_corr_bds} is a result about correlation bounds for low influence functions (which are generalizations of the `Majority is Stablest' theorem). Before we state that theorem, we need the following definition, which is a slightly modified version of Definition 1.12 in \cite{mossel2010gaussianbounds}.

\begin{defn}[Gaussian stability] \label{defn:gaussian_stability}
Let $\Phi$ be the cumulative distribution function (CDF) of a standard $\mathcal{N}(0,1)$ Gaussian. Given $\rho \in [-1,1]$ and $\mu, \nu \in [-1,1]$, we define,
\begin{eqnarray*}
\overline{\Gamma}_{\rho}(\mu,\nu) &=& \Ex [\overline{P}_{\mu}(X) \cdot \overline{Q}_{\nu}(Y)]\\
\underline{\Gamma}_{\rho}(\mu,\nu) &=& - \Ex [\overline{P}_{\mu}(X) \cdot \overline{Q}_{-\nu}(Y)]
\end{eqnarray*}
where $(X,Y)$ is distributed according to $\calG(\rho)$ and 
$$\overline{P}_{\mu}(X) = \infork{1 & X \le \Phi^{-1}(\frac{1+\mu}{2}) \\ -1 & \text{otherwise}} \quad \sAND \quad \overline{Q}_{\nu}(X) = \infork{1 & Y \le \Phi^{-1}(\frac{1+\nu}{2}) \\ -1 & \text{otherwise}}$$
\myignore{
\noindent \unsure{we don't need this definition. Should we still define it?}Similarly, we define,
$$\underline{\Gamma}_{\rho}(\mu,\nu) = \Ex [\underline{P}_{\mu}(X) \cdot \underline{Q}_{\nu}(Y)]$$
where $(X,Y)$ is distributed according to $\calG(\rho)$ and 
$$\underline{P}_{\mu}(X) = \infork{1 & X \le \Phi^{-1}(\frac{1+\mu}{2}) \\ -1 & \text{otherwise}} \quad \sAND \quad \underline{Q}_{\nu}(X) = \infork{1 & Y \ge \Phi^{-1}(\frac{1-\nu}{2}) \\ -1 & \text{otherwise}}$$
}

\noindent Note that for $(X, Y) \sim \calG(\rho)$, we have that,
$$\Ex_{X} \insquare{\overline{P}_{\mu}(X)} = \mu \quad \sAND \quad \Ex_{Y} \insquare{\overline{Q}_{\nu}(Y)} = \nu = \Ex_{Y} \insquare{-\overline{Q}_{-\nu}(Y)}$$
\end{defn}

\noindent With this definition in hand, we can state the correlation bounds for low influential functions that are obtained from invariance principle.

\begin{theorem}[Correlation bounds from invariance principle; \cite{mossel2005noise, mossel2010gaussianbounds}]\label{th:inv_princ}
Let $(\calA \times \calB, \mu)$ be a joint probability space. As before, let $\alpha = \alpha(\mu)$ be the minimum probability of any atom in $\calA \times \calB$. Let $\rho = \rho(\calA, \calB; \mu)$ be the maximal correlation of the joint probability space (see Definition~\ref{def:max_corr}).
	
Then, for all $\eps > 0$, there exists $\tau \defeq \tau((\calA \times \calB, \mu), \eps) > 0$ such that if
$$P : \calA^n \to [-1,1] \quad \sAND \quad Q : \calB^n \to [-1,1]$$
satisfy $\Inf_i(P) \le \tau$ and $\Inf_i(Q) \le \tau$ for all $i \in [n]$, then
$$\underline{\Gamma}_{\rho}\inparen{\Ex_{\bx}[P(\bx)] \ , \ \Ex_{\by}[Q(\by)]} \ - \ \eps
\quad \le \quad \Ex_{(\bx, \by) \sim \mu^{\otimes n}}\insquare{P(\bx) Q(\by)} \quad \le \quad
\overline{\Gamma}_{\rho}\inparen{\Ex_{\bx}[P(\bx)] \ , \ \Ex_{\by}[Q(\by)]} \ + \ \eps$$
Furthermore, one may take
$$\tau = \eps^{O\inparen{\frac{\log(1/\eps) \log(1/\alpha)}{(1-\rho)\eps}}}$$
\end{theorem}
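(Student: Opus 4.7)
The plan is to follow the now-standard invariance-principle pipeline of Mossel–O'Donnell–Oleszkiewicz, adapted by Mossel~\cite{mossel2010gaussianbounds} to bilinear forms on correlated product spaces, and reduce the statement to Borell's Gaussian isoperimetric inequality. I would use $P, Q$ essentially as black boxes through four stages: smooth, truncate, invariance, Borell.

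\emph{Step 1: Smooth and truncate to constant degree.} Apply the Bonami-Beckner operator $T_{1-\eta}P$ and $T_{1-\eta}Q$ with $\eta$ small enough (depending on $\rho$ and $\eps$) so that Lemma~\ref{lem:mossel_smoothing} bounds $|\Ex[PQ]-\Ex[(T_{1-\eta}P)(T_{1-\eta}Q)]|$ by $\eps/4$. Smoothing preserves $\Ex[P], \Ex[Q]$ and only decreases individual influences. Then truncate the resulting Fourier expansions at degree $d = O(\log(1/\eps)/\eta)$: by the geometric decay of $T_{1-\eta}$ on high-degree terms, the $L^2$-tail (and thus the perturbation to $\Ex[PQ]$) is at most $\eps/4$. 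After this step we are working with degree-$d$ multilinear polynomials $\tilde P, \tilde Q$ whose marginal expectations agree with those of $P, Q$, whose influences are $\le \tau$, and with $|\Ex[\tilde P \tilde Q] - \Ex[PQ]| \le \eps/2$.

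\emph{Step 2: Invariance principle on correlated spaces.} Fix orthonormal bases $\{\calX_\sigma\}$ and $\{\calY_\sigma\}$ on $(\calA,\mu_A)$ and $(\calB,\mu_B)$ (with $\calX_0 = \calY_0 = \mathbf 1$) so that the cross-correlation matrix $M_{ij} = \Ex_\mu[\calX_i(x)\calY_j(y)]$ is diagonal after an SVD, with spectral radius bounded by $\rho = \rho(\calA,\calB;\mu)$. Replace the ensemble $(\calX_\sigma(x_i),\calY_\sigma(y_i))_i$ by a matching Gaussian ensemble $(G_i,G_i')$ of independent 2-dimensional Gaussians with the same covariance. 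Mossel's vector-valued invariance principle applied to the bilinear functional $(\tilde P,\tilde Q) \mapsto \Ex[\tilde P \tilde Q]$ yields, under hypothesis $\Inf_i \le \tau$ and $\deg \le d$, an error $|\Ex_\mu[\tilde P \tilde Q] - \Ex_{\calG}[\hat P(\bG)\hat Q(\bG')]| \le \eps/4$, where the error is $\poly(d,\alpha^{-1})\cdot\tau^{\Omega(1/d)}$; solving this for the final $\tau$ (and combining with the $\eta, d$ from Step 1) produces the stated quantitative bound.

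\emph{Step 3: Round in the Gaussian world and invoke Borell.} The Gaussianized polynomials $\hat P, \hat Q$ may lie slightly outside $[-1,1]$; clip them to $[-1,1]$, using the standard Gaussian hypercontractive tail bound on low-degree Gaussian polynomials to pay only $O(\eps)$ in $L^2$ and hence in $\Ex[\hat P \hat Q]$. At this point $\bar P, \bar Q : \bbR^n \to [-1,1]$ are functions of jointly Gaussian vectors $(\bG,\bG')$ whose coordinate-wise correlation is at most $\rho$ (by the SVD choice of basis). By rotational invariance of Gaussians and Borell's isoperimetric theorem applied in its $\pm 1$-valued two-sided form (lifted from $\pm 1$ to $[-1,1]$ by the standard rounding trick: $\bar P = 2\Pr[G \le \Phi^{-1}((1+\bar P)/2)] - 1$), the bilinear form $\Ex[\bar P \bar Q]$ subject to $\Ex[\bar P] = \mu', \Ex[\bar Q] = \nu'$ is extremized by the half-space thresholds $\overline{P}_{\mu'}$ and $\pm\overline{Q}_{\pm\nu'}$ of Definition~\ref{defn:gaussian_stability}. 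This yields sandwiching by $\underline\Gamma_\rho(\mu',\nu')$ and $\overline\Gamma_\rho(\mu',\nu')$, and since smoothing/truncation/invariance/clipping only perturbed the means by $O(\eps)$, a Lipschitz estimate on $\overline\Gamma_\rho,\underline\Gamma_\rho$ in each argument finishes the argument.

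The main obstacle I anticipate is \emph{Step 2}: the general invariance principle for bilinear forms on correlated product spaces requires the correct choice of basis and a careful noise-operator setup so that the two sides $\tilde P(\bx)$ and $\tilde Q(\by)$ are invariance-principled \emph{simultaneously} while preserving the cross-correlation structure of $\mu$; this is exactly the technical heart of~\cite{mossel2010gaussianbounds} and where the quantitative dependence $\tau = \eps^{O(\log(1/\eps)\log(1/\alpha)/((1-\rho)\eps))}$ must be extracted. The rest of the pipeline is, by now, essentially bookkeeping.
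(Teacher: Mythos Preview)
The paper does not prove this theorem at all: it is quoted verbatim as a black-box result from \cite{mossel2005noise, mossel2010gaussianbounds}, introduced with ``With this definition in hand, we can state the correlation bounds for low influential functions that are obtained from invariance principle,'' and then immediately used in the proof of Lemma~\ref{lem:app_corr_bds}. So there is no ``paper's own proof'' to compare against.

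That said, your sketch is an accurate summary of how the cited references actually establish the result: the pipeline smooth $\to$ truncate to constant degree $\to$ multilinear invariance on correlated spaces (with basis chosen to diagonalize the cross-covariance so the singular values are bounded by $\rho$) $\to$ clip in Gaussian space $\to$ Borell is exactly the argument of \cite{mossel2010gaussianbounds}, and your identification of Step~2 as the technical crux, together with the origin of the quantitative bound on $\tau$, is correct. One small quibble: in Step~3 you should be a bit more careful that after the invariance swap the Gaussian coordinates may have pairwise correlation \emph{equal to} the singular values (each $\le \rho$), not all exactly $\rho$; the conclusion still holds because $\overline\Gamma_\rho$ is monotone in $\rho$, but this monotonicity (or a further noise-operator step to push all correlations down to exactly $\rho$) should be invoked explicitly.
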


\noindent Intuitively, this theorem says that if $P$ and $Q$ are low-influential, then their correlation is not much more than that of appropriate threshold functions applied on $\rho$-correlated gaussians. With this tool in hand, we are now ready to prove Lemma~\ref{lem:app_corr_bds}.\\

\begin{proofof}{Lemma~\ref{lem:app_corr_bds}}
Suppose we have $f_1 : \calA^n \to [-1,1]$ and $g_1 : \calB^n \to [-1,1]$, and a subset $H \subseteq [n]$ with $|H| = h$, such that the restrictions of the functions $f_1$ and $g_1$ obtained by evaluating the coordinates in $H$ according to distribution $\mu$, satisfy the properties as stated in the lemma. We construct function $f_2 : \calA^h \times \bbR \to [-1,1]$ and $g_2 : \calB^h \times \bbR \to [-1,1]$ by replacing the functions obtained after restricting the variables in $H$ by appropriate threshold functions acting on $\rho$-correlated gaussians, namely,

$$\forall \ (\bx,r) \in \calA^h \times \bbR \quad : \quad f_2(\bx, r) = \overline{P}_{\nu_1}(r) \quad \text{where} \quad \nu_1 \ \defeq \  \expect\limits_{\bx_T \sim \mu_A^{\otimes n-h}} \ \insquare{f_1(\bx_H \gets \bx, \bx_T)}$$
$$\forall \ (\by,r) \in \calB^h \times \bbR \quad : \quad g_2(\by, r) = \overline{Q}_{\nu_2}(r) \quad \text{where} \quad \nu_2 \ \defeq \ \expect\limits_{\by_T \sim \mu_B^{\otimes n-h}} \ \insquare{f_1(\by_H \gets \by, \by_T)}$$

\noindent where $\overline{P}_{\nu}$ and $\overline{Q}_{\nu}$ are as defined in Definition~\ref{defn:gaussian_stability}.\footnote{For simplicity, we will abuse notations in the followng sense: when we say $f_1(\bx)$, we mean $\bx \in \calA^n$, but when we say $f_2(\bx,r)$, we mean $\bx \in \calA^h$ and $r \in \bbR$.}\\

\noindent It follows from definition, that $\Ex [f_2(\bx,r)] = \Ex [f_1(\bx)]$ and $\Ex [g_2(\by,r)] = \Ex [g_1(\by)]$. That is, this process has not changed the individual means of $f_1$ and $g_1$. We now need to prove that the correlation is not hurt by much. From Lemma~\ref{lem:joint_reg} and a simple union bound, we know that with probability $1-2\tau$, a random restriction $(\bx_H, \by_H)$ for the coordinates in $H$ is such that,

$$\forall i \in T \quad : \Inf_i((f_1)_{\bx_H}(\bx_T)) \le \tau \quad \sAND \quad \Inf_i((g_1)_{\by_H}(\by_T)) \le \tau$$

\noindent Let's call all the tuples $(\bx_H, \by_H)$ for which the above happens as `good'.

\begin{eqnarray*}
& &\hspace*{-7mm}\Ex_{\bx, \by} \ f_1(\bx) g_1(\by) \\
&=& \Ex_{\bx_H, \by_H} \insquare{\Ex_{\bx_T, \by_T} \ f_1(\bx_H, \bx_T) \cdot g_1(\by_H, \by_T)}\\
&=& \Pr[(\bx_H, \by_H) \text{ is not `good'}] \cdot \Ex_{\bx_H, \by_H} \insquare{\Ex_{\bx_T, \by_T} \ f_1(\bx_H, \bx_T) \cdot g_1(\by_H, \by_T) \bigg | (\bx_H, \by_H) \text{ is not `good'}}\\
& & + \ \Pr[(\bx_H, \by_H) \text{ is `good'}] \cdot \Ex_{\bx_H, \by_H} \insquare{\Ex_{\bx_T, \by_T} \ f_1(\bx_H, \bx_T) \cdot g_1(\by_H, \by_T) \bigg | (\bx_H, \by_H) \text{ is `good'}}\\
&\le& \Pr[(\bx_H, \by_H) \text{ is not `good'}] \cdot 1\\
& & + \ \Pr[(\bx_H, \by_H) \text{ is `good'}] \cdot \Ex_{\bx_H, \by_H} \insquare{\Ex_{r_A, r_B} \ f_2(\bx_H, r_A) \cdot g_2(\by_H, r_B) + \eps \bigg | (\bx_H, \by_H) \text{ is `good'}}\\
&=& \Pr[(\bx_H, \by_H) \text{ is not `good'}] \cdot \inparen{1 - \Ex_{\bx_H, \by_H} \insquare{\Ex_{r_A, r_B} \ f_2(\bx_H, r_A) \cdot g_2(\by_H, r_B) + \eps \bigg | (\bx_H, \by_H) \text{ is not `good'}}}\\
& & + \ \Ex_{\bx_H, \by_H} \insquare{\Ex_{r_A, r_B} \ f_2(\bx_H, r_A) \cdot g_2(\by_H, r_B) + \eps}\\
&\le& \Ex_{\bx_H, \by_H} \insquare{\Ex_{r_A, r_B} \ f_2(\bx_H, r_A) \cdot g_2(\by_H, r_B)} + 2\tau \cdot (2-\eps) + \eps\\
&\le& \Ex_{\bx_H, \by_H} \insquare{\Ex_{r_A, r_B} \ f_2(\bx_H, r_A) \cdot g_2(\by_H, r_B)} + 2 \eps
\end{eqnarray*}

\noindent Step 3 above is due to the definition of $f_2$ and $g_2$ and Theorem~\ref{th:inv_princ}. The last step follows because $\tau \ll \eps$, and so we can upper bound $2\tau \cdot (2-\eps) \le \eps$.\\

\noindent Thus, finally we choose $\eps = \gamma/2$ for Theorem~\ref{th:inv_princ}, and we get $\tau = \tau(\gamma)$ accordingly, thereby getting the final requirement of Lemma~\ref{lem:app_corr_bds}, that is,

$$\expect_{\substack{(\bx_H,\by_H)\sim \mu^{\otimes h} \\ (r_A, r_B) \sim \calG(\rho)}} f_2(\bx_H, r_A) \cdot g_2(\by_H, r_B) \quad \ge \quad \expect_{(\bx,\by) \sim \mu^{\otimes n}} f_1(\bx) \cdot g_1(\by) - \gamma$$
\end{proofof}
\fi
\section{Simulating Correlated Gaussians}
\noapx{\label{sec:gaussians}}{\label{apx:gaussians}}

\ifnum\apx=1
This section contains the full proof of Lemma~\ref{lem:app_witsenhausen_rounding}.
\else
In this section, we use the technique due to Witsenhausen \cite{witsenhausen1975sequences} which shows that for any joint probability space $(\calA \times \calB, \mu)$ with maximal correlation $\rho$, Alice and Bob can non-interactively simulate $\rho$-correlated gaussians upto arbitrarily small {\em 2-dimensional Kolmogorov distance}. We obtain the following lemma\fullver{ (proof in Appendix~\ref{apx:gaussians})}{}.

\begin{lem}[Witsenhausen's rounding] \label{lem:app_witsenhausen_rounding}
Let $(\calA \times \calB, \mu)$ be a joint probability space, and let $\rho = \rho(\calA, \calB; \mu)$ be its maximal correlation. Let $\zeta > 0$ be any given parameter. Then, there exists $w \defeq w((\calA \times \calB, \mu), \zeta) \in \bbN$, such that the following holds:

For all functions $f_2 : \calA^h \times \bbR \to [-1,1]$ and $g_2 : \calB^h \times \bbR \to [-1,1]$ having the following special form: there exist functions $f_2' : \calA^h \to \bbR$ and $g_2' : \calB^h \to \bbR$ such that,

$$f_2(\bx, r) = \infork{1 & r \ge f_2'(\bx) \\ -1 & r < f_2'(\bx)} \quad \sAND \quad g_2(\by, r) = \infork{1 & r \ge g_2'(\by) \\ -1 & r < g_2'(\by)}$$

\noindent there exist functions $f_3 : \calA^{h+w} \to [-1,1]$ and $g_3 : \calB^{h+w} \to [-1,1]$, such that,

$$\inabs{\Ex\limits_{\bx \sim \mu_A^{\otimes (h+w)}} f_3(\bx) - \Ex\limits_{\substack{\bx \sim \mu_A^{\otimes h} \\ r_A \sim \calN(0,1)}} [f_2(\bx, r_A)]} \le \zeta \quad \sAND \quad \inabs{\Ex\limits_{\by \sim \mu_B^{\otimes (h+w)}} g_3(\by) - \Ex\limits_{\substack{\bx \sim \mu_B^{\otimes h} \\ r_B \sim \calN(0,1)}} [g_2(\by, r_B)]} \le \zeta$$

\noindent and,

$$\inabs{\Ex\limits_{(\bx, \by) \sim \mu^{\otimes (h+w)}} [f_3(\bx) \cdot g_3(\by)] - \Ex\limits_{\substack{(\bx,\by) \sim \mu^{\otimes h} \\ (r_A, r_B) \sim \calG(\rho)}} [f_2(\bx, r_A) \cdot g_2(\by, r_B)]} \quad \le \quad \zeta$$

\noindent In particular, one may take $w = O\inparen{\frac{1+\rho}{\alpha \cdot (1-\rho)^{3} \cdot \zeta^2}}$, where $\alpha \defeq \alpha(\mu)$ is the minimum non-zero probability in $\mu$.
\end{lem}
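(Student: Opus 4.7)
The plan is to realize $f_3$ and $g_3$ as the compositions of $f_2$ and $g_2$ with a Witsenhausen-style Gaussian simulation built from $w$ extra i.i.d.\ samples from $\mu$. First, since $(\calA \times \calB, \mu)$ is finite and $\rho = \rho(\calA,\calB;\mu)$ is defined as a supremum of a continuous function over a compact product of spheres, the supremum is attained: there exist $a : \calA \to \bbR$ and $b : \calB \to \bbR$ with $\Ex[a(X)] = \Ex[b(Y)] = 0$, $\Var(a(X)) = \Var(b(Y)) = 1$, and $\Ex[a(X)b(Y)] = \rho$. Given $w$ fresh samples $(X_1,Y_1),\dots,(X_w,Y_w) \sim \mu$, Alice computes $\overline{A} = \tfrac{1}{\sqrt{w}}\sum_{i=1}^w a(X_i)$ and Bob computes $\overline{B} = \tfrac{1}{\sqrt{w}}\sum_{i=1}^w b(Y_i)$. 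These depend only on the respective party's inputs, so they can be used by each party independently.

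Next, I define $f_3(\bx_H, \bx_W) := f_2(\bx_H, \overline{A}(\bx_W))$ and $g_3(\by_H, \by_W) := g_2(\by_H, \overline{B}(\by_W))$, where $(\bx_H,\bx_W) \in \calA^h \times \calA^w$ and similarly for Bob. The key point is that $f_2$ is the $\pm 1$ threshold function $r \mapsto \sgn(r - f_2'(\bx))$, so the $r$-argument of $f_2$ enters only through the indicator $\indicator[\overline{A} \ge f_2'(\bx_H)]$, and similarly for $g_2$. Thus both marginals and the joint product expectation are determined by one- and two-dimensional CDFs of $(\overline{A},\overline{B})$ evaluated at the (random) thresholds $f_2'(\bx_H)$ and $g_2'(\by_H)$, which are independent of the $w$ extra samples.

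The approximation is then supplied directly by the two-dimensional Berry--Esseen theorem, Lemma~\ref{lem:2d_CLT}, applied to the i.i.d.\ pair $(a(X_i), b(Y_i))$, whose covariance matrix is exactly $\bigl(\begin{smallmatrix}1&\rho\\ \rho&1\end{smallmatrix}\bigr)$. Pick $\zeta' = \zeta/8$ and choose $w = w((X,Y),\zeta')$ from Lemma~\ref{lem:2d_CLT}, so that for every $a',b' \in \bbR$,
\[
\bigl|\Pr[\overline{A}\le a',\overline{B}\le b'] - \Pr[\calG_1 \le a',\calG_2 \le b']\bigr| \le \zeta'.
\]
Writing $f_2(\bx_H, r) \cdot g_2(\by_H, r') = 1 - 2\indicator[r < f_2'(\bx_H)] - 2\indicator[r' < g_2'(\by_H)] + 4\indicator[r < f_2'(\bx_H),\,r' < g_2'(\by_H)]$, conditioning on $(\bx_H,\by_H)$ and averaging, the difference between $\Ex[f_3 g_3]$ and $\Ex[f_2(\bx_H, r_A) g_2(\by_H, r_B)]$ is bounded by $8\zeta' = \zeta$. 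An identical (easier) argument using only the one-dimensional marginals of the same bound handles the mean discrepancies $|\Ex[f_3] - \Ex[f_2]|$ and $|\Ex[g_3] - \Ex[g_2]|$.

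The main technical point is ensuring that the approximation in Lemma~\ref{lem:2d_CLT} holds \emph{uniformly} over the thresholds $f_2'(\bx_H)$ and $g_2'(\by_H)$, which is precisely what makes the two-dimensional Kolmogorov bound (rather than just convergence in distribution) the right tool; this is already built into Lemma~\ref{lem:2d_CLT} since it gives a bound uniform in $(a,b)$. The quantitative choice $w = O\bigl(\tfrac{1+\rho}{\alpha(1-\rho)^3 \zeta^2}\bigr)$ then follows by substituting $\zeta' = \Theta(\zeta)$ into the explicit bound stated in Lemma~\ref{lem:2d_CLT}, noting that the minimum atom probability of the joint distribution of $(a(X),b(Y))$ is at least $\alpha(\mu)$. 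No other subtleties arise.
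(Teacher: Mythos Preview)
Your proposal is correct and follows essentially the same approach as the paper: build approximate correlated Gaussians from $w$ fresh samples via normalized sums of maximal-correlation witnesses $a,b$, then compose with the threshold functions $f_2,g_2$, and control the error via the uniform two-dimensional Berry--Esseen bound (Lemma~\ref{lem:2d_CLT}). The only cosmetic difference is that the paper packages the Gaussian-simulation step into an intermediate lemma (Lemma~\ref{lem:witsenhausen}) stated in terms of $\overline{\Gamma}_\rho(\nu_1,\nu_2)$ and then invokes it pointwise over the $h$ header coordinates, whereas you carry out the same computation directly.
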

\fi
\ifnum\stocapx=2 \else
\noindent The main idea in obtaining the functions $f_3$ and $g_3$ is the technique of Witsenhausen \cite{witsenhausen1975sequences}, of simulating $\rho$-correlated gaussians from many copies of $(\calA \times \calB, \mu)$.

\begin{lem}[Simulating gaussians \cite{witsenhausen1975sequences}] \label{lem:witsenhausen}
Let $(\calA \times \calB, \mu)$ be a joint probability space, and let $\rho = \rho(\calA, \calB; \mu)$ be its maximal correlation. Let $\zeta > 0$ be any given parameter. Then, there exists $w \defeq w((\calA \times \calB, \mu), \zeta) \in \bbN$, such that the following holds,

For all $\nu_1, \nu_2 \in [-1,+1]$, there exist functions $P_{\nu_1} : \calA^w \to [-1,1]$ and $Q_{\nu_2} : \calB^w \to [-1,1]$ such that $|\Ex[P_{\nu_1}(\bx)] - \nu_1| \le \zeta/2$, $|\Ex[Q_{\nu_2}(\by)] - \nu_2| \le \zeta/2$ and
$$ \inabs{\Ex_{(\bx,\by) \sim \mu^{\otimes w}} \ [P_{\nu_1}(\bx) Q_{\nu_2}(\by)] - \overline{\Gamma}_{\rho}(\nu_1,\nu_2)} \le \zeta$$

\noindent In particular, one may take $w = O\inparen{\frac{1+\rho}{\alpha \cdot (1-\rho)^{3} \cdot \zeta^2}}$, where $\alpha \defeq \alpha(\mu)$.
\end{lem}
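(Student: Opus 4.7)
The plan is to follow Witsenhausen's classical technique of using a 2-dimensional Central Limit Theorem. First, since $(\calA \times \calB, \mu)$ is a finite probability space with maximal correlation $\rho$, I will invoke compactness to obtain functions $f^* \colon \calA \to \bbR$ and $g^* \colon \calB \to \bbR$ that \emph{achieve} the supremum in Definition~\ref{def:max_corr}; that is, $\Ex[f^*(x)] = \Ex[g^*(y)] = 0$, $\Var[f^*(x)] = \Var[g^*(y)] = 1$ and $\Ex_{(x,y)\sim\mu}[f^*(x)\,g^*(y)] = \rho$. I will then take $w$ i.i.d.\ samples $(x_1,y_1),\ldots,(x_w,y_w) \sim \mu$ and form the normalized sums $\overline{X}(\bx) \defeq \tfrac{1}{\sqrt{w}}\sum_{i=1}^w f^*(x_i)$ and $\overline{Y}(\by) \defeq \tfrac{1}{\sqrt{w}}\sum_{i=1}^w g^*(y_i)$, which have the same first and second moments as a pair drawn from $\calG(\rho)$.

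Next, I will define the simulators as threshold functions: choose the thresholds $a_1 \defeq \Phi^{-1}((1+\nu_1)/2)$ and $a_2 \defeq \Phi^{-1}((1+\nu_2)/2)$, and set
\[
P_{\nu_1}(\bx) \;\defeq\; \overline{P}_{\nu_1}\bigl(\overline{X}(\bx)\bigr) \;=\; \begin{cases} +1 & \overline{X}(\bx) \le a_1 \\ -1 & \overline{X}(\bx) > a_1\end{cases},\qquad Q_{\nu_2}(\by) \;\defeq\; \overline{Q}_{\nu_2}\bigl(\overline{Y}(\by)\bigr),
\]
in direct analogy with Definition~\ref{defn:gaussian_stability}. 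The key input is Lemma~\ref{lem:2d_CLT} applied to the pair $(f^*(x), g^*(y))$: for a parameter $\zeta'>0$ to be chosen, there exists $w = O\bigl((1+\rho)/(\alpha\,(1-\rho)^3\,\zeta'^2)\bigr)$ such that for every $a,b \in \bbR$,
\[
\bigl|\Pr[\overline{X} \le a,\, \overline{Y} \le b] - \Pr[\calG_1 \le a,\, \calG_2 \le b]\bigr| \;\le\; \zeta',
\]
where $(\calG_1,\calG_2) \sim \calG(\rho)$. Note that the minimum atom of the pushed-forward distribution $(f^*(x),g^*(y))$ is at least $\alpha = \alpha(\mu)$, so the bound on $w$ only depends on $(\calA \times \calB, \mu)$.

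To conclude, I will control the marginal means and the joint expectation through the Kolmogorov-type bound above. For the marginals, $\Ex[P_{\nu_1}(\bx)] = 2\Pr[\overline{X} \le a_1] - 1$, and Berry--Esseen gives $|\Pr[\overline{X} \le a_1] - \Phi(a_1)| \le \zeta'$ (by taking $b = +\infty$, or by the one-dimensional consequence of the two-dimensional bound), so $|\Ex[P_{\nu_1}(\bx)] - \nu_1| \le 2\zeta'$; similarly for $Q_{\nu_2}$. For the joint expectation, I will expand
\[
\Ex\bigl[P_{\nu_1}(\bx)\,Q_{\nu_2}(\by)\bigr] = 1 - 2\Pr[\overline{X}\le a_1,\overline{Y}>a_2] - 2\Pr[\overline{X}>a_1,\overline{Y}\le a_2],
\]
write each of the two probabilities on the right as a difference of joint CDFs, and apply the two-dimensional bound term-by-term; the same expansion for $(\calG_1,\calG_2)$ yields $\overline{\Gamma}_\rho(\nu_1,\nu_2)$, so the difference is $O(\zeta')$. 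Choosing $\zeta' \defeq \zeta/c$ for a suitable absolute constant $c$ gives the claimed error $\zeta$ and yields the stated value of $w$.

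The only delicate point in this plan is ensuring that the $2$-dimensional CLT bound can be applied uniformly in the thresholds $a_1,a_2$ — but Lemma~\ref{lem:2d_CLT} is already stated uniformly over all $a,b \in \bbR$, so no further work is needed. The hypothetical obstacle of the maximal correlation being unattained on a finite space is avoided by compactness; the only genuinely quantitative step is matching constants when combining the marginal and joint Berry--Esseen estimates to derive the advertised dependence of $w$ on $\zeta$, $\rho$, and $\alpha$.
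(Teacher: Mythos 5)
Your proposal follows exactly the paper's route: take functions $f^*, g^*$ achieving the maximal correlation, form the normalized sums $\overline{X}, \overline{Y}$, define $P_{\nu_1}, Q_{\nu_2}$ as the same threshold functions, and invoke the two-dimensional Berry--Esseen bound (Lemma~\ref{lem:2d_CLT}) with a suitably scaled parameter. The only difference is that you spell out the ``it easily follows'' step — expanding $\Ex[PQ]$ via $1 - 2\Pr[P\neq Q]$ and comparing joint CDFs term-by-term — which the paper leaves implicit; this is correct and complete.
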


\begin{proof}
Since $\rho = \rho(\calA, \calB; \mu)$, we have that there exist functions $f : \calA \to \bbR$ and $g : \calB \to \bbR$ such that $\Ex_{x \sim \mu_A} f(x) = \Ex_{y \sim \mu_B} g(y) = 0$, $\Var(f) = \Var(g) = 1$ and $\Ex_{(x,y) \sim \mu} [f(x) \cdot g(y)] = \rho$.

\noindent We define $F(\bx) = \frac{\sum_{i=1}^w f(x_i)}{\sqrt{w}}$ and $G(\by) = \frac{\sum_{i=1}^w g(y_i)}{\sqrt{w}}$. And define $P_{\nu_1}$ and $Q_{\nu_2}$ as follows,
$$P_{\nu_1}(\bx) = \infork{1 & F(\bx) \le \Phi^{-1}(\frac{1+\nu_1}{2}) \\ -1 & \text{otherwise}} \quad \sAND \quad Q_{\nu_2}(\by) = \infork{1 & G(\by) \le \Phi^{-1}(\frac{1+\nu_2}{2}) \\ -1 & \text{otherwise}}$$

\noindent We apply Lemma~\ref{lem:2d_CLT} for the pair of random variables $(f(x), g(y))$ with parameter $\zeta$ being $\zeta/4$, to obtain the appropriate $w$. It easily follows that, $|\Ex [P_{\nu_1}(\bx)] - \nu_1| \le \zeta/2$ and $|\Ex [Q_{\nu_2}(\by)] - \nu_2| \le \zeta/2$ and

$$\inabs{\Ex_{(\bx,\by) \sim \mu^{\otimes w}} \ [P_{\nu_1}(\bx) Q_{\nu_2}(\by)] - \overline{\Gamma}_{\rho}(\nu_1,\nu_2)} \le \zeta $$
\end{proof}

\noindent We are now ready to prove Lemma~\ref{lem:app_witsenhausen_rounding}.\\

\begin{proofof}{Lemma~\ref{lem:app_witsenhausen_rounding}}
Given $(\calA \times \calB, \mu)$ and $\zeta$, we obtain $w$ as in Lemma~\ref{lem:witsenhausen}. Given functions $f_2$ and $g_2$, of the said form, we construct functions $f_3 : \calA^{h+w} \to [-1,1]$ and $g_3 : \calB^{h+w} \to [-1,1]$ by invoking Lemma~\ref{lem:witsenhausen} for every assignment to the first $h$ variables with parameter $\zeta$. In particular for every $\bx_1 \in \calA^h, \bx_2 \in \calA^w$, we define $f_3(\bx_1, \bx_2) = P_{f_2'(\bx_1)}(\bx_2)$. Similarly, for $\by_1 \in \calB^h, \by_2 \in \calA^w$, we define $g_3(\by_1, \by_2) = Q_{g_2'(\by_1)}(\by_2)$.

\noindent This gives us that $|\Ex [f_3(\bx)] - \Ex[f_2(\bx, r_A)]| \le \zeta/2$ and $|\Ex [g_3(\by)] - \Ex[g_2(\by, r_B)]| \le \zeta/2$ and,

$$\inabs{\Ex\limits_{(\bx, \by) \sim \mu^{\otimes (h+w)}} [f_3(\bx) \cdot g_3(\by)] - \Ex\limits_{\substack{(\bx,\by) \sim \mu^{\otimes h} \\ (r_A, r_B) \sim \calG(\rho)}} [f_2(\bx, r_A) \cdot g_2(\by, r_B)]} \le \zeta$$

\noindent Thus, we have $f_3$ and $g_3$ as desired.
\end{proofof}
\fi
\ifnum\apx=0
\section{Putting it all together!} \label{sec:together}

In this section we finally use all the lemmas we have developed to prove Theorem~\ref{thm:main-lemma}.\\

\begin{proofof}{Theorem~\ref{thm:main-lemma}}
Given $(\calA \times \calB, \mu)$ and $\delta > 0$ and functions $f : \calA^n \to [-1,1]$ and $g : \calB^n \to [-1,1]$, we wish to apply Lemma~\ref{lem:app_corr_bds} with parameter $\gamma = \delta/3$ followed by Lemma~\ref{lem:app_witsenhausen_rounding} with parameter $\zeta = \delta/3$. Lemma~\ref{lem:app_corr_bds} will dictate a value $\tau = \tau((\calA \times \calB, \mu), \gamma)$. We wish to apply the Joint regularity lemma (Lemma~\ref{lem:joint_reg}), with this parameter $\tau$, which will dictate a value of $\eta = \eta(\tau)$. Using this value of $\eta$, and $\lambda = \delta/3$, we apply the Smoothing lemma (Lemma~\ref{lem:smoothing}), which will dictate a value of $d = d((\calA \times \calB, \mu), \lambda, \eta)$. We use this $d$ to feed into the joint regularity lemma (Lemma~\ref{lem:joint_reg}), to obtain a value of $h$. The final value of $n_0$ is the sum of $h((\calA \times \calB, \mu), d, \tau)$ given by the joint regularity lemma (Lemma~\ref{lem:joint_reg}) and $w((\calA \times \calB, \mu), \zeta)$ given by Witsenhausen's rounding procedure (Lemma~\ref{lem:app_witsenhausen_rounding}). This dependency of parameters is pictorially described in Figure~\ref{fig:dependency} (the dependencies on $(\calA \times \calB, \mu)$ are suppressed, for sake of clarity). It can be shown by putting everything together that $n_0 = \exp\inparen{\poly\inparen{\frac{1}{\delta}, \ \frac{1}{1-\rho}, \ \log\inparen{\frac{1}{\alpha}}}}$.

\fullver{}{\begin{figure}
\begin{center}
\begin{tikzpicture}[scale=0.8, transform shape]

\node[box, fill=blue!20] (delta) at (3.5,3) {\large $(\calA \times \calB, \mu), \delta$};

\node[box] (corr) at (8, 6) {\shortstack{Correlation Bounds \\ (Lemma~\ref{lem:app_corr_bds})}}
edge[in=50, out=-130, latex-]  node[above, midway] {$\gamma = \frac{\delta}{3} \ \ $} (delta);

\node[box] (wits) at (2, 6) {\shortstack{Witsenhausen Rounding \\ (Lemma~\ref{lem:app_witsenhausen_rounding})}}
edge[in=110, out=-90, latex-]  node[left, midway] {$\zeta = \frac{\delta}{3} \ \ $} (delta);

\node[box] (reg) at (6, 0) {\shortstack{Joint Regularity Lemma\\ (Lemma~\ref{lem:joint_reg})}}
edge[in=-50, out=110, latex-] (delta)
edge[in=-90, out=80, latex-] node[right, midway] {$\tau = \tau(\gamma)$} (corr);

\node[box] (smoot) at (0, 1.5) {\shortstack{Smoothing \\ (Lemma~\ref{lem:smoothing})}}
edge[in=180, out=90, latex-]  node[above, midway] {$\lambda = \frac{\delta}{3}\quad \quad $} (delta)
edge[in=150, out=0, latex-]  node[above, midway] {$\eta = \eta(\tau)$} (reg)
edge[in=180, out=-90, -latex]  node[above, midway] {$\ \ \ d = d(\lambda, \eta)$} (reg);

\node[box, fill=red!20] (ans) at (-4,0) {\large $n_0 = h+w$} 
edge[in=210, out=0, latex-]  node[below, midway] {$h = h(d, \tau)\quad \quad$} (reg)
edge[in=180, out=90, latex-]  node[right, midway] {$w = w(\zeta)$} (wits);

\end{tikzpicture}
\caption{Dependency of parameters in the proof of Theorem~\ref{thm:main-lemma}}
\label{fig:dependency}
\end{center}
\end{figure}
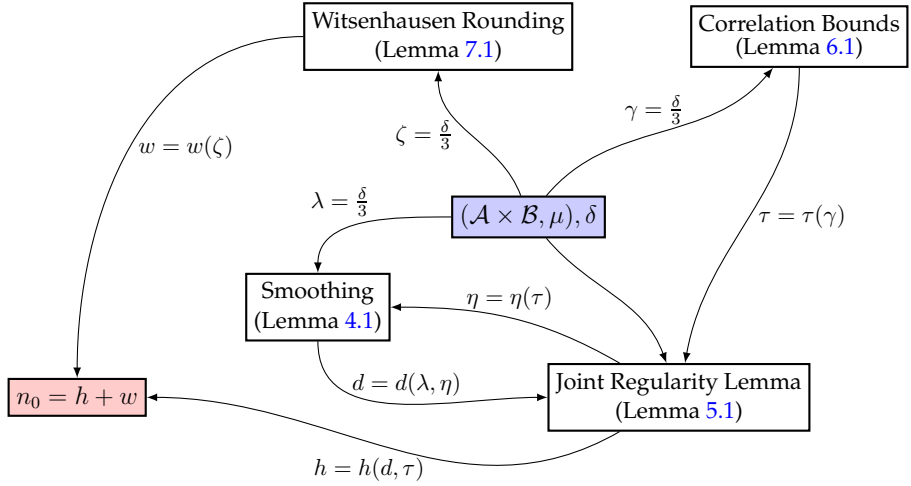}

\noindent Once we have all the parameters set, we are now able to apply them to any pair of functions $f : \calA^n \to [-1,1]$ and $g : \calB^n \to [-1,1]$. In particular, we proceed as described in the overview (Section~\ref{sec:overview}).

\begin{itemize}
\item [{\bf Step 1:}] We apply Lemma~\ref{lem:smoothing} to functions $f$ and $g$ with parameters $\lambda$ and $\eta$ as obtained above. This gives us a degree $d$ and functions $f_1$ and $g_1$, such that, $\sum_{|\bsigma| > d} \what{f}(\bsigma)^2 < \eta$ and $\sum_{|\bsigma| > d} \what{g}(\bsigma)^2 < \eta$.

\item [{\bf Step 2:}] We apply the joint regularity lemma (Lemma~\ref{lem:joint_reg}) on functions $f_1$ and $g_1$, with parameters $d$ and $\tau$ as obtained above (note that, the conditions involving $\eta$ are satisfied, because we chose precisely this $\eta$ to be given to the Smoothing lemma). This gives us a subset $H \subseteq [n]$ such that $|H| \le h$ and with high probability over restrictions to this subset $H$, the restricted versions of both $f_1$ and $g_1$ have all individual influences to be at most $\tau$.

\item [{\bf Step 3:}] We apply the correlation bounds result (Lemma~\ref{lem:app_corr_bds}) to functions $f_1$ and $g_1$ (note that all the conditions involving $\tau$ are satisfied already because we chose precisely this $\tau$ to be given to the joint regularity lemma).

This gives us functions $f_2 : \calA^h \times \bbR \to [-1,1]$ and $g_2 : \calB^h \times \bbR \to [-1,1]$ of the form: there exist functions $f_2' : \calA^h \to \bbR$ and $g_2' : \calB^h \to \bbR$ such that,

$$f_2(\bx, r) = \infork{1 & r \ge f_2'(\bx) \\ -1 & r < f_2'(\bx)} \quad \sAND \quad g_2(\by, r) = \infork{1 & r \ge g_2'(\by) \\ -1 & r < g_2'(\by)}$$

\item [{\bf Step 4:}] Functions $f_2$ and $g_2$ are exactly in the form for which Lemma~\ref{lem:app_witsenhausen_rounding} is applicable, which we use with parameters $\zeta$ as obtained above. This gives us functions $f_3 : \calA^{h+w} \to [-1,1]$ and $g_3 : \calB^{h+w} \to [-1,1]$.
\end{itemize}

\noindent Note that, $\Ex f = \Ex f_1 = \Ex f_2$ and $\inabs{\Ex f_3 - \Ex f_2} \le \zeta = \delta/3$ and similarly $\Ex g = \Ex g_1 = \Ex g_2$ and $\inabs{\Ex g_3 - \Ex g_2} \le \zeta = \delta/3$. Moreover, we have from Lemmas~\ref{lem:app_witsenhausen_rounding}, \ref{lem:app_corr_bds} and \ref{lem:smoothing} that,
\begin{eqnarray*}
\Ex_{(\bx, \by) \sim \mu^{\otimes (h+w)}} \insquare{f_3(\bx) \cdot g_3(\by)}
&\ge& \Ex_{\substack{(\bx, \by) \sim \mu^{\otimes h}\\ (r_A, r_B) \sim \calG(\rho)}} \insquare{f_2(\bx) \cdot g_2(\by)} - \zeta\\
&\ge& \Ex_{\substack{(\bx, \by) \sim \mu^{\otimes n}}} \insquare{f_1(\bx) \cdot g_1(\by)} - \gamma - \zeta\\
&\ge& \Ex_{\substack{(\bx, \by) \sim \mu^{\otimes n}}} \insquare{f(\bx) \cdot g(\by)} - \lambda - \gamma - \zeta\\
&=& \Ex_{\substack{(\bx, \by) \sim \mu^{\otimes n}}} \insquare{f(\bx) \cdot g(\by)} - \delta
\end{eqnarray*}

\noindent Hence, taking $\wtilde{f} = f_3$ and $\wtilde{g} = g_3$, proves Theorem~\ref{thm:main-lemma}.

\end{proofof}
\fi

\ifnum\stocapx=2 \else
\noapx{
\subsection{Generalizing to arbitrary binary targets}\label{sec:proof_main_full}
}{
\section{Generalizing to arbitrary binary targets}\label{apx:proof_main_full}
}

We now give a proof sketch of Theorem~\ref{thm:main_full}. Even though this is not a black-box application of  Theorem~\ref{thm:main}, it follows the same proof steps. We highlight the main differences in this section.\\

\noindent We consider two cases, (I) $\Ex[UV] \ge \Ex[U] \cdot \Ex[V]$ and (II) $\Ex[UV] \le \Ex[U] \cdot \Ex[V]$.

\subsubsection*{Case (I) : $\quad \Ex[UV] \ge \Ex[U] \cdot \Ex[V]$}

We need to modify the $\ABMIP$ problem \ref{prob:ABMIP}, by replacing the conditions on $\inabs{\Ex[f(\bx)]}$ by $\inabs{\Ex[f(\bx)] - \Ex[U]}$, and similarly replacing the conditions on $\inabs{\Ex[g(\by)]}$ by $\inabs{\Ex[g(\by)] - \Ex[V]}$ and replacing $\rho$ by $\Ex[UV]$. The reduction between $\ANIS$ and $\ABMIP$ works in almost exactly the same way.

It is easy to see that using the main technical theorem~\ref{thm:main-lemma} and following the same proof as of Theorem~\ref{thm:main}, we also get decidability for $\ANIS((\calA \times \calB, \mu), (\calU \times \calV, \nu), \delta)$.

\subsubsection*{Case (II) : $\quad \Ex[UV] \le \Ex[U] \cdot \Ex[V]$}

As in the previous case, we need to modify the $\ABMIP$ problem \ref{prob:ABMIP}, by replacing the conditions on $\inabs{\Ex[f(\bx)]}$ by $\inabs{\Ex[f(\bx)] - \Ex[U]}$, and similarly replacing the conditions on $\inabs{\Ex[g(\by)]}$ by $\inabs{\Ex[g(\by)] - \Ex[V]}$. The condition on $\Ex[f(\bx) g(\by)]$ will however change as, $\Ex[f(\bx) g(\by)] \le \Ex[UV] + \delta$ in case (i) vs. $\Ex[f(\bx) g(\by)] \ge \Ex[UV] + 4\delta$ in case (ii). The reduction between $\ANIS$ and $\ABMIP$ works in almost exactly the same way.

The main difference in this case however is that, we want each of the steps to `increase' correlation by a small amount as opposed to `decrease' the correlation. In particular, the main condition in Theorem~\ref{thm:main-lemma} will change as follows,

$$\Ex_{(\bx, \by) \sim \mu^{\otimes n_0}} \insquare{\wtilde{f}(\bx) \cdot \wtilde{g}(\by)} \quad \le \quad \Ex_{(\bx, \by) \sim \mu^{\otimes n}} \insquare{f(\bx) \cdot g(\by)} + \delta$$

The steps of Smoothing (Lemma~\ref{lem:smoothing}) and Joint Regularity (Lemma~\ref{lem:joint_reg}) and Witsenhausen rounding (Lemma~\ref{lem:app_witsenhausen_rounding}) don't need any modification as they approximately preserve the correlation in both directions. However, in the step of applying Correlation Bounds (Lemma~\ref{lem:app_corr_bds}), we need to use the lower bound of $\underline{\Gamma}_{\rho}(\cdot, \cdot)$ instead of the upper bound of $\overline{\Gamma}_{\rho}(\cdot, \cdot)$. In particular, the lemma will change slightly resulting in functions such that,
$$\expect_{\substack{(\bx,\by)\sim \mu^{\otimes h} \\ (r_A, r_B) \sim \calG(\rho)}} \insquare{f_2(\bx, r_A) \cdot g_2(\bx, r_B)} \quad \le \quad \expect_{(\bx,\by) \sim \mu^{\otimes n}} \insquare{f_1(\bx) \cdot g_1(\by)} + \gamma$$
Additionally, $f_2$ and $g_2$ will have the following special form: there exist functions $f_2' : \calA^h \to \bbR$ and $g_2' : \calB^h \to \bbR$ such that,

$$f_2(\bx, r) = \infork{1 & r \ge f_2'(\bx) \\ -1 & r < f_2'(\bx)} \quad \sAND \quad g_2(\by, r) = \infork{-1 & r \ge g_2'(\by) \\ 1 & r < g_2'(\by)}$$

This structural difference in $f_2$ and $g_2$ affects the Witsenhausen Rounding step (Lemma~\ref{lem:app_witsenhausen_rounding}) slightly, but it is easy to see that the same proof strategy works.

It is also easy to see that using this modified main theorem (analog of Theorem~\ref{thm:main-lemma}) and following the same proof steps as of Theorem~\ref{thm:main}, we also get decidability for $\ANIS((\calA \times \calB, \mu), (\calU \times \calV, \nu), \delta)$ in this case.
\fi
\section{Open Questions} \label{sec:discussion}

In this work, we proved computable bounds on the non-interactive simulation of any $2 \times 2$ distribution. We now conclude with some interesting open questions.

The running time of our algorithm is at least doubly-exponential in the input size\footnote{For constant values of $\delta$ and $\rho$, the running time is doubly-exponential in $2^{\poly(\log{m})}$. Here we think of the input as a bipartite graph with $m$ edges. This follows because $\alpha \sim 1/m$.}. It would be very interesting to understand the computational complexity of the non-interactive simulation problem. We point out that the question of generating the best $\DSBS$ can be thought of as a tensored version of the following ``{\sc Min-Bipartite-Bisection}'' problem: We are given a weighted bipartite graph $G = (L \cup R,E)$, and we wish to find a subset $S$ of $L \cup R$ such that $S \cap L$ roughly contains half the vertices of $L$, and $S \cap R$ roughly contains half the vertices of $R$, while minimizing the total weight of edges crossing the cut $(S,\overline{S})$. While it follows from \cite{raghavendra2012reductions} that {\sc Min-Bipartite-Bisection} is hard to approximate, the same is not necessarily true about its tensored version.

Another interesting open question is to generalize our decidability results to larger alphabets, which seems to require new technical ideas. Indeed, our proof of Theorems~\ref{thm:main_thm} and \ref{thm:main_thm_full} relied on the fact that for $(X,Y)$ being correlated random Gaussians, the maximum possible agreement of any pair of $\pm 1$-valued functions $f(X)$ and $g(Y)$ is at most that of two appropriate dictator threshold functions $F(X_1)$ and $G(Y_1)$ where $F$ only depends on the marginals of $f$ (i.e., the probability that $f$ takes the values $-1$ and $+1$), and similarly $G$ only depends on the marginals of $g$. The analogous statement for the ternary case is not true. Namely, let $f(X), g(Y) \in \{0,1,2\}$, and assume that the marginals of $f$ are $(1/3,1/3,1/3)$. Then, depending on whether the marginals of $g$ are $(1/3,1/3,1/3)$ or $(1/2,1/2,0)$, the largest agreement of $(f,g)$ would be achieved by very different functions $f$, assuming the ``Standard Simplex Conjecture'' (see \cite{isaksson2012maximally} and Proposition $2.10$ of \cite{HeilmanMN15}). This example shows that in the ternary case Alice cannot replace $f$ by a function of a very small number of copies without taking the marginals of Bob's function $g$ into account, and this is a major obstacle in generalizing our approach for proving Theorems~\ref{thm:main_thm} and \ref{thm:main_thm_full} to larger alphabets.


Yet another interesting open question is to generalize our computability results to more than two players, which also seems to require new technical ideas.

Finally, it will be very interesting to see if these techniques could apply to other `tensored' problems. The most relevant problems seem to be (i) deciding a quantum version of our problem, namely that of local state transformation of quantum entanglement \cite{Beigi_QuantumMaximalCorrelation, DelgoshaBeigi_QuantumHypercontractivity} and (ii) approximately computing the entangled value of a 2-prover 1-round game (\cite{KKMTV_HardnessApprox_EntangledGames}; also see the open problem \cite{openQIwiki_all_bell_inequalities}).

\section{Acknowledgments}
We thank Sudeep Kamath for explaining to us the state-of-the-art results in the information theory community, with regards to the problem of non-interactive simulation. We also thank Boaz Barak, Mohammad Bavarian and Mohsen Ghaffari for helpful discussions. We thank Matthew Coudron and Robin Kothari for pointing us to the related problems in the quantum literature.

%
%
%

\newcommand{\etalchar}[1]{$^{#1}$}

\end{document}